\tikzstyle{arrow} = [thin,->,>=stealth]
\numberwithin{equation}{section}
\newtheorem{theorem}{Theorem}[section]
\newtheorem{proposition}[theorem]{Proposition}
\newtheorem{lemma}[theorem]{Lemma}
\newtheorem{assump}[theorem]{Assumption}
\newtheorem{remark}[theorem]{Remark}
\newtheorem{defi}[theorem]{Definition}
\newcommand{\e}{\mathbb{E}}
\newcommand{\reals}{\mathbb{R}}
\newcommand{\ind}{\mathbf{1}}
\newcommand{\diff}{\,\mathrm{d}}
\newcommand{\Cr}{\mathcal{C}} 
\DeclareMathOperator*{\esssup}{ess\,sup}
\newcommand{\prob}[1]{\mathbb{#1}}
\newcommand{\vix}{\mathrm{VIX}}
\newcommand{\FmN}{F^{(m,N)}}
\newcommand{\XmN}{X^{(m,N)}}
\newcommand{\Vm}{V^{(m)}}
\title[VIX-Linked fee incentives in variable annuities]{Analysis of VIX-linked fee incentives in variable annuities via continuous-time Markov chain approximation}
\author{Anne Mackay$^{\ast}$$^\dag$, Marie-Claude Vachon$^\ddag$ and Zhenyu Cui$^\S$}
\address{$\dag$Universit\'e de Sherbooke, Sherbrooke, Qu\'ebec, Canada\\
$\ddag$ Universit\'e du Qu\'ebec \`a Montr\'eal, Montr\'eal,  Qu\'ebec, Canada\\
$\S$Stevens Institute of Thechnology, Hoboken, NJ 07030, USA\\
}
\thanks{$^\ast$Corresponding author. Email: Anne.MacKay@USherbrooke.ca\\}
\date{\today}
\begin{document}
	

	\begin{abstract}
	We consider the pricing of variable annuities (VAs) with general fee structures under popular stochastic volatility models such as Heston, Hull-White, Scott, $\alpha$-Hypergeometric, $3/2$, and $4/2$ models. 
	In particular, we analyze the impact of different VIX-linked fee structures on the optimal surrender strategy of a VA contract with guaranteed minimum maturity benefit (GMMB). 
	Under the assumption that the VA contract can be surrendered before maturity, the pricing of a VA contract corresponds to an optimal stopping problem  with an unbounded, time-dependent, and discontinuous payoff function. 
	We develop efficient algorithms for the pricing of VA contracts using a two-layer continuous-time Markov chain approximation for the fund value process. 
	When the contract is kept until maturity and under a general fee structure, we show that the value of the contract can be approximated by a closed-form matrix expression. 
	We also provide a quick and simple way to determine the value of early surrenders via a recursive algorithm and give an easy procedure to approximate the optimal surrender surface. 
	We show numerically that the optimal surrender strategy is more robust to changes in the volatility of the account value when the fee is linked to the VIX index.\\
	\\
	\textit{Keywords}: Variable annuities, optimal stopping, American options, stochastic volatility, Heston model, continuous-time Markov chain\\
	\\
	\textit{JEL Classifications}: C63, G22
	\end{abstract}
	
	\maketitle
	
	\newpage
	\section{Introduction}
A variable annuity (or segregated fund in Canada) is a hybrid investment vehicle mainly used for retirement planning, which offers a life insurance benefit and a financial guarantee.  
It allows the policyholder to profit from potential gains resulting from an investment in financial markets, while offering a downside protection against losses. 
The real options embedded in these products are comparable to exotic options, with the following differences: the benefit may depend on the policyholder's life (survival or death), they are long-term investments (generally between 5 and 15 years, or more), and the financial guarantee is funded via a periodic fee (typically set as a percentage of the fund value) as opposed to a premium paid upfront. 
Different types of protection riders are offered, such as Guaranteed Minimum Maturity Benefit (GMMB), Guaranteed Minimum Death Benefit (GMDB), and Guaranteed Minimum Withdraw Benefit (GMWB); see \cite{hardy2003investment} or \cite{bauer2008universal} for details\footnote{There are no consensus among practitioners and scientists for these products' name, and thus, different authors may use different terminologies for the same product.}. 
This paper focuses on the GMMB rider, which guarantees the policyholder a minimum amount at the contract's maturity.  
Considering the significant size of the variable annuity market\footnote{In United-States, the total variable annuity sales were \$125 billion in 2021, representing  an increase of 25\% with respect to the total VA sales in 2020. Source: LIMRA Secure Retirement Institute, U.S. Individual Annuities survey \url{https:https://www.limra.com/siteassets/newsroom/fact-tank/sales-data/2021/q4/2012-2021-annuity-sales-updated.pdf}.}, the management of the risk associated with the guarantees embedded in variable annuities is a major concern for insurance companies, see \cite{niittuinpera2020IAA}. 
Indeed, variable annuities guarantees entail significant risks given their long-term structure and sensitivity to various financial and demographic risks as well as to policyholders' behaviour.
For the GMMB rider, this last risk is mostly due to early surrenders.\\
	\\
  Surrender risk refers to the uncertainty facing the insurer when a policyholder has the possibility to terminate her contract before its maturity.
  When she does so, she is entitled to the value accumulated in the variable annuity investment account, subject to a penalty. 
  \cite{kling2014impact} show that unexpected lapses can represent a significant risk for insurers.  
  For this reason, surrender risk has raised special attention in the literature (\cite{niittuinpera2020IAA}, Chapter 18). \cite{bacinello2011variable} provide a universal pricing framework for various riders and considers  different types of surrender behaviours: static, i.e. the contract is never surrendered; or mixed, i.e. the policyholder acts rationally and surrenders the policy as soon as it is optimal from a risk-neutral valuation perspective. 
  Pricing variable annuities under rational surrender behaviour is equivalent to solving an optimal stopping problem and corresponds to the worst-case scenarios for insurers, in the sense that it maximizes the risk-neutral value of the contract from the policyholders' perspective.
   Under this assumption, \cite{grosen1997valuation} study the valuation of interest rate guarantees by assuming that the surrender value will be the same as the benefit value. 
   \cite{milevsky2001real} assume that the policyholder will only get a certain percentage of the fund upon surrender; this hypothesis is more in line with policies seen in practice. 
   Under this assumption, they provide a closed-form analytical solution to the price of a GMDB with surrender in the Black-Scholes framework. 
   In particular, they study the interaction between the surrender charges, the fee rates, and the optimal surrender level. 
   \cite{bernard2014optimal} study a problem similar to the one of \cite{milevsky2001real}, but focus on a GMMB rather than a GMDB. 
   It is well-known that American options with finite maturity generally do not have closed-form solutions. 
   Thus, \cite{bernard2014optimal} used arbitrage-free techniques in the same vein as \cite{kim1990analytic} 
   and \cite{carr1992alternative} in the context of American call and put options to derive an analytical expression for the value of the right to surrender, which is analogous to the early exercise premium in American option terminology. 
   In particular, they study the impact of different risk factors influencing the optimal surrender boundary. 
   In that context, \cite{BernardMackay2015} provide a sufficient condition on surrender charges and fees which eliminate surrender incentives for a financially rational policyholder. 
   Recently, \cite{kang2018optimal} extended the framework of \cite{bernard2014optimal}  by analyzing how the optimal surrender boundary is affected by changes in different risk factors in a stochastic interest rate and volatility model of the Heston-Hull-White type, whereas \cite{alonso2020taxation} incorporate taxes into their analysis. \\
	\\
	Other fee structure designs have also been explored by different authors.
	The idea behind those designs is usually to reduce the insurer's exposure to various risks, such as market volatility and policyholder behaviour.
	\cite{Bernard2014} introduce a state-dependent fee structure, where the fee is only paid when the VA account value is below a certain level, and 
	present an analytical formula for a the value of a contract with GMMB rider (without early surrender) under this type of fee structure. \cite{Mackay2017} study how the fee structure and surrender charges affect the surrender region; they also design surrender charges that eliminate surrender incentives for a financially rational policyholder. 
	Other fee designs have been explored in the literature: 
	 \cite{delong2014pricing} considers a general state-dependent fee structure in a L\'{e}vy process driven market, whereas \cite{bernard2019less} study lapse-and-reentry in variable annuities with time-dependent fee structure. Finally, in a recent study, \cite{wang2021optimal} propose a stochastic control approach to determine the optimal fee structure.\\
	\\
	Recently, fee structures that are tied to the Chicago Board Options Exchange (CBOE) volatility index, the VIX\footnote{See \url{https://www.cboe.com/tradable_products/vix/}}, have gained attention in the literature, see \cite{bernard2016variable}, \cite{cui2017} and \cite{kouritzin2018vix}. 
	As mentioned in \cite{bernard2016variable}, the motivations behind this new fee design comes directly from the industry. 
	In 2010, SunAmerica issued two new variable annuities whose fees were tied to the volatility index\footnote{See Retirement Income Journal available at \url{https://retirementincomejournal.com/article/sunamerica-links-va-rider-fees-to-volatility-index/}. }. 
	More recently, America General Life Insurance Company proposed a fee structure that is linked to the VIX for its Polaris series of variable annuities, see Polaris Platinum O-Series prospectus dated May 3rd, 2021\footnote{See footnote 6 on p.9 of the prospectus (the long-form) available at \url{https://aig.onlineprospectus.net/AIG/867018103A/index.php?open=POLARIS!5fPLATINUM!5fO-SERIES!5fISP.pdf}.}. 
	By allowing the fees to move with the volatility index, the insurer expects to better match the cost of hedging with the premium collected. It also permits policyholders to profit from lower fees in low volatility, rising market, environments. 
	The CBOE published two white papers, \cite{CBOEvixVA1} and \cite{CBOEvixVA2}, illustrating how VIX-linked fee designs can be advantageous to both variable annuity insurers and policyholders. \cite{cui2017} approach the question from a theoretical perspective by analyzing variable annuities without surrender with a fee structure that is tied to the VIX under a Heston-type stochastic volatility model. 
	They provide a closed-form expression for the GMMB rider and observe that such a structure might help realign fee incomes with the value of the financial guarantee. 
	\cite{kouritzin2018vix} extend the works of \cite{cui2017} by applying the VIX fee designs to a GMWB rider and by adding jumps to the underlying index value process. 
	\cite{bernard2016variable} study fees that are tied to the volatility index by using a Gaussian copula.\\
	\\	
	 In this work, we allow fee structures to be as general as possible, i.e. the fee structure may depend on the time, the fund value, and also on the latent variance process, making it possible to link the fee to the VIX. 
	 In the constant fee case, it is well-known that the misalignment between the fees and the value of the financial guarantee creates an incentive for the risk-neutral, rational policyholder to surrender her policy early (see \cite{milevsky2001real}). 
	 Since VIX-linked fee structures allow for better alignment of the guaranteed value with the corresponding hedging cost, we expect that such fee designs can also help reduce the insurers' exposure to surrender risk.
	 For this reason, we numerically study the impact of three different VIX-linked fee designs on the optimal surrender strategy. 
	 To do so, we use a two-layer continuous-time Markov chain (CTMC) approximation for the fund dynamics inspired by \cite{cui2018general}.
	 Two-layer CTMC approximations have recently been used to price derivatives in stochastic volatility models, see \cite{cui2018general}, \cite{cui2019continuous} and \cite{MaCTMCAm}, among others. 
	 The methodology proposed by \cite{cui2018general} for approximating two-dimensional diffusions is not only theoretically appealing and applies to most stochastic volatility models, but also is simple to implement for pricing European and American options.  
	 Their approach is  especially efficient for a short/medium time horizon.
	 However, for derivatives with very long maturities, such as those involved in variable annuities pricing, 
	 the methodology proposed by \cite{cui2018general} stretches the computing resource to unacceptable levels.
	 In this paper, we adapt their method to long-maturity cases.\\
    \\
	The main contributions of this paper are as follows:
	\begin{itemize}
		\item 
		We extend the work of \cite{cui2018general}, done in the context of options pricing, by providing novel efficient algorithms to value options with very long maturities, such as variable annuities, under general stochastic volatility models. 
		\item 
		We propose a methodology to approximate the optimal surrender surface for a VA contract with GMMB rider when the underlying index follows a two-dimensional diffusion process.
		\item 
		We derive a closed-form analytical expression for the VIX index when the variance process follows a continuous-time Markov chain. The latter may also be used to price VIX derivatives.
		\item 
		We analyze the optimal surrender strategy for a VA contract with GMMB rider under the assumption that the guarantee fees are linked to the $\vix$ index. In the literature, such an analysis of surrender incentives is usually performed using a constant fee structure, or in a Black-Scholes framework when the fees are state-dependent.
	\end{itemize} 
	The remainder of the paper is organized as follows. In Section \ref{sectFinancialSett}, we introduce the market model, the VA contract, and the optimal stopping problem involved in the pricing of variable annuities with surrender. A brief introduction to CTMC approximations for a two-dimensional diffusion process is provided in Section \ref{sectionCTMCapprox}. In Section \ref{subsecVApricingCTMC}, we apply CTMC approximations to VA contract pricing and provide new efficient algorithms. Section \ref{sectNumEx} provides the numerical results and discusses how VIX-linked fees affect surrender incentives. Section \ref{conc} concludes the paper. 

	\section{Financial Setting}\label{sectFinancialSett}
	\subsection{Market Model}\label{sectionMarketModel}
	We consider a filtered probability space $(\Omega,\,\mathcal{F},\,\mathbb{F},\prob{Q})$, where $\mathbb F$ is a complete and right-continuous filtration and where 
	$\prob{Q}$ denotes the pricing measure for our market, see Remark \ref{rmkMGmeasureExistence}.
	We consider a risky asset, whose price can be described by the two-dimensional process $(S,V)=\{(S_t,V_t)\}_{t\geq 0}$ satisfying	\begin{equation}
		\begin{array}{ll}
			\diff S_t&=r S_t \diff t +\sigma_S(V_t) S_t \diff W^{(1)}_t,\\
			\diff V_t&=\mu_V(V_t)\diff t+\sigma_V(V_t) \diff W^{(2)}_t,\label{eqEDS_S_Q}
		\end{array}
	\end{equation}
	with $S_0=s_0\in\reals_+$ and $V_0=v_0\in\mathcal{S}_V$ where $\mathcal{S}_V$ denotes the state-space of $V$, with $r>0$ denoting the risk-free rate and with $W=\{(W^{(1)}_t, W^{(2)}_t)\}_{t\geq0}$ is a two-dimensional correlated Brownian motion with cross-variation $[W^{(1)},\,W^{(2)}]_t=\rho t$, where 
	$\rho\in[-1, \, 1]$. 
	For simplicity, $V$ will be referred to as the variance process. We assume that $\mu_V:\mathcal{S}_V\mapsto\reals$ is continuous and that $\sigma_S,\,\sigma_V:\mathcal{S}_V\mapsto\reals_+$ are continuously differentiable functions  with $\sigma_S(\cdot)> 0$, $\sigma_V(\cdot)> 0$ on the state-space $\mathcal{S}_V$ of $V$. 
	Further, we suppose that $\mu_V$, $\sigma_V$ and $\sigma_S$ are defined such that \eqref{eqEDS_S_Q} has a unique-in-law weak solution. 
	
	\begin{remark}\label{rmkMGmeasureExistence}In \eqref{eqEDS_S_Q}, we start directly with the dynamics under the risk-neutral measure,  
	hence the form of the market price of volatility risk is not necessary in our setting. However, as pointed out by \cite{sin1998complications}, \cite{jourdain2004loss} and \cite{cui2013martingale}, a risk-neutral measure may not always exist under stochastic volatility models; additional conditions must be added to the model parameters in order for $\{e^{-rt}S_t\}_{t\geq 0}$ to be a true martingale. 
			A list of common SV models are reported in Table \ref{tblExSVmodels} below, along with conditions for the martingale property to hold under the risk-neutral measure.
	\end{remark}

	\begin{table}[h!]
		\centering
	   \resizebox{\columnwidth}{!}{
		\begin{tabular}{c|c|c| c}
			\hline
			\textbf{Model Name} &\textbf{Dynamics} & \textbf{Parameters}& \textbf{Cond. for}\\
			                    &                   &                   &  \textbf{Martingale Measure}\\
		\hline
		\hline
		Heston (1993) & $\diff S_t=rS_t\diff t+\sqrt{V_t}S_t\diff W^{(1)}_t$ & $S_0>0$, &No additional cond. \\
		\cite{heston1993} & $\diff V_t=\kappa(\theta -V_t)\diff t+\sigma\sqrt{V_t}\diff W^{(2)}_t$  & $\kappa,\theta,\sigma, V_0>0$ & \cite{cui2013martingale}, Proposition 2.5.1\\
		\hline
		$3/2$ (1997) & $\diff S_t=rS_t\diff t+S_t/\sqrt{V_t}\diff W^{(1)}_t$ & $S_0>0$, &$\rho\leq 0$, \\
		\cite{Heston1997ASN} & $\diff V_t=\kappa(\theta -V_t)\diff t-\sigma\sqrt{V_t}\diff W^{(2)}_t$  & $\kappa,\theta,\sigma, V_0>0$ &\cite{cui2013martingale}, Proposition 2.5.4\tablefootnote{The condition stated in \cite{cui2013martingale}, Proposition 2.5.4 is automatically satisfied by requiring the correlation parameter $\rho$ to be non-positive, as pointed out by \cite{drimus2012options}.\label{footnoteMGmeasure32}} \\
		            &     & with $\kappa\theta\geq\sigma^2/2$ & \\
		\hline
		$4/2$ (2017) & $\diff S_t=rS_t\diff t+S_t\left[a\sqrt{V_t}+b/\sqrt{V_t}\right]\diff W^{(1)}_t$ &  $a,b\in\reals$, $S_0>0$, &$ \sigma^2\leq 2\kappa\theta+\min(0,2\rho\sigma b)$\\
		\cite{grasselli20174} & $\diff V_t=\kappa(\theta -V_t)\diff t+\sigma\sqrt{V_t}\diff W^{(2)}_t$  & $\kappa,\theta,\sigma, V_0>0$, & \cite{grasselli20174}, Section 2.2
		\\
		 & & with $\kappa\theta\geq\sigma^2/2$ \tablefootnote{If $b\neq 0$, then we must also impose $\kappa\theta\geq \sigma^2/2$ for the model to be well-defined.} & 
		 \\
		\hline
		Hull-White (1987) & $\diff S_t=rS_t\diff t+\sqrt{V_t} S_t\diff W^{(1)}_t$ & $S_0>0$, & $\rho\leq 0$,\\
		\cite{hull1987pricing} & $\diff V_t=\alpha V_t\diff t+\beta V_t\diff W^{(2)}_t$  & $\alpha$, $\beta,V_0>0$ &  \cite{jourdain2004loss}, Theorem 1 or \\
		& & & \cite{cui2013martingale}, Proposition 2.5.10\\
		\hline
	    Scott (1987) & $\diff S_t=rS_t\diff t+e^{V_t} S_t\diff W^{(1)}_t$ & $S_0>0$, & $\rho\leq 0$, \\
		\cite{scott1987option}, p.426 & $\diff V_t=\kappa(\theta -V_t)\diff t+\sigma\diff W^{(2)}_t$  & $\kappa,\theta,V_0 \in\reals$, $\sigma>0$ & \cite{jourdain2004loss}, Theorem 1 \\
		\hline
	     $\alpha$-Hypergeometric (2016) & $\diff S_t=rS_t+e^{V_t} S_t\diff W^{(1)}_t$ & $S_0>0$, &  If $\alpha\geq 2$ or $\alpha<2$ and either\\
		\cite{da2016alpha} & $\diff V_t=(a-b e^{\alpha V_t})\diff t+\sigma\diff W^{(2)}_t$  & $\alpha, b,\sigma>0$, $a, V_0\in\reals$ & $\rho\leq 0$, $\alpha >1$ or $\alpha =1$ and $b\geq \rho\sigma$\\
		& & & \cite{da2016alpha}, Proposition 6\\
		\hline
		\end{tabular}
	     }
		\caption{Examples of stochastic volatility models}
		\label{tblExSVmodels}
    	\end{table}

	\subsection{Variable Annuity Contract}\label{subsectionVA}
	A policyholder enters a variable annuity contract by depositing an initial premium $F_0$ into a sub-account, which is then invested in a fund tracking the financial market.
	For simplicity, we will assume that the sub-account is invested in the risky asset $S$.
	The policyholder often has the right to surrender the contract, or lapse, prior to maturity.
	This additional flexibility is often called surrender option (or surrender right)  
	 in the literature and significantly complicates the valuation of variable annuity contracts. 
	 Below we discuss the risk-neutral valuation approach for a variable annuity, under both the assumption that the policyholder makes use of her surrender right or does not.\\
	\\
	To do so, we consider a finite time horizon $T\in\reals_+$ and let $F=\left\lbrace F_t \right\rbrace_{0\leq t\leq T}$ denote the variable annuity fund (or sub-account) process. Moreover, we let $C:[0,T]\times\reals_+\times\mathcal{S}_V\rightarrow \reals_+$ denote the fee function and let the continuously compounded fee rate process $\{c_t\}_{0\leq t\leq T}$ be defined as
	\begin{equation}\label{eqFeeFct}
	c_t:=C(t, F_t, V_t),\quad 0\leq t\leq T,
    \end{equation}
	where $C$ is assumed to be continuous or bounded and such that \eqref{eqEDS_F_Q} has a unique-in-law weak solution. 
	We allow the fee structure to be as general as possible. This setting includes, among others, state-dependent fee structures (see \cite{Bernard2014}, \cite{delong2014pricing}, \cite{Mackay2017}), VIX-linked fee structures (see \cite{cui2017}, \cite{kouritzin2018vix}, \cite{bernard2016variable}), and time-dependent fee structures (see \cite{bernard2019less}).\\
	\\
	We assume that the fees are paid continuously out of the fund at a rate $c_t$, so that the fund value is given by
	\begin{eqnarray}
		F_t&=&S_te^{-\int_0^t c_u\diff u}, \quad 0\leq t\leq T, \label{eqF}
	\end{eqnarray}
	with $F_0=S_0$. Using Itô's lemma, the dynamics of $F$ under the risk-neutral measure are 
	\begin{equation}
		\begin{array}{ll}
			\diff F_t  &= (r-c_t) F_t \diff t+\sigma_S(V_t) F_t \diff W_t^{(1)},\\
			\diff V_t&=\mu_V(V_t)\diff t+\sigma_V(V_t) \diff W^{(2)}_t.\label{eqEDS_F_Q}
		\end{array}
	\end{equation}
	Throughout this paper, $\e_{t,x,y}[\cdot]$ is short-hand notation for $\e[\cdot|F_t=x,V_t=y]$ and $\e_t[\cdot]$ for $\e[\cdot|\mathcal{F}_t]$, with $x\in\reals_+$, $y\in\mathcal{S}_V$ and $t\in[0,T]$. We also use $\e_{x,y}[\cdot]$ to denote $\e_{0,x,y}[\cdot]$.\\
	\\
	We focus on a variable annuity with a guaranteed minimum maturity benefit (GMMB) whose payoff at maturity $T$ is $\max(G, F_T)$, where $G\in\reals_+$ is a predetermined guaranteed amount. Given $(F_t,V_t)=(x,y)$, the time-$t$ risk-neutral value of the variable annuity assuming that it will not be surrendered early is
	\begin{equation}\label{eqVAnoSurrender}
	v_e(t,x,y):=\e_{t,x,y}\left[e^{-r(T-t)}\max(G,F_T)\right].
	\end{equation}
	On early surrender, the policyholder receives the value of the VA sub-account, reduced by a penalty which, in our setting, can depend on time and on the value of the variance process $V$.
	When no surrender occurs, the maturity benefit is paid at $T$.

	More formally, the VA reward (or gain) function $\varphi:[0,\,T]\times \reals_+\times\mathcal{S}_V\rightarrow \reals_+$ is defined by
	\begin{equation}
		\varphi(t,x,y)=\begin{cases}
			g(t,y)\,x & \text{if $t<T$,}\\
			\max(G,\,x) & \text{if $t=T$},
		\end{cases}\label{eqRewardFct}
	\end{equation}
	where $g:[0,\,T]\times\mathcal{S}_V\rightarrow [0, 1]$ is continuous, non-decreasing in time and satisfies ${\lim_{t \rightarrow T^-} g(t,y) = 1}$ $\forall y \in \mathbb \mathcal{S}_V$. In practice, we usually consider the surrender charge (as a percentage of the account value), $1-g(\cdot,\cdot)$. 
	 A common form for the surrender charge function in the literature is $g(t,y)= e^{-k(T-t)}$ for some constant $k\geq0$,  see 
	for example \cite{Mackay2017}, \cite{shen2016valuation}, \cite{bacinello2019variable} and \cite{kang2018optimal}. It is the first time, to the best of our knowledge, that variance-dependent surrender charges are considered. 
	
	\begin{remark}\label{rmkVarphiDiscoun}
	For $x<G$, the function $t\mapsto\varphi(t,x,y)$ is discontinuous at $T$ since 
	\begin{align*}
	    \lim_{t \rightarrow T^-} \varphi(t,x,y) = g(t,y)x \leq x < G = \varphi(T,x,y).
	\end{align*}
	\end{remark}
	Under the assumption that the policyholder maximizes the risk-neutral value of her VA contract, the time-$t$ value of the variable annuity policy is given by
	\begin{eqnarray}
		v(t,x,y)= \sup_{\tau \in \mathcal{T}_{t,\, T}} \e_{t,x,y}[e^{-r(\tau -t)}\varphi(\tau, F_{\tau},  V_{\tau})],\label{eqAmOptVA}
	\end{eqnarray}
	where $\mathcal{T}_{t,\, T}$ is the (admissible) set of all stopping times taking values in the interval $[t,\,T]$.

	When the fee function is time-independent, the fund process is time-homogeneous, and \eqref{eqAmOptVA} is equivalent to
	\begin{eqnarray}
		v(t,x,y)= \sup_{\tau \in \mathcal{T}_{0,\, T-t}} \e_{x,y}[e^{-r\tau}\varphi(\tau+t, F_{\tau}, V_{\tau})].\label{eqAmOptVA2}
	\end{eqnarray}
	\newline
	Similarly to the early exercise premium in the American option literature, the value of the right to surrender, denoted by $e:[0,T]\times\reals_+\times \mathcal{S}_V\rightarrow\reals_+$,
	is defined by 
	$$e(t,x,y):=v(t,x,y)- v_e(t,x,y).$$
	\vspace{0.2cm}
	
	\section{Continuous-Time Markov Chain Approximation}\label{sectionCTMCapprox}
   The CTMC framework outlined in this section has been proposed by \cite{cui2018general} for exotic option pricing under stochastic local volatility models. 
   The general idea is to approximate the two-dimensional stock price process by a two dimensional continuous-time Markov chain.
   This is done by first approximating the variance process by a CTMC, and then by replacing the variance process by its CTMC approximation in the underlying price process. 
   The resulting regime-switching diffusion process is then further approximated by a CTMC, yielding a two-dimensional CTMC process which converges weakly to the original two-dimensional diffusion process, providing that the generator of the CTMC is chosen correctly.


   First, we shall briefly recall the basics of continuous-time Markov chains, following sections 6.9 and 6.10 of \cite{grimmet2001probability}. 
   The  stochastic process $\tilde{X}=\{\tilde{X}_t\}_{t\geq 0}$ taking values on some discrete state-space $\mathcal{S}_{\tilde{X}}$ is called a continuous-time Markov chain if it satisfies the following property (a.k.a. \textit{Markov property}):
   		\begin{equation*}
   			\prob{P}(\tilde{X}_{t_n}=\tilde{x}_j|\tilde{X}_{t_1}=\tilde{x}_{i_1},\,\ldots,\,\tilde{X}_{t_{n-1}}=\tilde{x}_{i_{n-1}})=\prob{P}(\tilde{X}_{t_n}=\tilde{x}_j|\tilde{X}_{t_{n-1}}=\tilde{x}_{i_{n-1}})	
   		\end{equation*}
   		for all $\tilde{x}_j,\,\tilde{x}_{i_1},\,\ldots,\,\tilde{x}_{i_{n-1}}\in\mathcal{S}_{\tilde{X}}$ and any time sequence $t_1<t_2<\ldots<t_n$.\\
   \\
   The \textit{transition probability} $p_{ij}(s,\,t)$, from state $\tilde{x}_i\in \mathcal{S}_{\tilde{X}}$ at time $s$ to state $\tilde{x}_j\in \mathcal{S}_{{\tilde{X}}}$ at time $t$, is defined by
   	$$p_{ij}(s,\,t)=\prob{P}({\tilde{X}}_t=\tilde{x}_j|{\tilde{X}}_s=\tilde{x}_i),\quad s\leq t.$$
   	The chain is said to be \textit{homogeneous} if $p_{ij}(s,\, t)=p_{ij}(0,\,t-s)$ for any $i,j,\, s\leq t$. In that case, we use $p_{ij}(t-s)$ to denote $p_{ij}(s,\, t).$\\
   	\\
   	Now assume that $\tilde{X}$ is time-homogeneous and $\mathcal{S}_{\tilde{X}}$ is finite.\\
   	\\
   	 The family $\{\mathbf{P}_t:=[p_{ij}(t)]_{|\mathcal{S}_{\tilde{X}}|\times|\mathcal{S}_{\tilde{X}}|}\}_{t\geq 0}$ of \textit{transition probability matrices} is referred as the \textit{transition semigroup} of the Markov chain.\\
   	 \\
   	 For an infinitesimal period of length $h>0$, it can be shown that there exist constants $\{q_{ij}\}$, also called \textit{transition rates}, such that 
   	 \begin{equation}
   	 	p_{ij}(h) =\left\{\begin{array}{ll}
   	 		q_{ij}h+c(h)& \textrm{ if } i\neq j\\
   	 		1+q_{ij}h+c(h) & \textrm{ if } i=j,
   	 	\end{array}\right.\label{eq_pijh}
   	 \end{equation}
   	 for $1\leq i,j\leq |\mathcal{S}_{\tilde{X}}|$, where $c$ is a function satisfying $\lim_{h\rightarrow 0}\frac{c(h)}{h}=0$.\\
   	 \\
   	 From the above, we can conclude that the transition rates must satisfy
   	 \begin{equation}
   	 	q_{ij}\geq 0, \textrm{ if } i\neq j\quad\textrm{and}\quad q_{ij}\leq 0, \textrm{ if } i=j,\label{eqQproperty1}
   	 \end{equation}
   	 and 
   	 \begin{equation}
   	 	\sum_{j=1}^{m}  q_{ij}=0,\quad i=1,2,\ldots,m.\label{eqQproperty2}
   	 \end{equation}

   	 The matrix $\mathbf{Q}:=[q_{ij}]_{|\mathcal{S}_{\tilde{X}}|\times|\mathcal{S}_{\tilde{X}}|}$ is called the \textit{generator} of ${\tilde{X}}$. 
   	 Under some technical conditions\footnote{More precisely, the semigroup $\{P_t\}$ must be standard---that is, $p_{ii}(t)\rightarrow 1$ and $p_{ij}(t)\rightarrow 0$ as $t\downarrow 0$---and uniform---$\sup_i -q_{ii}<\infty$, see \cite{grimmet2001probability}, Definition 6.9.4, Theorem 6.10.1 and 6.10.6 for details.}, it can be shown that the transition probability matrix $\mathbf{P}_t$ has the following matrix exponential representation:
   	 \begin{equation}
   	 	\mathbf{P}_t=\exp(\mathbf{Q} t)=\sum_{k=0}^\infty \frac{(\mathbf{Q} t)^k}{k!}.\label{Chap2EqP}
   	 \end{equation}
   	 
 \begin{assump}\label{assumpFeeFct}
 	The fee function defined in \eqref{eqFeeFct} is time-independent and denoted by $c$. That is, $C(t,x,y)=c(x,y)$ for all $0\leq t\leq T$. Moreover, we only consider functions $c$ that are continuous or bounded.
 \end{assump}
 Henceforth, we consider that Assumption \ref{assumpFeeFct} holds.
 That is, we assume that the fee function is time-independent so that the fund process is time-homogeneous. For the CTMC approximation of diffusion processes with time-dependent coefficients, see \cite{ding2021markov}.

   \subsection{Approximation of the Variance Process $\{V_t\}_{t\geq0}$}\label{subSectCTMCvariance}
   We construct a CTMC $\{V_t^{(m)}\}_{t\geq 0}$ taking values on a finite state-space $\mathcal{S}_V^{(m)}:=\{v_1,v_2,\dots\, v_m\}$, with $v_i\in\mathcal{S}_V$ and $m\in\mathbb{N}$, that converges weakly to $\{V_t\}_{t\geq 0}$ as $m\rightarrow\infty$. Weak convergence of $V^{(m)}$ to $V$, is denoted by $V^{(m)}\Rightarrow V$. 
   
   Several approaches are available in the literature to construct the finite state-space $\mathcal{S}_V^{(m)}$, from simple uniform to non-uniform grids (see \cite{tavellapricing} \cite{mijatovic2013continuously}, \cite{lo2014improved} for examples of non-uniform grids). 
   Most grid constructions require the specification of the two boundary states $v_1$, $v_m\in\mathcal{S}_V$. The lower (resp. upper) boundary $v_1$ (resp. $v_m$) must be sufficiently small (resp. large) to ensure that the grid covers a reasonably large portion of the state-space of the original process. This can be done naively by setting $v_1=\alpha V_0$ and $v_m=\beta V_0$ for a small (resp. large) constant $\alpha$ (resp. $\beta$), or by using the first two moments of the variance process to select the boundary values (see \cite{kirkby2017unified},\cite{cui2017equity}, \cite{cui2017general} and \cite{leitao2019ctmc} for details.)
   The specific grid selected for the numerical analysis performed in this paper is discussed in more details in Section \ref{sectNumEx}.

Once the state-space is chosen, the approximating CTMC $\{V_t^{(m)}\}_{t\geq 0}$ is defined via its generator $\mathbf{Q}^{(m)}=[q_{ij}]_{m\times m}$.
This generator is constructed so that the first two moments of the transition density of the variance process $\{V_t\}_{t\geq 0}$ and the approximating CTMC $\{V_t^{(m)}\}_{t\geq 0}$ coincide; these are the so-called \textit{local consistency conditions}, see \cite{kushner1990numerical} and \cite{lo2014improved}.
More precisely, the elements $q_{ij}$, $1\leq i,j\leq m$ of the generator of $V^{(m)}$ are chosen so that for a small time increment $h << T$, 
 \begin{align}
 \begin{split}
 	&\e_t\left[V^{(m)}_{t+h}-V_t^{(m)}\right]=\e_t\left[V_{t+h}-V_t\right]  \simeq\mu_V(V_t)h, \qquad\text{and}\\
 	&\e_t\left[\left(V^{(m)}_{t+h}-V_t^{(m)}\right)^2\right]=\e_t\left[\left(V_{t+h}-V_t\right)^2\right]\simeq\sigma_V^2(V_t)h,
\end{split}
 \label{eqLocalConstComd}
 \end{align}
 for all $t\geq 0$.
To ensure that the local consistency conditions are satisfied, we use the generator proposed by \cite{lo2014improved} and given by\footnote{An advised reader will notice some difference between the transition rates stated above, and the ones that appear in \cite{lo2014improved}. However, one can show that the two rate matrices are equivalent with some simple algebra.}  
 \begin{equation}
 	q_{ij}=\begin{cases}
 		\frac{\sigma_V^2(v_i)-\delta_i\mu_V(v_i)}{\delta_{i-1}(\delta_{i-1}+\delta_i)} & \text{$j=i-1$}\\
 		-q_{i,i-1}-q_{i,i+1}  & \text{$j=i$}\\
 		\frac{\sigma_V^2(v_i)+\delta_{i-1}\mu_V(v_i)}{\delta_{i}(\delta_{i-1}+\delta_i)} & \text{$j=i+1$}\\
 		0&\text{$j\neq i,\,i-1,\,i+1$},\label{eqQ}
 	\end{cases}
 \end{equation}
    for $2\leq i\leq m-1$ and $1\leq j\leq m$ and
   where $\delta_i=v_{i+1}-v_i$, $i=1,2,\ldots, m-1$. On the borders, we set $q_{12}=\frac{|\mu_V(v_1)|}{\delta_1}$, $q_{11}=-q_{12}$, $q_{m,m-1}=\frac{|\mu_V(v_m)|}{\delta_{m-1}}$, $q_{m,m}=-q_{m,m-1}$; and $0$ elsewhere.\\
   \\
  To obtain a well-defined $\mathbf{Q}^{(m)}$ matrix, the transition rates in \eqref{eqQ} must also satisfy the conditions in \eqref{eqQproperty1}. Hence, for $2\leq i\leq m-1$,
\begin{itemize}
 \item if $\mu_V(v_i)<0$, then we must have
\begin{equation}
  \delta_{i-1}\leq\frac{\sigma^2_V(v_i)}{|\mu_V(v_i)|}\quad\textrm{},\label{eqQprop1}
 \end{equation}
 in order for $q_{i,i+1}$ to be well defined (that is, $q_{i,i+1}\geq 0$), and 
 \item if $\mu_V(v_i)>0$, then we must have
 \begin{equation}
 	 \delta_{i}\leq\frac{\sigma^2_V(v_i)}{\mu_V(v_i)},\label{eqQprop2}
 \end{equation}
in order for $q_{i,i-1}\geq 0$.
\item when $\mu_V(v_i)=0$ then no additional condition needs to be added.
\end{itemize}
\begin{remark}\label{rmkQprop}
	A sufficient condition for \eqref{eqQprop1} and \eqref{eqQprop2} to hold is 
	\begin{equation}
		\max_{1\leq i\leq m-1} \delta_i\leq \min_{1\leq i\leq m-1} \frac{\sigma^2_V(v_i)}{|\mu_V(v_i)|} \label{eqQprop3}.
	\end{equation}
	 When this condition is not satisfied, more points should be added to the existing grid. However, it can sometimes be too restrictive, particularly in the case of approximating a two-dimensional process,  where adding more points to the state-space becomes too expensive computationally. In that case, condition \eqref{eqQprop3} may be replaced by a simple verification of \eqref{eqQprop1} and  \eqref{eqQprop2} which is less restrictive. However, from a numerical perspective, we observe that such conditions are not necessary to obtain good approximation results.
\end{remark}
The transition rates on the boundaries of the state-space are set so that the absolute instantaneous means are maintained at the endpoints. Other schemes could have also been employed (see \cite{chourdakis2004non},\cite{mijatovic2013continuously}), but we observed that all of these schemes are equivalent numerically. 

 \subsection{Approximation of the Fund Value Process $\{F_t\}_{0\leq t\leq T}$}\label{Chap2sectionCTMC_RS}
 The CTMC approximating $\{F_t\}_{t\geq0}$ is constructed by first replacing the variance process appearing in the drift and diffusion coefficients by their CTMC approximations, and then by further approximating the resulting regime-switching diffusion process by another CTMC.
 The resulting two-dimensional regime-switching CTMC can then be mapped to a one-dimensional CTMC on an enlarged state-space.

Lemma \ref{lemmaDecoupleBM} below allows for the  removal of the correlation between the Brownian motions in \eqref{eqEDS_F_Q}, which is necessary to construct the CTMC approximation of $\{F_t\}_{t\geq0}$.

	\begin{lemma}[Lemma 1 of \cite{cui2018general}]\label{lemmaDecoupleBM}
	\begin{sloppypar}
		Let $F$ and $V$ be defined as in \eqref{eqEDS_F_Q}. Define ${\gamma(x):=\int_{\cdot}^x \frac{\sigma_S(u)}{\sigma_V(u)} \diff u}$ and $X_t:=\ln (F_t) -\rho \gamma(V_t)$, $t\in[0,T]$. Then $X$ satisfies
	\end{sloppypar}
		\begin{equation}\label{eqEDS_X_Q}
			\begin{split}
				\diff X_t=&\mu_X(X_t,V_t)\diff t+\sigma_X(V_t)\diff W^*_t\\
				\diff V_t=&\mu_V(V_t) \diff t+\sigma_V(V_t) \diff W^{(2)}_t,
			\end{split}
		\end{equation}
		where $W^*_t$ is a standard Brownian motion $W^*_t:=\frac{W^{(1)}_t-\rho W^{(2)}_t}{\sqrt{1-\rho^2}}$ independent of $W^{(2)}_t$, $\sigma_X(y):=\sqrt{1-\rho^2}\sigma_S(y)$ and $\mu_X(x,y):=r-c(e^{x+\rho \gamma(y)},y)-\frac{\sigma^2_S(y)}{2}-\rho\psi(y)$, and
		\begin{eqnarray*}
			\psi(y)&:=&\mathcal{L}_v \gamma(y)=\mu_V(y)\gamma'(y)+\frac{1}{2}\sigma_V^2(y)\gamma''(y)\\
			&=& \mu_V(y)\frac{\sigma_S(y)}{\sigma_V(y)}+\frac{1}{2}\left[\sigma_V(y)\sigma_S'(y)-\sigma_V'(y)\sigma_S(y)\right],
		\end{eqnarray*}
	for $x\in\reals,y\in\mathcal{S}_V$.
	\end{lemma}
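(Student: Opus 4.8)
The plan is to regard $X_t = \ln(F_t) - \rho\gamma(V_t)$ as a smooth function of the two Itô processes $\ln F_t$ and $V_t$, apply Itô's lemma to each piece, and then combine. First I would compute $\diff \ln F_t$ from the first line of \eqref{eqEDS_F_Q}: the quadratic-variation correction contributes $-\tfrac{1}{2}\sigma_S^2(V_t)$, so that $\diff\ln F_t = \bigl(r - c_t - \tfrac12\sigma_S^2(V_t)\bigr)\diff t + \sigma_S(V_t)\diff W^{(1)}_t$. Next I would apply Itô to $\gamma(V_t)$; since $\gamma$ is defined by $\gamma'(x) = \sigma_S(x)/\sigma_V(x)$, the second-order term assembles exactly into $\mathcal{L}_v\gamma = \psi$, and, crucially, the diffusion coefficient $\gamma'(V_t)\sigma_V(V_t)$ collapses to $\sigma_S(V_t)$. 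Thus $\diff\gamma(V_t) = \psi(V_t)\diff t + \sigma_S(V_t)\diff W^{(2)}_t$.

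The heart of the argument is this cancellation. Subtracting $\rho\,\diff\gamma(V_t)$ from $\diff\ln F_t$ leaves a diffusion part equal to $\sigma_S(V_t)\bigl(\diff W^{(1)}_t - \rho\,\diff W^{(2)}_t\bigr)$, i.e. the two $\sigma_S(V_t)$ coefficients are aligned precisely so that only the combination $W^{(1)} - \rho W^{(2)}$ survives — this is exactly the role played by the factor $\rho$ and the normalization $\gamma' = \sigma_S/\sigma_V$ in the definition of $X$. Using $W^*_t = (W^{(1)}_t - \rho W^{(2)}_t)/\sqrt{1-\rho^2}$, the diffusion term becomes $\sigma_S(V_t)\sqrt{1-\rho^2}\,\diff W^*_t = \sigma_X(V_t)\diff W^*_t$, giving the claimed coefficient. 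Collecting the drift terms produces $r - c_t - \tfrac12\sigma_S^2(V_t) - \rho\psi(V_t)$, and rewriting $c_t = c(F_t,V_t)$ with $F_t = e^{X_t + \rho\gamma(V_t)}$ (inverting the defining relation for $X$) turns this into the stated function $\mu_X(X_t,V_t)$ of the current state, confirming \eqref{eqEDS_X_Q}.

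It then remains to justify that $W^*$ is a standard Brownian motion independent of $W^{(2)}$. As $W^*$ is a continuous local martingale (a linear combination of the driving Brownian motions), I would invoke Lévy's characterization: a direct computation of the quadratic and cross variations gives $[W^*]_t = \frac{1}{1-\rho^2}\bigl(t - 2\rho\cdot\rho t + \rho^2 t\bigr) = t$, so $W^*$ is a standard Brownian motion, while $[W^*,W^{(2)}]_t = \frac{1}{\sqrt{1-\rho^2}}(\rho t - \rho t) = 0$. Joint Gaussianity of $(W^*, W^{(2)})$ upgrades this orthogonality to genuine independence. I expect the only real subtlety to lie in this final step — the computations of $\diff\ln F_t$ and $\diff\gamma(V_t)$ are routine Itô calculus, but one must be careful that vanishing cross-variation yields independence only because the pair is jointly Gaussian, and that it is Lévy's theorem, not merely the correct variance, that certifies $W^*$ to be a bona fide Brownian motion driving \eqref{eqEDS_X_Q}.
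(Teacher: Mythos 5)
Your argument is correct and follows exactly the route the paper indicates: it is the multidimensional It\^o formula applied to $\ln F_t$ and $\gamma(V_t)$, with the normalization $\gamma'=\sigma_S/\sigma_V$ forcing the cancellation that leaves only $W^{(1)}-\rho W^{(2)}$ in the diffusion term, and L\'evy's characterization (plus zero cross-variation) identifying $W^*$ as a Brownian motion independent of $W^{(2)}$. The paper itself only cites Lemma 1 of Cui, Kirkby and Nguyen (2018) and notes that the proof ``relies on the multidimensional It\^o formula,'' so your write-up is simply the filled-in version of the same proof.
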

	The proof relies on the multidimensional Itô formula (see Lemma 1 of \cite{cui2018general} for details).
	
	Given the CTMC approximation of the process $V^{(m)}$ and its generator $\mathbf{Q}^{(m)}$, the diffusion process in \eqref{eqEDS_X_Q} can now be approximated by a regime-switching diffusion process $\{X_t^{(m)}\}_{t\geq 0}$:
	\begin{equation}\label{eqEDS_Xm}
			\diff X^{(m)}_t  = \mu_X(X_t^{(m)},V^{(m)}_t) \diff t+\sigma_X(V^{(m)}_t)\diff W^{*}_t,
	\end{equation} 
	where regimes are determined by the states of the approximated variance process, $\{v_1,v_2,\ldots,v_m\}$. 
To construct a regime-switching CTMC $(X_t^{(m,N)}, V_t^{(m)})$ approximating the regime-switching diffusion $(X_t^{(m)}, V_t^{(m)})$,
we fix a state for the variance process $V_t^{(m)}$ (or equivalently a regime) and  construct a CTMC approximation for $X_t^{(m)}$ given that $V_t^{(m)}$ is in that state.
This is done using the procedure described in Section \ref{subSectCTMCvariance} for a one-dimensional diffusion process.
The procedure is then repeated for each state in $\mathcal{S}_V^{(m)}$, and the approximating CTMCs are combined with $V^{(m)}$ to obtain the final regime-switching CTMC.
\\
More precisely, let $X_t^{(m,N)}$ be the CTMC approximation of $X_t^{(m)}$ taking values on a finite state-space $\mathcal{S}_X^{(N)}=\{x_1,x_2,\ldots,x_N\}$, $N\in\mathbb{N}$. 
For each $v_l\in\mathcal{S}_V^{(m)}$, we define the generator $\mathbf{G}^{(N)}_l=[\lambda_{ij}^l]_{N\times N}$ of $X_t^{(m,N)}$ given that the variance process is in state $v_l$  at time $t\geq 0$ by
\begin{equation}
	\lambda_{ij}^l=\begin{cases}
		\frac{\sigma_X^2(v_l)-\delta_i^x\mu_X(x_i,v_l)}{\delta_{i-1}^x(\delta_{i-1}^x+\delta_i^x)} & \text{$j=i-1$}\\
		-\lambda_{i,i-1}^l-\lambda_{i,i+1}^l  & \text{$j=i$}\\
		\frac{\sigma_X^2(v_l)+\delta_{i-1}^x\mu_X(x_i,v_l)}{\delta_{i}^x(\delta_{i-1}^x+\delta_i^x)} & \text{$j=i+1$}\\
		0&\text{$j\neq i,\,i-1,\,i+1$},\label{eqQk}
	\end{cases}
\end{equation}
for $2\leq i\leq N-1$ and $1\leq j\leq N$, where $\delta_i^x=x_{i+1}-x_i$, $i=1,2,\ldots, N-1$. On the boundaries, we set $\lambda_{12}^l=\frac{|\mu_X(x_1,v_l)|}{\delta^x_1}$, $\lambda_{11}^l=-\lambda_
{12}^l$, $\lambda_{N,N-1}^l=\frac{|\mu_X(x_N,v_l)|}{\delta^x_{N-1}}$, $\lambda_{N,N}^l=-\lambda_{N,N-1}^l$, and $0$ elsewhere.\\
\\
Using $V^{(m)}$ and the relation presented in Lemma \ref{lemmaDecoupleBM}, the approximated fund process $F^{(m,N)}$, which approximates $F$, is defined by
\begin{equation}
	F_t^{(m,N)}:=\exp\left\{X_t^{(m,N)}+\rho \gamma(V_t^{(m)})\right\},\quad{0\leq t\leq T}.\label{eqFctmcApprox}
\end{equation}
\begin{remark}[Convergence of the approximation]\label{remakWeakConvergenceCTMC}
Such a construction of the regime-switching CTMC ensures that the two-dimensional process $(X_t^{(m,N)}, V_t^{(m)})$ converges weakly to $(X_t, V_t)$ as $m,N\rightarrow\infty$. 
The main idea is to show that the generator of $(X_t^{(m,N)}, V_t^{(m)})$ is uniformly close to the infinitesimal generator of  $(X_t, V_t)$ as $m,N\rightarrow\infty$, to then conclude that $(X_t^{(m,N)}, V_t^{(m)})\Rightarrow (X_t, V_t)$ using the results of \cite{ethier2005markov} which relies on semi-group theory. 
Moreover, since the function $h:\reals\times\mathcal{S}_V\rightarrow \reals_+$ defined by $h(x,y)=e^{x+\rho \gamma(y)}$ is continuous, we have that $F^{(m,N)}\Rightarrow F$ by the continuous mapping Theorem, see \cite{billingsley1999convergence} Theorem 2.7. 
For one-dimensional processes, intuition and detailed explanations of the proof can be found in \cite{mijatovic2013continuously}, Section 5 (or in the unabridged version of the paper \cite{mijatovic2009continuously}, Section 6); for stochastic volatility models,  see \cite{cui2018general}, Section 2.4.
\end{remark}

\begin{sloppypar}
The last step is to convert the regime-switching CTMC $(X_t^{(m,N)}, V_t^{(m)})$ into a one-dimensional CTMC process $Y_t^{(m,N)}$ on an enlarged state-space ${\mathcal{S}_Y^{(m,N)}:=\{1,2,\dots,mN\}}$. This is done in Theorem 1 of \cite{cai2019unified}, reproduced below. 
\end{sloppypar}
\begin{proposition}[Theorem 1 of \cite{cai2019unified}]\label{propCTMC_XtoY}
Consider a regime-switching CTMC $(X^{(m,N)}, V^{(m)})$ taking values in $\mathcal{S}_X^{(N)}\times\mathcal{S}_V^{(m)}$ where $\mathcal{S}_X^{(N)}=\{x_1,x_2,\ldots,x_N\}$ and $\mathcal{S}_V^{(m)}=\{v_1,v_2,\ldots,v_m\}$; and another one-dimensional CTMC, $\{Y_t^{(m,N)}\}_{0\leq t\leq T}$, taking values in $\mathcal{S}_Y^{(m,N)}:=\{1,2,\ldots, mN\}$ and its transition rate matrix $\mathbf{G}^{(m,N)}$ given by
\begin{equation}
	\left(\begin{array}{cccc}
		q_{11}\mathbf{I}_N+\mathbf{G}_1^{(N)} & q_{12}\mathbf{I}_N & \cdots & q_{1m}\mathbf{I}_N\\
		q_{21}\mathbf{I}_N & q_{22}\mathbf{I}_N+\mathbf{G}_2^{(N)} & \cdots & q_{2m}\mathbf{I}_N\\
		\vdots & \vdots & \ddots & \vdots\\
		q_{m1}\mathbf{I}_N & q_{m2}\mathbf{I}_N & \cdots & q_{mm}\mathbf{I}_N+\mathbf{G}_m^{(N)}\\
	\end{array}\right),\label{Chap2eqQ_Y}
\end{equation}
where $\mathbf{I}_N$ is the $N\times N$ identity matrix, $\mathbf{G}_l^{(N)}=[\lambda_{ij}^l]_{N\times N}$, $l=1,2,\ldots, m$ and $\mathbf{Q}^{(m)}=[q_{ij}]_{m\times m}$ are the generators defined in \eqref{eqQk} and \eqref{eqQ}, respectively. Define the function $\psi:\mathcal{S}_X^{(N)}\times\mathcal{S}_V^{(m)}\rightarrow \mathcal{S}_Y^{(m,N)}$ by $\psi(x_n,v_l)=(l-1)N+n$ and its inverse $\psi^{-1}:\mathcal{S}_Y^{(m,N)}\mapsto \mathcal{S}_X^{(N)}\times \mathcal{S}_V^{(m)}$ by $\psi^{-1}(n_y)= (x_n,v_l)$ for $n_y\in\mathcal{S}_Y^{(m,N)}$, where $n\leq N$ is the unique integer such that $n_y=(l-1)N+n$ for some $l\in\{1,2,\ldots,m\}$. Then, we have
$$
\e\left[\Psi(X^{(m,N)}, V^{(m)})|X_0^{(m,N)}=x_i,V_0^{(m)}=v_k\right]=\e\left[\Psi(\psi^{-1}(Y^{(m,N)}))|Y_0=(k-1)N+i\right],
$$
for any path-dependent function $\Psi$ such that the expectation on the left-hand side is finite.
\end{proposition}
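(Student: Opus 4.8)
The plan is to identify $Y^{(m,N)}$, up to relabeling, with the joint regime-switching chain $(X^{(m,N)}, V^{(m)})$. Concretely, I would show that the relabeled process $\psi(X^{(m,N)}_t, V^{(m)}_t)$ and the chain $Y^{(m,N)}_t$ are both time-homogeneous CTMCs on the common finite state-space $\mathcal{S}_Y^{(m,N)}$ that share the same initial state and the same generator. Since a finite-state time-homogeneous CTMC is determined in law by its initial distribution together with its generator --- through the semigroup representation $\mathbf{P}_t = \exp(\mathbf{Q}t)$ recalled in \eqref{Chap2EqP}, which fixes all finite-dimensional distributions and hence the whole law on the (Skorokhod) path space of c\`adl\`ag trajectories --- this identification forces the two laws to coincide. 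The stated identity for an arbitrary path-dependent functional $\Psi$ with finite expectation then follows immediately, the map $\psi^{-1}$ on the right-hand side merely transporting each $Y^{(m,N)}$-trajectory back into an $(X^{(m,N)}, V^{(m)})$-trajectory before $\Psi$ is evaluated.

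Two preliminary points are essentially free. First, because $\psi$ is a deterministic, time-independent bijection, $\psi(X^{(m,N)}_t, V^{(m)}_t)$ is again a Markov chain, with generator obtained by relabeling that of the joint process. Second, the initial states match: conditioning on $X_0^{(m,N)} = x_i$, $V_0^{(m)} = v_k$ on the left corresponds under $\psi$ to $Y_0 = \psi(x_i, v_k) = (k-1)N + i$, which is exactly the conditioning appearing on the right.

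The substance is the computation of the joint generator. First I would write out the infinitesimal transition rates of the regime-switching chain $(X^{(m,N)}, V^{(m)})$: over a short interval of length $h$ and conditional on the current state $(x_n, v_l)$, an $X$-transition to $x_{n'}$, $n' \neq n$, occurs at rate $\lambda_{n n'}^l$ with the regime held fixed; a regime switch $v_l \to v_{l'}$, $l' \neq l$, occurs at rate $q_{l l'}$ and leaves the $X$-coordinate unchanged; and the probability that both coordinates move in $[t, t+h]$ is $o(h)$. Reading these rates off in the ordering $\psi(x_n, v_l) = (l-1)N + n$, which partitions $\mathcal{S}_Y^{(m,N)}$ into $m$ consecutive blocks of length $N$ (one per regime), I expect to recover exactly the block matrix $\mathbf{G}^{(m,N)}$ of \eqref{Chap2eqQ_Y}: the off-diagonal block $(l, l')$ equals $q_{l l'}\mathbf{I}_N$, since a regime switch preserves $n$; the diagonal block $(l, l)$ carries the $X$-rates $\mathbf{G}_l^{(N)}$ off its diagonal, and on its diagonal the combined holding rate $\lambda_{n n}^l + q_{l l}$. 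Using that the rows of $\mathbf{G}_l^{(N)}$ and of $\mathbf{Q}^{(m)}$ each sum to zero, this diagonal block is $q_{l l}\mathbf{I}_N + \mathbf{G}_l^{(N)}$, matching \eqref{Chap2eqQ_Y}.

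The main obstacle is the careful justification of this joint generator --- specifically, arguing that the modulating chain $V^{(m)}$ and the modulated chain $X^{(m,N)}$ contribute additively because regime switches carry the $X$-coordinate over unchanged and simultaneous jumps contribute only at order $o(h)$. This is what produces the Kronecker-sum structure of $\mathbf{G}^{(m,N)}$; once it is in place, matching the blocks to \eqref{Chap2eqQ_Y} and invoking uniqueness of the law given the generator and initial state is routine bookkeeping.
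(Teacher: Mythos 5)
Your argument is correct, but note that the paper itself gives no proof of this statement: it is imported verbatim as Theorem~1 of the cited reference (Cai, Song and Kou, 2019), so there is nothing internal to compare against. Your route --- relabel $(X^{(m,N)},V^{(m)})$ through the bijection $\psi$, verify that the relabelled chain has generator \eqref{Chap2eqQ_Y} because regime switches preserve the $X$-coordinate and simultaneous jumps occur with probability $o(h)$, then invoke uniqueness of the law of a finite-state CTMC given its generator and initial state --- is exactly the standard argument and is the one used in the cited source. The only point deserving care is the one you flag: the ``no simultaneous jumps'' and ``$X$ carried over unchanged at a switch'' properties are not theorems but part of how the paper (informally) \emph{defines} the combined regime-switching chain in Section~\ref{Chap2sectionCTMC_RS}, so a fully rigorous write-up should either take \eqref{Chap2eqQ_Y} as the definition of the joint generator or give an explicit construction (e.g.\ via competing exponential clocks) from which these properties follow.
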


\section{Variable Annuity Pricing via CTMC Approximation}\label{subsecVApricingCTMC}
In this section, we use the CTMC approximation of the fund value process to price variable annuities under different surrender strategies. 
We provide a simple way to approximate the optimal surrender surface, which is the extension in three dimensions of the exercise boundary for two-dimensional processes. The present section is based on \cite{cui2018general} and \cite{cui2019continuous}, in which CTMC approximation are used for option pricing.\\
\\
Recall that  $(X^{(m,N)}, V^{(m)})$ is the regime-switching CTMC approximation of $(X,V)$ (see Lemma \ref{lemmaDecoupleBM}) taking values in a finite state-space $\mathcal{S}_X^{(N)}\times\mathcal{S}_V^{(m)}$ where $\mathcal{S}_X^{(N)}=\{x_1,x_2,\ldots,x_N\}$, $N\in\mathbb{N}$; and $\mathcal{S}_V^{(m)}=\{v_1,v_2,\ldots,v_m\}$, $m\in\mathbb{N}$. We have also defined $F^{(m,N)}$ in terms of $(X^{(m,N)}, V^{(m)})$ in \eqref{eqFctmcApprox}.\\
\\
Throughout this section, we denote by $\{\mathbf{e}_{ik}\}_{i,k=1}^{N,m}$ the standard basis in $\reals^{mN}$, i.e. $\mathbf{e}_{ik}$ represents a row vector of size $1\times mN$ having a value of $1$ in the $(k-1)N+i$-th entry and $0$ elsewhere.

\subsection{Variable Annuity without Early Surrenders}

Consider an initial premium $F_0>0$ and let $V_0\in\mathcal{S}_V$. 
The risk-neutral value of a variable annuity contract assuming no early surrenders can be approximated by

\begin{equation}
	\begin{array}{ll}
v_e(0,F_0,V_0)&=\e\left[e^{-rT}\max(G,F_T)|F_0,V_0\right]\\
\quad&\approx \e\left[e^{-rT}\max(G,F_T^{(m,N)})|X_0^{(m,N)}=x_i,V_0^{(m)}=v_k\right].
\end{array}\label{Chap2eqVActmc}
\end{equation}

Here, we assume\footnote{If $X_0^{(m,N)}$ and $V_0^{(m)}$ are not part of their respective grids, then the two points can be added to the grids, or the option price must be linearly interpolated between grid points, see Remark \ref{remarkGripPoint} for details.} that $x_i\in\mathcal{S}_X^{(N)}$ and $v_k\in\mathcal{S}_V^{(m)}$, with $x_i=\ln(F_0)-\rho \gamma(V_0)$.
\begin{proposition}\label{propVAwoSurrenders}
	 Let $F_0>0$ be the initial premium, with $X_0^{(m,N)}=\ln(F_0)-\rho \gamma(V_0)=x_i \in\mathcal{S}_X^{(N)}$ and $V_0=v_k\in\mathcal{S}_V$. The risk-neutral value at time $0$ of a variable annuity contract held until maturity $T$ and with guaranteed amount $G>0$ can be approximated by
	\begin{eqnarray}
		v_e^{(m,N)}(0,F_0,V_0)&:=&\e[e^{-rT}\max(G,F_T^{(m,N)})|F_0^{(m,N)}=F_0,V_0^{(m)}=V_0]\nonumber\\
		&= &e^{-rT}\mathbf{e}_{ik}\exp\{T\mathbf{G}^{(m,N)}\}\mathbf{H},\label{eq_CTMC_VA}
	\end{eqnarray}
	where $\mathbf{G}^{(m,N)}$ is defined in \eqref{Chap2eqQ_Y} and $\mathbf{H}$ is a column vector of size $mN\times 1$ whose $(l-1)N+n$-th entry is given by
	\begin{equation}
	 h_{(l-1)N+n}=\max\left(G, e^{x_n+\rho \gamma( v_l)}\right), \quad 1\leq l\leq m,\,1\leq n\leq N.\label{eqCTMChj}
	\end{equation}
\end{proposition}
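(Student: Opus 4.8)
The plan is to reduce the stated closed-form expression to a direct application of Proposition \ref{propCTMC_XtoY} together with the matrix-exponential representation \eqref{Chap2EqP} of the transition semigroup. First I would observe that, by the definition \eqref{eqFctmcApprox} of the approximated fund process, the terminal payoff depends only on the terminal state $(X_T^{(m,N)}, V_T^{(m)})$ of the regime-switching CTMC:
\begin{equation*}
\max\!\left(G, F_T^{(m,N)}\right) = \max\!\left(G, \exp\{X_T^{(m,N)} + \rho\gamma(V_T^{(m)})\}\right) =: \Psi\!\left(X_T^{(m,N)}, V_T^{(m)}\right),
\end{equation*}
so that $\Psi$ is a (non-path-dependent) special case of the functionals covered by Proposition \ref{propCTMC_XtoY}. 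The discount factor $e^{-rT}$ is deterministic and factors out of the expectation, so it suffices to evaluate $\e[\Psi(X_T^{(m,N)}, V_T^{(m)}) \mid X_0^{(m,N)} = x_i, V_0^{(m)} = v_k]$.

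Next I would invoke Proposition \ref{propCTMC_XtoY} with the initial condition $Y_0^{(m,N)} = (k-1)N + i$, which is precisely the image $\psi(x_i, v_k)$ of the starting state, to transfer the expectation onto the one-dimensional CTMC $Y^{(m,N)}$ on the enlarged state-space $\mathcal{S}_Y^{(m,N)}$. Since $Y^{(m,N)}$ is a time-homogeneous chain on the finite set $\{1,\ldots,mN\}$, its generator $\mathbf{G}^{(m,N)}$ defined in \eqref{Chap2eqQ_Y} is bounded; hence the standing conditions for \eqref{Chap2EqP} hold and the transition probability matrix over $[0,T]$ is $\mathbf{P}_T = \exp\{T\mathbf{G}^{(m,N)}\}$. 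Writing the terminal expectation as a sum over terminal states weighted by transition probabilities,
\begin{equation*}
\e\!\left[\Psi\!\left(\psi^{-1}(Y_T^{(m,N)})\right) \,\middle|\, Y_0^{(m,N)} = (k-1)N+i\right] = \sum_{n_y=1}^{mN} \left[\mathbf{P}_T\right]_{(k-1)N+i,\, n_y}\, \Psi\!\left(\psi^{-1}(n_y)\right),
\end{equation*}
and identifying $\Psi(\psi^{-1}(n_y)) = \max(G, e^{x_n + \rho\gamma(v_l)})$ for $n_y = (l-1)N + n$ with the $n_y$-th entry $h_{n_y}$ of $\mathbf{H}$ from \eqref{eqCTMChj}, this sum is exactly the vector-matrix-vector product $\mathbf{e}_{ik}\,\mathbf{P}_T\,\mathbf{H}$. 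Restoring the factor $e^{-rT}$ yields the claimed formula.

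The bulk of the argument is routine once the ingredients are in place; the only point requiring care is the bookkeeping of the indexing conventions. Specifically, I would verify that the labeling $\psi(x_n, v_l) = (l-1)N + n$ used in Proposition \ref{propCTMC_XtoY}, the stacking order chosen for the entries $h_{(l-1)N+n}$ of $\mathbf{H}$ in \eqref{eqCTMChj}, and the position $(k-1)N+i$ singled out by the selector vector $\mathbf{e}_{ik}$ are mutually consistent, so that the matrix product faithfully reproduces the weighted sum above. Everything else follows directly from the cited results.
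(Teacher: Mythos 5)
Your argument is correct and is essentially the paper's own proof: the paper likewise observes that \eqref{eq_CTMC_VA} is the matrix representation of the conditional expectation of a function of the one-dimensional chain $Y^{(m,N)}$, with probability mass function given by the row $p_{(k-1)N+i,\cdot}(T)$ of $\mathbf{P}(T)=\exp\{T\mathbf{G}^{(m,N)}\}$, after mapping $(X^{(m,N)},V^{(m)})$ to $Y^{(m,N)}$ via Proposition \ref{propCTMC_XtoY}. Your additional care with the indexing conventions matches the intended bookkeeping exactly.
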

\begin{sloppypar}
The proof is straightforward by noticing that \eqref{eq_CTMC_VA} is the matrix representation of the conditional expectation of a (function of a) discrete one-dimensional random variable whose conditional probability mass function is given by the transitional probabilities ${p_{(k-1)N+i,j} (T),\,1\leq j\leq mN}$, with $\mathbf{P}(T)=[p_{ij} (T)]_{mN\times mN}=\exp\{T\mathbf{G}^{(m,N)}\}$ as per \eqref{Chap2EqP}.
\end{sloppypar}
The algorithm to approximate the value of a variable annuity without early surrenders is straightforward from the last proposition. 

\begin{remark} [Convergence of VA prices without early surrenders]\label{remarkVAwoSurrendersConv}
	As per Remark \ref{remakWeakConvergenceCTMC}, we know that $\FmN \Rightarrow F$, as $m,N\rightarrow\infty$. From the continuous mapping theorem, c.f. Theorem 2.7 of \cite{billingsley1999convergence}, we also know that $\varphi(T,\FmN_T,\Vm_T)\Rightarrow \varphi(T,F_T,V_T)$; implying the convergence of the corresponding VA prices $\e[\varphi(T,\FmN_T,\Vm_T)]\rightarrow\e[\varphi(T,F_T,V_T)]$, Theorem 3.6 of \cite{billingsley1999convergence}. Convergence analysis of the two-layers CTMC approximation in the context of European option pricing is performed in \cite{ma2022convergence}.
\end{remark}
 
\subsubsection{Fast Algorithm}\label{subsectFastAlgo}
When we consider a typical time-horizon of $10$, $15$ or $20$ years as is often the case in VA pricing, the probability for the fund or the volatility processes to reach high (resp. low) value is higher than when shorter maturities are concerned, and so the grid's upper (resp. lower) bound must be set to a higher (resp. lower) value. 
This complicates the pricing of VAs compared to financial option pricing (generally written for short or medium time-horizon), since we need more discretization points $m$ and $N$ in order to capture the distribution of the variance and the fund value process correctly. 
Theoretically, this is not a problem; however several numerical issues can be encountered when computing the price. 
First, the generator matrix $\mathbf{G}^{(m,N)}$ can become very large and thus require a large amount of storage space, which may cause memory problems. 
Second, calculating the exponential of a large sparse $mN \times mN$ matrix over a long time horizon is time-consuming.\\
\\
Using the tower property of conditional expectations and an approximation based on the assumption that the variance process remains constant over small time periods, we propose a new algorithm that speeds up the pricing of the VA contract without early surrenders (see Algorithm \ref{algoVA_WOsurrendersFast}). 
The main idea behind the new algorithm resides in the use of nested conditional expectations. 
More precisely, for $h= T/M$, $M\in\mathbb{N}$, we use
$$\e\left[e^{-rT}\varphi\left(T, e^{\XmN_T+\rho \gamma(\Vm_T)}, \Vm_T\right)\right]=\e\left[\tilde{\varphi}\left(kh\right)\right],\quad k=1,2,\ldots,M-2$$

with $\tilde{\varphi}\left(kh\right)=\e\left[\tilde{\varphi}\left((k+1)h\right)\big|\mathcal{F}_{kh}\right]$ and
${\tilde{\varphi}\left((M-1)h\right)=\e\left[\varphi\left(T,e^{\XmN_{T}+\rho \gamma(\Vm_{T})}, \Vm_{T}\right)\big|\mathcal{F}_{(M-1)h}\right]}$.\\
\\
The approximation used in Algorithm \ref{algoVA_WOsurrendersFast} follows from Proposition \ref{propExpectionApprox} presented below.
\begin{proposition}\label{propExpectionApprox}
Let $h>0$ with $h\ll T$ and $0\leq t\leq T-h$. For any function $\phi$ such that the expectation on the left-hand side of \eqref{eqCondExpapprox} is finite, we have that
\begin{multline}
\e\left[\phi\left(t+h,\XmN_{t+h},\Vm_{t+h}\right)|\XmN_t=x_i,\Vm_t=v_k\right]\\
=\sum_{j=1}^m\e\left[\phi(t+h,\XmN_{t+h},\Vm_{t+h})|\Vm_t=\Vm_{t+h}=v_j,\XmN_t=x_i\right]\\
\times\prob{P}\left(\Vm_{t+h}=v_j |\Vm_{t}=v_{k}\right)+c(h),\label{eqCondExpapprox}
\end{multline}
where $c(h)$ is a function satisfying $\lim_{h\rightarrow 0}\frac{c(h)}{h}=0$.
\end{proposition}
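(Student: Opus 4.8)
The plan is to derive \eqref{eqCondExpapprox} from the law of total expectation, conditioning on the value $\Vm_{t+h}$ of the variance chain at the end of the interval, and then to show that replacing the exact conditioning event by the ``frozen-variance'' one costs only $o(h)$. First I would use the tower property together with the fact that the marginal process $\Vm$ is itself an autonomous continuous-time Markov chain---its transition rates are the $q_{ij}$ of the block generator \eqref{Chap2eqQ_Y}, whose off-diagonal blocks are the scalar multiples $q_{lj}\mathbf{I}_N$ and hence do not depend on the $X$-coordinate---to write
\begin{equation*}
\e\left[\phi(t+h,\XmN_{t+h},\Vm_{t+h})|\XmN_t=x_i,\Vm_t=v_k\right]=\sum_{j=1}^m A_j\,\prob{P}\left(\Vm_{t+h}=v_j|\Vm_t=v_k\right),
\end{equation*}
where $A_j:=\e\left[\phi(t+h,\XmN_{t+h},\Vm_{t+h})|\Vm_{t+h}=v_j,\Vm_t=v_k,\XmN_t=x_i\right]$ and where autonomy of $\Vm$ has removed the dependence of the transition probability on $\XmN_t=x_i$. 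Writing $B_j$ for the conditional expectation on the right-hand side of \eqref{eqCondExpapprox} (that is, $A_j$ with the event $\{\Vm_t=v_k\}$ replaced by $\{\Vm_t=v_j\}$), the statement reduces to proving that the error $\sum_{j=1}^m (A_j-B_j)\,\prob{P}(\Vm_{t+h}=v_j|\Vm_t=v_k)$ is $o(h)$.

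Next I would split this error sum into its diagonal and off-diagonal parts. For $j=k$ the two conditioning events coincide, so $A_k=B_k$ and the term vanishes identically. For $j\neq k$, the infinitesimal description \eqref{eq_pijh} of the semigroup gives $\prob{P}(\Vm_{t+h}=v_j|\Vm_t=v_k)=O(h)$, so it is enough to show $A_j-B_j\to 0$ as $h\to 0$: each off-diagonal summand is then $o(1)\cdot O(h)=o(h)$, and the finite sum over $j$ is again $o(h)$.

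The crux, and the step I expect to be the main obstacle, is the claim $A_j-B_j\to 0$. The idea is that both conditional laws of $\XmN_{t+h}$---the one given $\{\Vm_t=v_k,\Vm_{t+h}=v_j\}$ and the one given $\{\Vm_t=v_j,\Vm_{t+h}=v_j\}$---concentrate on the single grid point $x_i$ as $h\to 0$. For the frozen event this is immediate, since given a constant regime $\XmN$ has bounded jump rates and therefore stays put on $[t,t+h]$ with probability $1-O(h)$. For the true event one must verify that conditioning on a variance move from $v_k$ to $v_j$ does not drag the $X$-coordinate along; this is exactly where the block structure of $\mathbf{G}^{(m,N)}$ enters, because a $\Vm$-transition carries the factor $\mathbf{I}_N$ and so leaves the $X$-index unchanged. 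Concretely, the first-order expansion $\exp(h\mathbf{G}^{(m,N)})\approx \mathbf{I}+h\mathbf{G}^{(m,N)}$ shows that the joint probability $[\exp(h\mathbf{G}^{(m,N)})]_{(k-1)N+i,(j-1)N+i}$ and the marginal $\prob{P}(\Vm_{t+h}=v_j|\Vm_t=v_k)$ share the same leading term $h\,q_{kj}$, so their ratio tends to $1$ and $\prob{P}(\XmN_{t+h}=x_i|\Vm_t=v_k,\Vm_{t+h}=v_j)\to 1$. Since $\mathcal{S}_X^{(N)}$ and $\mathcal{S}_V^{(m)}$ are finite, $\phi(t+h,\cdot,v_j)$ is automatically bounded on the grid; expressing $A_j-B_j$ as a finite sum of these bounded values weighted by the differences of the two conditional point-masses---each of which tends to $0$---yields $A_j-B_j\to 0$. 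I would emphasise that only boundedness on the finite state-space, and no continuity of $\phi$, is used, so the argument covers the discontinuous reward \eqref{eqRewardFct} needed in the algorithm.
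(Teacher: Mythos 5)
Your argument is correct, but it proceeds differently from the paper's. The paper never invokes the tower property at the level of conditioning on $V^{(m)}_{t+h}$: it expands the left-hand side directly as a sum over all joint states $(x_l,v_j)$ using the four-case infinitesimal transition probabilities of $\mathbf{G}^{(m,N)}$ (simultaneous change of both coordinates has probability $o(h)$, change of one coordinate has probability rate$\times h+o(h)$), expands the right-hand side the same way, and checks that the two first-order expansions coincide term by term, defining $c(h)$ as the difference of the residuals. You instead write the left-hand side \emph{exactly} as $\sum_j A_j\,\prob{P}(\Vm_{t+h}=v_j|\Vm_t=v_k)$ -- which is legitimate, since the block structure $q_{kj}\mathbf{I}_N$ makes $\Vm$ an autonomous (lumpable) chain -- and then estimate $A_j-B_j$. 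This isolates more transparently where the error lives: the diagonal term vanishes identically, and each off-diagonal term is $o(1)\cdot O(h)$. Both proofs ultimately rest on the same structural fact (a regime transition carries the factor $\mathbf{I}_N$ and so does not displace the $X$-coordinate), but yours packages it as concentration of two conditional laws at $x_i$, while the paper's packages it as cancellation of matching first-order terms; the paper's version is more mechanical but yields the explicit residual, yours is shorter and more conceptual. Two small points to patch: (i) your ratio argument $[\exp(h\mathbf{G}^{(m,N)})]_{(k-1)N+i,(j-1)N+i}/\prob{P}(\Vm_{t+h}=v_j|\Vm_t=v_k)\to 1$ is a $0/0$ form when $q_{kj}=0$ (which happens for $|k-j|>1$ since $\mathbf{Q}^{(m)}$ is tridiagonal); for those $j$ you should instead note that the weight $\prob{P}(\Vm_{t+h}=v_j|\Vm_t=v_k)$ is itself $o(h)$ and $|A_j-B_j|$ is bounded, so the term is $o(h)$ without needing $A_j-B_j\to 0$. (ii) Your appeal to boundedness of $\phi$ on the finite grid is for fixed $h$; the limit $h\to 0$ needs $\phi(t+h,x_l,v_j)$ bounded uniformly for small $h$, but this is the same implicit regularity in $t$ that the paper's proof uses when it asserts $\lim_{h\to 0}\phi(t+h,x_l,v_j)c_{il}^{kj}(h)/h=0$, so it is not a defect specific to your argument.
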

The proof of Proposition \ref{propExpectionApprox} is reported in Appendix \ref{appendixProofPropCondExpApprox}.\\
\\
Using the last proposition allows us to work with the matrices $\{\mathbf{G}^{(N)}_j\}_{j=1}^m$ and $\mathbf{Q}^{(m)}$ separately, so that the new algorithm now requires $m$ times the calculation of the exponential of a $N\times N$ matrix and one time the exponential of a $m\times m$ matrix over small time-intervals. Hence, by reducing the size of the matrix in the matrix exponential and the length of the time interval over which the exponential is calculated allows to reduce the computation time significantly and to manage the memory space more effectively. Also, note that the added cost of computing $m+1$ matrix exponentials (rather than only one) is counterbalanced by the reduced cost of computing the exponential of smaller matrices over a short time interval $h$.\\
\\
The following notation is used in Algorithm \ref{algoVA_WOsurrendersFast} below.
\begin{enumerate}
	\item We use $M\in\mathbb{N}$ time steps of length $\Delta_M=T/M.$
	\item $\mathbf{B}=[b_{jn}]_{j,n=1}^{m,N}$ denotes a matrix of size $m\times N$, containing the value of the VA contract. More precisely, the matrix $\mathbf{B}$ is updated at each time step, so that after the first iteration, $b_{jn}\approx\e\left[e^{-rT}\varphi(T, e^{\XmN_T+\rho \gamma(\Vm_T)}, \Vm_T)|\XmN_{T-\Delta_M}=x_n,\Vm_{T-\Delta_M}=v_j\right]$; after the second iteration, $b_{jn}\approx\e\left[e^{-rT}\varphi(T, e^{\XmN_T+\rho \gamma(\Vm_T)}, \Vm_T)|\XmN_{T-2\Delta_M}=x_n,\Vm_{T-2\Delta_M}=v_j\right]$, and so on. 
	\item $\mathbf{B}_{*,n}=[b_{jn}]_{j=1}^{m}$ denotes the $n$-th column of $\mathbf{B}$, $n=1,2,\ldots, N$,
	\item  $\mathbf{B}_{j,*}=[b_{jn}]_{n=1}^{N}$ denotes the $j$-th row of $\mathbf{B}$, $j=1,2,\ldots, m$.
	\item The symbol $\top$ indicates the matrix (vector) transpose operation.
\end{enumerate}  
\begin{algorithm}[h!]
	\caption{Variable Annuity without Early Surrenders via CTMC Approximation- Fast Algorithm}
	\label{algoVA_WOsurrendersFast}
	\DontPrintSemicolon
	\KwInput{Initialize $\mathbf{Q}^{(m)}$ as in \eqref{eqQ} and $\mathbf{G}^{(N)}_j$ for $j=1,2,\ldots,m$, as in \eqref{eqQk}\;
		$M\in\mathbb{N}$, the number of time steps, \;
		$\Delta_M\leftarrow T/M$, the size of a time step}
		Set $\mathbf{B}_{j,*}\leftarrow\left[\varphi(T, e^{x_n+\rho \gamma(v_j)},v_j)\right]_{n=1}^{N}$,  $j=1,2,\ldots,m$\;    
	\tcc{Calculate the transition probability matrices}
	
	\For{$j=1,2,\ldots, m$}
	{\tcc{Transition probability matrix of $\XmN$ given $\Vm=v_j$ over a period of length $\Delta_M$}
		$\mathbf{P}^X_j\leftarrow e^{\mathbf{G}_{j}^{(N)} \Delta_M}$}
	\tcc{Transition probability matrix of $\Vm$ over a period of length $\Delta_M$}
	$\mathbf{P}^V\leftarrow e^{\mathbf{Q}^{(m)} \Delta_M}$\;
	\tcc{VA valuation}
	\For{$z=M-1,\ldots,0$}{
		\For{$j=1,2,\ldots, m$}
		{$\tilde{\mathbf{H}}_{*,j}\leftarrow \mathbf{P}^X_j\mathbf{B}_{j,*}^{\top}$}
		\For{$n=1,2,\ldots, N$}
		{$\mathbf{B}_{*,n}\leftarrow \mathbf{P}^V\tilde{\mathbf{H}}_{n,*}^{\top}$}
	}	
	\KwRet $b_{ki}$\;
\end{algorithm}
Note that  at the end of the last iteration (i.e. when $z=0$), we have that
\begin{equation}
b_{ki}\approx\e\left[e^{-rT}\varphi(T, e^{\XmN_T+\rho \gamma(\Vm_T)}, \Vm_T)|\XmN_{0}=x_i,\Vm_{0}=v_k\right].\label{eqCTMCeuroVAapprox}
\end{equation}
\begin{sloppypar}
The computational gain of using Algorithm \ref{algoVA_WOsurrendersFast} over the previous algorithm  comes at the cost of a loss of accuracy since the conditional expectations are approximated over small time intervals (refer to Proposition \ref{propExpectionApprox}). Numerical experiments below show that highly accurate results are obtained in seconds when the time step is small, but the algorithm can perform poorly when the time step is not small enough.
\end{sloppypar}
\begin{remark}
 Since at each time step, Algorithm \ref{algoVA_WOsurrendersFast}, or the  ``Fast Algorithm'', takes advantage of the tower property of conditional expectations over short time periods of the same length $\Delta_M$, the transition probability matrix can be pre-computed at the beginning of the procedure and stored, accelerating the numerical process greatly. Moreover, as discussed in \cite{cui2018general} in the context of option pricing, the method can be used to simultaneously price variable annuities with different guaranteed amounts since the main computational effort in the algorithm resides in the calculation of the matrix exponentials.
\end{remark}
\subsection{Variable Annuity with Early Surrenders}
We approximate the value of the VA contract (including surrender rights) by its Bermudan counterpart for a large number of monitoring dates. 
Bermudan options can be exercised early, but only at predetermined dates $R \subset [0,T]$. Thus, Bermudan options are similar to American options, 
but the region of the permitted exercise times is a subset of $[0,T]$ containing a finite number of exercise dates, $\{t_0,t_1,\ldots,t_M\}$ with $t_z\in[0,T]$, $z=0,1,2,\ldots,M$ for some $M\in\mathbb{N}$.\\
\\
In this paper, we use the term Bermudan (resp. American) contract to refer to a variable annuity under which the policyholder has the right to surrender her contract prior to maturity on predetermined dates (resp. at any time prior to maturity). In the same vein, a variable annuity without surrender rights is also called an European contract. Note that these terms do not refer to existing contracts, and they are used to simplify explanations. Naturally, as $M\rightarrow\infty$, we expect the price of the Bermudan contract to converge to the one of a variable annuity with surrender rights as defined in \eqref{eqAmOptVA}. 
The latter is formalized in the following.\\
\\
Let $\Delta_M = T/M$ for some $M \in \mathbb N$ and define the set $\mathcal{H}_M=\{t_0,t_1,\ldots,t_M\}$ where $t_z=z\Delta_M$, ${z=0,1,2,\ldots, M}$, so that $t_0=0$ and $t_M=T$. The time-$t$ risk-neutral value of the Bermudan contract with permitted exercise dates $\mathcal{H}_M$ is defined by 
\begin{eqnarray}
b_M(t,x,y)& =& \sup_{\tau \in\mathcal{T}_{\Delta_M},\tau\geq t}\e_{t,x,y}[e^{-r(\tau-t)}\varphi(\tau,F_{\tau},V_{\tau})] \label{eqBermVA}
\end{eqnarray}

where $\mathcal{T}_{\Delta_M}$ is the set of stopping times taking values in $\mathcal{H}_M$. Proposition \ref{propBermAmConv} below shows that $b_M(t,x,y)\rightarrow v(t,x,y)$ as $M\rightarrow\infty$.
\begin{proposition}\label{propBermAmConv}
	As $M\rightarrow\infty$, the value function of the Bermudan variable annuity contract \eqref{eqBermVA} converges to its American counterpart \eqref{eqAmOptVA}, that is,
	$$\lim_{M\rightarrow\infty} b_M(t,x,y)=v(t,x,v).$$
\end{proposition}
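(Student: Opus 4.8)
The plan is to prove the two inequalities $\limsup_{M\to\infty} b_M(t,x,y)\le v(t,x,y)$ and $\liminf_{M\to\infty} b_M(t,x,y)\ge v(t,x,y)$ separately. The first is immediate: every stopping time taking values in $\mathcal{H}_M$ and dominating $t$ also belongs to $\mathcal{T}_{t,T}$, so the supremum defining $b_M$ in \eqref{eqBermVA} is taken over a subset of the admissible times in \eqref{eqAmOptVA}; hence $b_M(t,x,y)\le v(t,x,y)$ for every $M$, which already gives the $\limsup$ bound. The whole difficulty lies in the reverse inequality, which I would establish by approximating an arbitrary $\tau\in\mathcal{T}_{t,T}$ by grid-valued stopping times.

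Given $\tau\in\mathcal{T}_{t,T}$, I would define $\tau_M:=\min\{s\in\mathcal{H}_M:s\ge\tau\}$, the round-up of $\tau$ to the next monitoring date. Since $\{\tau_M\le t_z\}=\{\tau\le t_z\}\in\mathcal{F}_{t_z}$ at each grid point and $\mathbb{F}$ is right-continuous, $\tau_M$ is a stopping time; moreover $\tau\le\tau_M\le\tau+\Delta_M$ and $\tau_M\ge t$, so $\tau_M$ is admissible in \eqref{eqBermVA} and $\tau_M\downarrow\tau$ as $M\to\infty$. Writing $J(\sigma):=e^{-r(\sigma-t)}\varphi(\sigma,F_\sigma,V_\sigma)$, I would first show $J(\tau_M)\to J(\tau)$ almost surely. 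On $\{\tau=T\}$ one has $\tau_M=T$ for all $M$, so the identity is exact; on $\{\tau<T\}$, for $M$ large enough that $\Delta_M<T-\tau$ we have $\tau_M<T$, whence $\varphi(\tau_M,F_{\tau_M},V_{\tau_M})=g(\tau_M,V_{\tau_M})F_{\tau_M}$, and the joint continuity of $g$, the continuity of the paths of $F$ and $V$, together with $\tau_M\to\tau$, give convergence to $g(\tau,V_\tau)F_\tau=\varphi(\tau,F_\tau,V_\tau)$. This is precisely where the discontinuity of $\varphi$ at $T$ noted in Remark \ref{rmkVarphiDiscoun} is neutralised: rounding up (rather than down) keeps a path that stops strictly before $T$ away from the jump for all large $M$.

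To pass from almost-sure convergence to convergence of expectations I would argue uniform integrability of $\{J(\tau_M)\}_M$ rather than invoke a single dominating function, since none is available because $\varphi$ is unbounded. Using $\varphi(\tau_M,F_{\tau_M},V_{\tau_M})\le\max(G,F_{\tau_M})\le G+F_{\tau_M}$ and the nonnegativity of the fee rate, one has
\[
0\le J(\tau_M)\le G+e^{rt}e^{-r\tau_M}F_{\tau_M}\le G+e^{rt}\,e^{-r\tau_M}S_{\tau_M}.
\]
Under the model assumptions (Remark \ref{rmkMGmeasureExistence} and Table \ref{tblExSVmodels}), $\{e^{-rs}S_s\}_{0\le s\le T}$ is a true martingale, hence closed by $e^{-rT}S_T\in L^1$, and by optional sampling $e^{-r\tau_M}S_{\tau_M}=\e[e^{-rT}S_T\mid\mathcal{F}_{\tau_M}]$; being conditional expectations of a fixed integrable random variable, the family $\{e^{-r\tau_M}S_{\tau_M}\}_M$ is uniformly integrable. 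Domination by a uniformly integrable family, up to the additive constant $G$ and the fixed factor $e^{rt}$, then makes $\{J(\tau_M)\}_M$ uniformly integrable as well.

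Combining almost-sure convergence with uniform integrability yields $\e_{t,x,y}[J(\tau_M)]\to\e_{t,x,y}[J(\tau)]$. Because $b_M(t,x,y)\ge\e_{t,x,y}[J(\tau_M)]$ for every $M$, taking the lower limit gives $\liminf_M b_M(t,x,y)\ge\e_{t,x,y}[J(\tau)]$, and taking the supremum over $\tau\in\mathcal{T}_{t,T}$ gives $\liminf_M b_M(t,x,y)\ge v(t,x,y)$; together with the upper bound this proves $\lim_M b_M=v$. I expect the main obstacle to be the interplay between the discontinuity of the payoff at maturity and the unboundedness of the reward: the former forces the round-up construction and the split on $\{\tau<T\}$ versus $\{\tau=T\}$, while the latter rules out dominated convergence and makes the martingale-based uniform-integrability argument the technical heart of the proof.
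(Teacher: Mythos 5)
Your proof is correct, but it follows a genuinely different route from the paper's. The paper first invokes a lemma of Schweizer (Proposition 3 of that reference, restated as Lemma \ref{lemJequal}) to rewrite the Bermudan value $b_M$ as the value of an \emph{American} problem for the modified reward $\tilde{Z}^{(M)}_s = Z_s\ind_{\{s\in\mathcal{H}_M\}}$, i.e. the original discounted reward killed off the monitoring grid; it then exchanges the supremum over stopping times with the supremum over $M$ and concludes by monotone convergence, using only the nonnegativity of $Z$ and requiring neither path continuity nor any integrability or uniform-integrability input. You instead prove the two inequalities directly: the easy inclusion $\mathcal{T}_{\Delta_M}\cap\{\tau\ge t\}\subseteq\mathcal{T}_{t,T}$ for the upper bound, and for the lower bound the classical round-up $\tau_M=\min\{s\in\mathcal{H}_M: s\ge\tau\}$, with almost-sure convergence of the discounted reward (splitting on $\{\tau<T\}$ versus $\{\tau=T\}$ to sidestep the terminal discontinuity of $\varphi$ noted in Remark \ref{rmkVarphiDiscoun}) upgraded to $L^1$ convergence via uniform integrability of $\{e^{-r\tau_M}S_{\tau_M}\}_M$ through optional sampling. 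Each approach has its trade-offs. Yours is heavier: it needs continuity of the paths of $(F,V)$ and of $g$, and it genuinely requires $\{e^{-rs}S_s\}$ to be a \emph{true} martingale (for a strict local martingale the optional-sampling identity degrades to an inequality in the wrong direction for domination), a hypothesis you correctly tie to the conditions of Table \ref{tblExSVmodels}. In exchange, your argument is self-contained and robust to the fact that the grids $\mathcal{H}_M=\{zT/M\}$ are not nested: the paper's step ``$\tilde{Z}^{(M)}_t\uparrow Z_t$ pointwise'' implicitly presumes nested grids (or a separate density argument for stopping times valued in $\bigcup_M\mathcal{H}_M$), whereas your round-up construction needs only $\tau_M-\tau\le\Delta_M\to 0$ and makes explicit exactly where the discontinuity of the payoff at $T$ is neutralised.
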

The proof can be found Appendix \ref{Proof-4.5}.

\begin{remark}
	Using a Bermudan option with a large number of monitoring times to approximate the value of an American option is common in finance; see, for example, \cite{amin1994convergence} and \cite{lamberton2002brownian}. 
	In fact, since the value of an American option generally does not have closed-form expression, such a time-discretisation is often necessary to implement many numerical techniques. 
	\\
	In the practical context of variables annuities, surrenders are often only allowed at specific times (such as on the policy anniversary dates). In that case, the Bermudan contract is thus more realistic than its American counterpart in \eqref{eqAmOptVA}. Based on this idea, some authors propose numerical procedures which allow for surrenders only at specific points in time, see \cite{moenig2018lapse} and \cite{bernard2019less}. 
\end{remark}
We denote the Bermudan contract value process by $B:=\{B_z:=b_M(t_z,F_{t_z},V_{t_z})\}_{0\leq z\leq M}$. From the principle of dynamic programming (see for example \cite{lamberton1998}, Theorem 10.1.3), it is well-known that the discretized problem admits the following representation:
\begin{equation}
\left\{	\begin{array}{lll}
	   B_M&=\varphi(T,F_T, V_T) & \\
	   B_z& =\max\left(\varphi(t_z,F_{t_z}, V_{t_z}),e^{-r\Delta_M}\e_{t_z}[B_{z+1}]\right),&\quad{0\leq z\leq M-1}.
	\end{array}\right.
\end{equation}
Using CTMCs, we can define an approximation for the time-$t$ risk-neutral value of the Bermudan contract by
\begin{eqnarray*}
	b_M^{(m,N)}(t,x,y)& =& \sup_{\tau \in\mathcal{T}_{\Delta_M},\tau\geq t}\e\left[e^{-r\tau}\varphi(\tau,\FmN_{\tau},\Vm_{\tau})|\FmN_t=x, \Vm_t=y\right].\\ 
\end{eqnarray*}
The approximation of the Bermudan contract value  process, denoted by\newline $B^{(m,N)}:=\{B_z^{(m,N)}:=b_M^{(m,N)}(t_z,\FmN_{t_z}, \Vm_{t_z})\}_{0\leq z\leq M}$, is thus given by 
 \begin{equation}
 	\left\{	\begin{array}{lll}
 		B_M^{(m,N)}&=\varphi\left(T,\FmN_T, \Vm_T\right) & \\
 		B_z^{(m,N)}& =\max\left(\varphi\left(t_z,\FmN_{t_z}, \Vm_{t_z}\right) ,e^{-r\Delta_M}\e_{t_z}[B_{z+1}^{(m,N)}]\right),&\quad{0\leq z\leq M-1},
 	\end{array}\right.\label{Chap2eqBermCTMC}
 \end{equation}
which can also be written as
\begin{equation}
	\left\{	\begin{array}{lll}
		B_M^{(m,N)}&=\max\left(G,\FmN_T\right) & \\
		B_z^{(m,N)}& =\max\left(g(t,\Vm_t)\FmN_{t_z} ,e^{-r\Delta_M}\e_{t_z}[B_{z+1}^{(m,N)}]\right),&\quad{0\leq z\leq M-1}.
	\end{array}\right.\label{Chap2eqBermCTMC2}
\end{equation}
Or equivalently in terms of the process $Y^{(m,N)}$, we have that
\begin{equation}
	\left\{	\begin{array}{lll}
		B_M^{(m,N)}&=\varphi(T,\psi^{-1}(Y_T^{(m,N)})) & \\
		B_z^{(m,N)}& =\max\left(\tilde{\varphi}(t_z,\psi^{-1}(Y_{t_z}^{(m,N)})) ,e^{-r\Delta_M}\e_{t_z}[B_{z+1}^{(m,N)}]\right),&\quad{0\leq z\leq M-1},
	\end{array}\right.\label{Chap2eqBermCTMC3}
\end{equation}
where $\psi^{-1}$ is defined in Proposition \ref{propCTMC_XtoY}\footnote{Recall that $\psi^{-1}(n_y)=(x_n,v_l)$ for $n_y\in\mathcal{S}_Y^{(m,N)}$, where $n\leq N$ is the unique integer such that $n_y=(l-1)N+n$  for some $l\in\{1,2,\ldots,m\}$.}. 
\\
\\
Finally, we have $b_M(0,F_0,V_0)=B_0\approx B^{(m,N)}_0=b^{(m,N)}_M(0, F_0, V_0)$.\\
\\
Based on the above, an approximation for the value of the Bermudan contract can be obtained as described in the proposition below.
\begin{proposition}\label{propVAvalue_surreder}
	Let $F_0>0$ , $V_0\in\mathcal{S}_V$ and $\mathbf{G}^{(m,N)}$ be the generator defined in \eqref{Chap2eqQ_Y}. The risk-neutral value of a variable annuity with maturity $T>0$ and guaranteed amount $G>0$ can be approximated recursively by
    \begin{equation}
	\begin{array}{lll}
		\mathbf{B}^{(m,N)}_M & =\mathbf{H}^{(1)}, & \quad\\
		\mathbf{B}^{(m,N)}_z & = \max\{\mathbf{H}_z^{(2)}, e^{-r\Delta_M}\exp\{\Delta_M\mathbf{G}^{(m,N)}\}\mathbf{B}^{(m,N)}_{z+1}\} &0\leq z\leq M-1,
	\end{array}\label{eqAMoption2}
\end{equation}
for $M \in \mathbb N$ sufficiently large and where the maximum is taken element by element (also known as the parallel maxima). $\mathbf{H}^{(1)}$ and  $\mathbf{H}_z^{(2)}$, $z=0,1,\ldots,M-1$ are column vectors of size $mN\times 1$ whose $(l-1)N+n$-th entries, $h_{(l-1)N+n}^{(1)}$ and $h_{z,(l-1)N+n}^{(2)}$, are respectively given by 
\begin{equation}
\begin{array}{ll}
    h_{(l-1)N+n}^{(1)} & =\max(G,e^{x_n+\rho \gamma(v_l)}),\textrm{ and}\\ 
    h_{z,(l-1)N+n}^{(2)} & =g(t_z,v_l)e^{x_n+\rho \gamma( v_l)},
\end{array}\label{Chap2eqH}
\end{equation}
$1\leq l\leq m$ and $1\leq n\leq N$.\\
\\
Specifically, given $X_0^{(m,N)}=x_i=\ln(F_0)-\rho \gamma(V_0)$ and $V_0^{(m)}=V_0=v_k$, the approximated value of the Bermudan contract is given by
$$b^{(m,N)}_M(0,F_0,V_0)=\mathbf{e}_{ik}\mathbf{B}^{(m,N)}_0.$$
\end{proposition}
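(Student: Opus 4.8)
The plan is to translate the backward dynamic programming recursion for the Bermudan contract---already established in \eqref{Chap2eqBermCTMC} and \eqref{Chap2eqBermCTMC3}---into the matrix form \eqref{eqAMoption2} by stacking the node values into a column vector and identifying the one-step conditional expectation with left-multiplication by the transition probability matrix of the chain $Y^{(m,N)}$.

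First I would fix the indexing induced by the bijection $\psi$ of Proposition \ref{propCTMC_XtoY}: let $\mathbf{B}_z^{(m,N)}$ be the column vector of size $mN$ whose $(l-1)N+n$-th entry is the value function $b_M^{(m,N)}(t_z,\cdot,\cdot)$ evaluated at the state $\psi^{-1}((l-1)N+n)=(x_n,v_l)$, i.e.\ at $\FmN=e^{x_n+\rho\gamma(v_l)}$ and $\Vm=v_l$. With this convention the terminal condition $B_M^{(m,N)}=\varphi(T,\FmN_T,\Vm_T)$ becomes $\mathbf{B}_M^{(m,N)}=\mathbf{H}^{(1)}$, since $\varphi(T,e^{x_n+\rho\gamma(v_l)},v_l)=\max(G,e^{x_n+\rho\gamma(v_l)})=h^{(1)}_{(l-1)N+n}$. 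Likewise, for $t_z<T$ the surrender reward $\varphi(t_z,\FmN_{t_z},\Vm_{t_z})=g(t_z,\Vm_{t_z})\FmN_{t_z}$ evaluated at $(x_n,v_l)$ equals $g(t_z,v_l)e^{x_n+\rho\gamma(v_l)}=h^{(2)}_{z,(l-1)N+n}$, so the surrender payoff vector is exactly $\mathbf{H}_z^{(2)}$.

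The core step is the continuation value. Because the fee function is time-independent (Assumption \ref{assumpFeeFct}), the chain $Y^{(m,N)}$ is time-homogeneous, so a single transition matrix $\mathbf{P}(\Delta_M)$ governs every step of length $\Delta_M$. Writing $n_y=(k-1)N+i$, the random variable $B_{z+1}^{(m,N)}$ takes the value $[\mathbf{B}_{z+1}^{(m,N)}]_j$ on $\{Y_{t_{z+1}}^{(m,N)}=j\}$, whence the Markov property gives
\[
\e_{t_z}\!\left[B_{z+1}^{(m,N)} \mid Y_{t_z}^{(m,N)}=n_y\right]=\sum_{j=1}^{mN}\prob{P}\!\left(Y_{t_{z+1}}^{(m,N)}=j \mid Y_{t_z}^{(m,N)}=n_y\right)\left[\mathbf{B}_{z+1}^{(m,N)}\right]_j=\left[\mathbf{P}(\Delta_M)\,\mathbf{B}_{z+1}^{(m,N)}\right]_{n_y}.
\]
By the matrix-exponential representation \eqref{Chap2EqP} one has $\mathbf{P}(\Delta_M)=\exp\{\Delta_M\mathbf{G}^{(m,N)}\}$ with $\mathbf{G}^{(m,N)}$ as in \eqref{Chap2eqQ_Y}; multiplying by $e^{-r\Delta_M}$ and taking entrywise maxima against $\mathbf{H}_z^{(2)}$ reproduces \eqref{eqAMoption2} exactly. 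All expectations are finite since the state space is finite. Evaluating at the initial node $\psi(x_i,v_k)=(k-1)N+i$ and extracting this coordinate through $\mathbf{e}_{ik}$ yields $b_M^{(m,N)}(0,F_0,V_0)=\mathbf{e}_{ik}\mathbf{B}_0^{(m,N)}$.

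The only genuinely delicate point is the middle identity: it rests on the homogeneity of the chain (so that one matrix $\exp\{\Delta_M\mathbf{G}^{(m,N)}\}$ serves for all steps) together with Proposition \ref{propCTMC_XtoY}, which ensures that evaluating $B_{z+1}^{(m,N)}$ along the two-dimensional regime-switching chain $(\XmN,\Vm)$ coincides with evaluating it along the one-dimensional chain $Y^{(m,N)}$ via $\psi^{-1}$. Everything else reduces to careful bookkeeping of the row/column ordering.
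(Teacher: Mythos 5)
Your proposal is correct and follows essentially the same route as the paper: the paper treats \eqref{eqAMoption2} as the direct matrix transcription of the dynamic programming recursion \eqref{Chap2eqBermCTMC}--\eqref{Chap2eqBermCTMC3}, identifying $\mathbf{H}^{(1)}$ and $\mathbf{H}_z^{(2)}$ with the reward evaluated at each state of $\mathcal{S}_Y^{(m,N)}$ via $\psi^{-1}$, and the continuation value $\e_{t_z}[B_{z+1}^{(m,N)}]$ with left-multiplication by $\exp\{\Delta_M\mathbf{G}^{(m,N)}\}$ exactly as in the argument for Proposition \ref{propVAwoSurrenders}. Your explicit write-out of the one-step conditional expectation and the role of time-homogeneity is just a more detailed version of the same bookkeeping.
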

Hence, based on the last proposition, Algorithm \ref{algoVAsurrenders} below provides a CTMC approximation for the value of a variable annuity contract (including early surrenders).

\begin{algorithm}
	\caption{Variable Annuity with Early Surrenders via CTMC Approximation}
	\label{algoVAsurrenders}
	\DontPrintSemicolon
	\KwInput{Initialize $\mathbf{G}^{(m,N)}$ as in \eqref{Chap2eqQ_Y}, $\mathbf{H}^{(1)}$ and $\mathbf{H}^{(2)}_z$, for $z=0,1,\ldots,M-1$, as in \eqref{Chap2eqH}\;
	$M\in\mathbb{N}$, the number of time steps, \;
	$\Delta_M\leftarrow T/M$, the size of a time step}
	Set $\mathbf{B}_M^{(m,N)}\leftarrow\mathbf{H}^{(1)}$ and $\mathbf{A}_{\Delta_M}\leftarrow\exp\{\Delta_M \mathbf{G}^{(m,N)}\}e^{-r\Delta_M}$\;
	\For{$z=M-1,M-2,\ldots,0$}
	{$\mathbf{B}_z^{(m,N)}\leftarrow\max\left\{\mathbf{H}_z^{(2)}, \mathbf{A}_{\Delta_M}\mathbf{B}_{z+1}^{(m,N)}\right\}$}
	$b_M^{(m,N)}(0,F_0, V_0)\leftarrow\mathbf{e}_{ik} \mathbf{B}_0^{(m,N)}$\;
	\KwRet $b_M^{(m,N)}(0,F_0, V_0)$\;
\end{algorithm}

\begin{remark}[Convergence of VA prices with early surrenders]\label{remarkVAwithSurrendersConv}
	 Recall, from Remark \ref{remakWeakConvergenceCTMC} that $\FmN \Rightarrow F$ as $m,N\rightarrow\infty$.  The convergence of the price of the Bermudan contract written on $F^{(m,N)}$ to the price of the Bermudan contract written on $F$, that is $B_0^{(m,N)}\rightarrow B_0$ as $m,N\rightarrow \infty$, follows from \cite{song2013weak}, Theorem 9, and the results of \cite{palczewski2010finite}, Theorem 3.5, on the alternative continuous reward representation of the  value function $v$\footnote{Many convergence results, such as the one in \cite{song2013weak} and \cite{palczewski2010finite}, require the reward function to be bounded. However, as mentioned in \cite{mijatovic2013continuously}, Remark 5.4 and \cite{cui2018general}, Remark 5, the original payoff $\varphi$ can be replaced by the truncated payoff $\varphi\wedge L$ with a constant $L$ sufficiently large without altering the accuracy of the numerical results.}. Finally, the convergence of the price of the Bermudan contract to its American counterpart as $M$ goes to infinity follows from Proposition \ref{propBermAmConv}.\\
	To our knowledge, detailed error and convergence analysis for the two-layers CTMC approximation of early-exercise options have not yet been performed in the literature. However, \cite{cui2018general} demonstrate the accuracy of the approximation numerically in the context of American put option pricing.
\end{remark}

\subsubsection{Fast Algorithms}
Algorithm \ref{algoVAsurrenders}, although theoretically correct, might stretch computing resources to unacceptable levels as $m,N$ increase because of the size of the exponent $\mathbf{G}^{(m,N)}$ in the calculation of the matrix exponential (even if the matrix exponential is calculated only once at the beginning of the procedure). 
However, similarly to the no surrender case, the efficiency of Algorithm \ref{algoVAsurrenders} can be improved significantly by using an approximation based on the assumption that the variance process remains constant over small time periods (see Proposition \ref{propExpectionApprox}). 
Indeed, Algorithm \ref{algoVA_WOsurrendersFast} can be easily adapted to the valuation of the Bermudan contract since the continuation value in Algorithm \ref{algoVAsurrenders} is already calculated on a small time interval (thus, it is not necessary to use the tower property of conditional expectations in order to apply the fast methodology.).\\
\\
Recall that $\mathbf{Q}^{(m)}$ is the generator of $\Vm$, defined in \eqref{eqQ}, and $\mathbf{G}^{(N)}_j$ is the generator of $X^{(m,N)}$ given the variance process is in state $v_j$, defined in \eqref{eqQk}. Moreover, let $\bm{\varphi}(t):=[\varphi(t,e^{x_n+\rho \gamma(v_j)}, v_j)]_{j,n=1}^{m,N}$ be a $m\times N$ matrix representing the payoff at time $t$ for each state in $\mathcal{S}_V^{(m)}\times\mathcal{S}_X^{(N)}$, and $\varphi_{j,*}(t)$ be the $j$-th row of  $\bm{\varphi}(t)$. 
We denote the matrix (vector) transpose operation by $\top$.\\
The Fast Algorithm to value VA with surrender rights is given in Algorithm \ref{algoVA_WITHsurrendersFast}.
\begin{algorithm}[h!]
	\caption{Variable Annuity with Early Surrenders via CTMC Approximation - Fast Algorithm}
	\label{algoVA_WITHsurrendersFast}
	\DontPrintSemicolon
		\KwInput{Initialize $\mathbf{Q}^{(m)}$ as in \eqref{eqQ} and $\mathbf{G}^{(N)}_j$ for $j=1,2,\ldots,m$, as in \eqref{eqQk}\;
		$M\in\mathbb{N}$, the number of time steps, \;
		$\Delta_M\leftarrow T/M$, the size of a time step}
	Set $\bm{\varphi}(t_z)\leftarrow[\varphi(t_z,e^{x_n+\rho \gamma(v_j)}, v_j)]_{j,n=1}^{m,N}$ for $z=0,1,\ldots,M$\;
	Set $\mathbf{B}_{j,*}\leftarrow\bm{\varphi}_{j,*}(t_M)$ for $j=1,2,\ldots,m$\;
	\tcc{Calculate the transition probability matrices}
	
	\For{$j=1,2,\ldots, m$}{\label{algoFastline3}
	\tcc{Transition probability  matrix of $\XmN$ given $\Vm=v_j$ over a period of length $\Delta_M$}
		$\mathbf{P}^X_j\leftarrow e^{\mathbf{G}_{j}^{(N)} \Delta_M}$}
	\tcc{Transition probability matrix of $\Vm$ over a period of length $\Delta_M$}
	$\mathbf{P}^V\leftarrow e^{\mathbf{Q}^{(m)} \Delta_M}$\label{algoFastline5}\;
	\tcc{VA valuation}
	\For{$z=M-1,\ldots,0$}{\label{algoFastline6}
		\For{$j=1,2,\ldots, m$}
		{$\mathbf{E}_{*,j}\leftarrow \mathbf{P}^X_j\mathbf{B}_{j,*}^{\top}$}
		\For{$n=1,2,\ldots, N$}{\label{algoFastline9}
		$\mathbf{B}_{*,n}\leftarrow \mathbf{P}^V\mathbf{E}_{n,*}^{\top}$}\label{algoFastline10}
		$\mathbf{B}=\max(\mathbf{B},\bm{\varphi}(t_z))$\label{lineAlgoFast10}
	}	
	\KwRet $b_{ki}$\;
\end{algorithm}

The Fast Algorithms to price VA contracts with and without early surrenders are very similar. 
The only difference is the additional line \ref{lineAlgoFast10} in Algorithm \ref{algoVA_WITHsurrendersFast}. 
In fact, at a given time $t_z$ (that is, we fix one $z$ in the loop line \ref{algoFastline6} to \ref{lineAlgoFast10}), we can observe that at the end of the inner loop (line \ref{algoFastline9} and \ref{algoFastline10}) , the matrix $\mathbf{B}$ contains the continuation value of the Bermudan contract at $t_z$. 
Since Bermudan contracts can be surrendered at any time in $\mathcal{H}_M$, we simply need to calculate the maximum between the continuation value and the payoff at $t_z$ to obtain the value of the Bermudan contract at that time (line \ref{lineAlgoFast10}). 
Therefore, the only difference between the Fast Algorithms for VA pricing with and without early surrenders stems from the fact that the latter contract cannot be surrendered prior to maturity, and thus,  only the continuation value needs to be calculated at each time step (that is, line \ref{lineAlgoFast10} is not used to price VA contracts without early surrenders).\\
\\
The computational effort in Algorithms \ref{algoVA_WOsurrendersFast} and  \ref{algoVA_WITHsurrendersFast} resides in the calculation of the matrix exponentials (line \ref{algoFastline3} to line \ref{algoFastline5}). Hence, once they are (pre-)computed, one can price variable annuity contracts with and without surrender rights simultaneously at almost no additional cost. This also holds true for any other VA contracts with different guaranteed amount, that is, a large variety of contracts with different guarantee structure and surrender rights can be priced simultaneously for almost the same computational effort as a single contract. 
Numerical experiments below demonstrate the efficiency and the accuracy of the Fast Algorithm.

\subsection{Optimal Surrender Surface}\label{sectOptSurrSurf}
In this section, we provide an algorithm to approximate the optimal surrender strategy for variable annuities with a general fee structure depending on the fund value and the variance process.  Policyholder behaviour may significantly impact pricing and hedging of variable annuities, \cite{kling2014impact}. Thus, analyzing optimal surrender behaviour is crucial for insurers when developing risk management strategies for variable annuities, \cite{niittuinpera2020IAA}, \cite{bauer2017policyholder}. Optimal surrender strategies have been studied in the literature in different contexts, see for instance \cite{Mackay2014}, \cite{bernard2014optimal}, \cite{BernardMackay2015}, \cite{shen2016valuation} and \cite{kang2018optimal}.\\
\\
The objective is now to approximate the (optimal) surrender surface using the CTMC approximation. To this end, we first introduce some additional definitions and notations.\\

\begin{defi}\label{def:surrender_region}
	Let $E=[0,T]\times \reals_+\times \mathcal{S}_V$. The continuation region $\mathcal{C}\subset E$ is defined as
	$$
	\mathcal{C}=\lbrace (t,x,y)\in E:v(t,x,y)>\varphi(t,x,y)\rbrace,
	$$
	and the surrender region $\mathcal{D}\subseteq E$, as
	$$
	\mathcal{D}=\lbrace (t,x,y)\in E:v(t,x,y)=\varphi(t,x,y)\rbrace.
	$$
\end{defi} 

\begin{remark}
    It follows from Definition \ref{def:surrender_region} that $E=\mathcal{C}\cup\mathcal{D}$, since $v(t,x,y)\geq\varphi(t,x,y)$ for all $(t,x,y)\in E$.
\end{remark}

 If the value function $v$ is continuous, then $\mathcal{C}$ is an open set and $\mathcal{D}$ is closed; and the optimal surrender surface is the boundary $\partial\Cr$ of $\Cr$. \\
 \\
Definition \ref{def:surrender_region} provides a simple way of approximating the optimal surrender surface via CTMC approximation. 
To do so, denote by $f_{nl}$ the approximated fund process associated to $(x_n, v_l)\in\mathcal{S}^{(N)}_X \times\mathcal{S}_V^{(m)}$ such that $f_{nl}=e^{x_n+\rho \gamma(v_l)}$ and let $\mathcal{S}^{(m,N)}_F=\{f_{nl}\}_{n,l=1}^{N,m}$ be the state-space of $F^{(m,N)}$. 
We also denote by $\mathcal{S}^{(m,N)}_{F,l}=\{f_{nl}\}_{n=1}^{N}$ the $l$-section of $\mathcal{S}^{(m,N)}_F$.\\
\\
Moreover, let $\tilde{\mathcal{H}}_M=\{t_0,t_1,\ldots,t_{m-1}\}$, that is $\tilde{\mathcal{H}}_M=\mathcal{H}_M\setminus t_M$, and $\tilde{E}_l=\tilde{\mathcal{H}}_M\times\mathcal{S}^{(m,N)}_{F,l} \times\{v_l\}$.\\
\\
Using the previous definitions, the $l$-section, $l\in\{1,2,\ldots,m\}$, of the continuation and the surrender regions can be approximated using the CTMC processes via
$$
\mathcal{C}^{(m,N)}_l=\left\lbrace (t_z,f_{nl},v_l) \in \tilde{E}_l\Big|b^{(m,N)}_M(t_z,f_{nl},v_l)>g(t_z,v_l) f_{nl}\right\rbrace,
$$
and 
$$
\mathcal{D}^{(m,N)}_l=\left\lbrace  (t_z,f_{nl},v_l)\in \tilde{E}_l\Big| b^{(m,N)}_M(t_z,f_{nl},v_l)=g(t_z,v_l) f_{nl}\right\rbrace\cup\{T\}\times\mathcal{S}^{(m,N)}_{F,l}\times\{v_l\},
$$
respectively. Hence, the approximated continuation and surrender regions are given by
$$\mathcal{C}^{(m,N)}=\cup_{l=1}^m \mathcal{C}^{(m,N)}_l,\quad\textrm{and}\quad\mathcal{D}^{(m,N)}=\cup_{l=1}^m \mathcal{D}^{(m,N)}_l.$$
We use the notation of Proposition \ref{propVAvalue_surreder}, with  $b_{z,(l-1)N+n}$ denoting the $(l-1)N+n$-th entry of $\mathbf{B}_z^{(m,N)}$. 
For $(t_z,f_{nl}, v_l)$, $0\leq  z\leq M-1$, $1\leq n\leq N$ and $1\leq l\leq m$,  we set $(t_z,f_{nl},v_l)\in \mathcal{C}^{(m,N)}$ if $b_{z,(l-1)N+n}>g(t_z,v_l)f_{nl}$, and  $(t_z,f_{nl},v_l)\in \mathcal{D}^{(m,N)}$ otherwise. 
The approximated optimal surrender surface can then be obtained by analyzing the shape of $\mathcal{C}^{(m,N)}$.\\
\\
We are now interested in studying the shape of the surrender region. To do so, we fix $t\in[0,T)$ and  $y\in\mathcal{S}_V$ and consider the set of points $\mathcal{D}_{t,y}\subseteq \reals_+$ for which it is optimal to surrender the contract. More precisely, we define $\mathcal{D}_{t,y}$ by

$$\mathcal{D}_{t,y}=\left\{f\in \reals_+\Big| (t,f,y)\in\mathcal{D}\right\}$$

Suppose that, for all couples $(t,y)\in[0,T)\times\mathcal{S}_V$, the set $\mathcal{D}_{t,y}$ is of the form
$[f^\star(t,y),\infty)$ for some $f^\star(t,y)\in\reals_+$. That is, $f^\star(t,y)$ is the smallest fund value for which it is optimal to surrender the contract at time $t$ for a volatility level $y$, and for any fund value greater than $f^\star(t,y)$, it is also optimal to surrender. Mathematically, this  may be expressed by
\begin{align}
f^\star(t,y) := \inf\left\{f\in\reals_+\Big|f\in \mathcal{D}_{t,y}\right\}=\inf\{ \mathcal{D}_{t,y}\}.
\label{eqfstar}
\end{align}
Under this assumption, the continuation and the surrender regions can be expressed as 
$$
\mathcal{C}=\left\lbrace (t,f,y) \in E \Big|f< f^\star(t,y)\right\rbrace,
$$
and
$$
\mathcal{D}=\left\lbrace (t,f,y) \in E\Big|f\geq f^{*}(t,y)\right\rbrace\cup\{T\}\times\reals_+\times\mathcal{S}_V,
$$
respectively.\\
\\
Hence, under this assumption, the optimal surrender surface $f^{*}$ splits $E$ in two regions: at or above the surface is the surrender region, and below, the continuation region. 
That is, the set $\mathcal{D}_{t,y}$ is connected\footnote{A set $X$ is connected if it cannot be divided into two disjoint non-empty open sets.}. 
In this paper, we say that the surrender region is of ``threshold type'' if for any $(t,y)\in[0,T)\times\mathcal{S}_V$, the set $\mathcal{D}_{t,y}$ is connected.
There is a financial interpretation for such a form for the surrender region. 
As explained in \cite{milevsky2001real}, it is advantageous for the policyholder to hold on to the contract when the fund value is low since there is a higher chance that the guarantee will be triggered at maturity.
\begin{remark}
	The surrender region can take any shape; see for examples \cite{Mackay2017} Figure 4 and Figure 5. However, for specific fee and surrender charge structures, it can be shown that the surrender region is of threshold type, see for instance \cite{Mackay2014
	}, Appendix 2.A, when the index value process is modelled by a geometric Brownian motion. Other authors take this form for the surrender region as an initial assumption, see for example \cite{kang2018optimal}. In the context of financial derivatives pricing, \cite{jacka1991optimal}, Proposition 2.1.3, shows that the continuation region of American put options has a threshold type shape under the Black-Scholes setting whereas \cite{touzi1999american}, Section 2, proves it for some stochastic volatility models, and \cite{de2019lipschitz}, Proposition 4.1, in a very general setting.
\end{remark}

When the surrender region is of threshold type, a simple algorithm can be developed to approximate the optimal surrender surface. 
The idea is based on the definition of $f^\star(t,y)$ in \eqref{eqfstar} above: for each $t_z\in\mathcal{H}_M$ and $v_l\in\mathcal{S}_V^{(m,N)}$, we identify the smallest fund value $f^{(m,N)}(t_z,v_l)$ for which it is optimal to surrender. 
Algorithms \ref{algoOptSurrSurf} returns the approximated optimal surrender surface $f^{(m,N)}$ (under the assumption that the surrender region is of threshold type) and the approximated value of a variable annuity with early surrenders given  $X^{(m,N)}_0=x_i=\ln(F_0)-\rho \gamma(V_0)$ and $V_0^{(m)}=V_0=v_k$.

\begin{algorithm}[h!]
\caption{Optimal Surrender Surface (of threshold type) via CTMC Approximation}
\label{algoOptSurrSurf}
\DontPrintSemicolon
\KwInput{Initialize $\mathbf{G}^{(m,N)}$ as in \eqref{Chap2eqQ_Y}, $\mathbf{H}^{(1)}$ and $\mathbf{H}^{(2)}_z$, for $z=0,1,\ldots,M-1$, as in \eqref{Chap2eqH}\;
	$M\in\mathbb{N}$, the number of time steps, \;
	$\Delta_M\leftarrow T/M$, the size of a time step}
Set $\mathbf{B}_M^{(m,N)}\leftarrow\mathbf{H}^{(1)}$ and $\mathbf{A}_{\Delta_M}\leftarrow\exp\{\Delta_M \mathbf{G}^{(m,N)}\}e^{-r\Delta_M}$\;
	\For{$z=M-1,M-2,\ldots,0$}
	    {$\mathbf{B}_z^{(m,N)}\leftarrow\max\left\{\mathbf{H}_z^{(2)}, \mathbf{A}_{\Delta_M}\mathbf{B}_{z+1}^{(m,N)}\right\}$ \;
	    \For{$l=1,2\ldots,m$}
	    {$n\leftarrow 1$\;
	    \While{ $\left(b_{z,(l-1)N+n}>g(t_z,v_l)e^{x_n+\rho \gamma(v_l)}\right)$ and $(n<N)$}
	    {$n\leftarrow n+1$}
        $f^{(m,N)}(t_z,v_l)\leftarrow e^{x_{n}+\rho \gamma( v_l)}$
         }
        }
  $b_M^{(m,N)}(0,F_0, V_0)\leftarrow\mathbf{e}_{ik} \mathbf{B}_0^{(m,N)}$\; 
 \KwRet $f^{(m,N)}$ and $b_M^{(m,N)}(0,F_0, V_0)$\;
\end{algorithm}

We note that the derivation of the optimal surrender surface is not mandatory to obtain the value of the Bermudan contract, as observed from Algorithm \ref{algoVAsurrenders} or Algorithm  \ref{algoVA_WITHsurrendersFast} in the previous subsection.\\
\\
Similarly as above, Algorithm \ref{algoOptSurrSurfFast} is the fast version of Algorithm \ref{algoOptSurrSurf}. Recall that $\mathbf{Q}^{(m)}$ is the generator of $\Vm$ defined in \eqref{eqQ}, $\mathbf{G}^{(N)}_j$ is the generator of $X^{(m,N)}$ given the variance process is in state $v_j$ defined in \eqref{eqQk}, $\bm{\varphi}(t):=[\varphi(t,e^{x_n+\rho \gamma(v_j)})]_{j,n=1}^{m,N}$ is a $m\times N$ matrix representing the payoff at time $t$ for each state in $\mathcal{S}_V^{(m)}\times\mathcal{S}_X^{(N)}$, and $\bm{\varphi}_{j,*}(t)$ (resp $\mathbf{B}_{j,*}$) is the $j$-th row of  $\bm{\varphi}(t)$ (resp. $\mathbf{B}$). We also denote by $b_{jn}$, the $(j,n)$ entry of the matrix $\mathbf{B}$.
\begin{algorithm}[h!]
	\caption{Optimal Surrender Surface (of threshold type) via CTMC Approximation - Fast Algorithm}
	\label{algoOptSurrSurfFast}
	\DontPrintSemicolon
		\KwInput{Initialize $\mathbf{Q}^{(m)}$ as in \eqref{eqQ} and $\mathbf{G}^{(N)}_j$ for $j=1,2,\ldots,m$, as in \eqref{eqQk}\;
		$M\in\mathbb{N}$, the number of time steps, \;
		$\Delta_M\leftarrow T/M$, the size of a time step}
	Set $\bm{\varphi}(t_z)\leftarrow[\varphi(t_z,e^{x_n+\rho \gamma(v_j)}, v_j)]_{j,n=1}^{m,N}$ for $z=0,1,\ldots,M$\;
	Set $\mathbf{B}_{j,*}\leftarrow\bm{\varphi}_{j,*}(t_M)$ for $j=1,2,\ldots,m$\;
	\tcc{Calculate the transition probability matrices}
	
	\For{$j=1,2,\ldots, m$}
	{\tcc{Transition probability matrix of $\XmN$ given $\Vm=v_j$ over a period of length $\Delta_M$}
		$\mathbf{P}^X_j\leftarrow e^{\mathbf{G}_{j}^{(N)} \Delta_M}$}
	\tcc{Transition probability matrix of $\Vm$ over a period of length $\Delta_M$}
	$\mathbf{P}^V\leftarrow e^{\mathbf{Q}^{(m)} \Delta_M}$\;
	\tcc{VA valuation}
	\For{$z=M,M-1,\ldots,0$}{
		\For{$j=1,2,\ldots, m$}
		{$\mathbf{E}_{*,j}\leftarrow \mathbf{P}^X_j\mathbf{B}_{j,*}^{\top}$}
		\For{$n=1,2,\ldots, N$}
		{$\mathbf{B}_{*,n}\leftarrow \mathbf{P}^V\mathbf{E}_{n,*}^{\top}$}
		$\mathbf{B}=\max(\mathbf{B},\bm{\varphi}(t_z))$\;
	    \For{$j=1,2\ldots,m$}
		{$n\leftarrow 1$\;
			\While{ $\left(b_{jn}>g(t_z,v_j)e^{x_n+\rho \gamma(v_j)}\right)$ and $(n<N)$}
			{$n\leftarrow n+1$}
			$f^{(m,N)}(t_z,v_j)\leftarrow e^{x_{n}+\rho \gamma( v_j)}$
		}
	}

	\KwRet $f^{(m,N)}$ and $b_{ki}$\;
\end{algorithm}
\begin{remark}
	Algorithms \ref{algoVAsurrenders} and \ref{algoVA_WITHsurrendersFast} do not require the specification of any particular form for the surrender region, which is not the case for many of the numerical procedures presented in the literature (\cite{bernard2014optimal},\cite{shen2016valuation} and \cite{kang2018optimal}). Hence their scope is more general. 
\end{remark}
The accuracy of the approximated surrender boundary is demonstrated numerically in Appendix \ref{appendixSupplMaterial} (available online as supplemental material).
\subsection{CTMC Approximation of the VIX}\label{subsectionCTMCapproxVIX}
In Section \ref{sectNumEx}, we analyze numerically the impact of various VIX-linked fee structure on the optimal surrender strategy. 
Since analytical formulas for the $\vix$ are not always known for all models listed in Table \ref{tblExSVmodels}, we use a CTMC approach to approximate the value of the volatility index.
This is the case for numerical experiments performed under the 3/2 model whose results are available online in Appendix \ref{appendixSupplMaterial}\footnote{For the 3/2 model, a closed-form expression for the $\vix$ may be found in \cite{carr2007new}, Theorem 4. However, as pointed out by \cite{drimus2012options}, the integral that appears in the analytical formula is difficult to implement and is not suited for fast and accurate numerical methods. For this reason, the CTMC approximation of the VIX is used in the numerical examples under the 3/2 model.\label{footnoteApproxVIX32}}.
In this section, we propose an approximation for the $\vix$ when the variance process is approximated by a CTMC.\\
\\
Let $\left\{\vix_t^2\right\}_{t\geq 0}$ be the process representing the square of the VIX, which is given by
$$\vix_t^2=\e_t\left[\frac{1}{\tau}\int_t^{t+\tau}\sigma^2_S(V_{s})\diff s\right],$$ 
with $\tau=30/365$, see \cite{cui2021valuation} Equation (6) for details. 
\begin{sloppypar}
Recall that $V^{(m)}$ is the CTMC approximation of $V$ taking values on a finite state-space ${\mathcal{S}_V^{(m)}:=\{v_1,v_2,\ldots,v_m\}}$.
\end{sloppypar}
When the variance process is a CTMC, a closed-form expression can be obtained for the value of the volatility index. The CTMC approximation of the VIX, denoted by $\vix^{(m)}=\{\vix_t^{(m)}\}_{t\geq 0}$, is given in the proposition below.
\begin{proposition}\label{propCTMCapproxVIX}
	Given $V_t^{(m)}=v_k$, the square of the VIX index at time-$t$ can be approximated by  
	\begin{eqnarray}
		\left(\vix_{t}^{(m),k}\right)^2&:=&\e\left[\frac{1}{\tau}\int_t^{t+\tau}\sigma^2_S(V^{(m)}_{s})\diff s\big | V_t^{(m)}=v_k\right]\nonumber\\
	&=&\frac{1}{\tau}\int_0^{\tau} \mathbf{e}_k e^{\mathbf{Q}^{(m)} s} \mathbf{H}\diff s,\label{eqVIX2}
	\end{eqnarray}
	where $\tau=30/365$, $\mathbf{Q}^{(m)}$ is the generator of $V^{(m)}$ defined in \eqref{eqQ}, $\mathbf{e}_k$ is a row vector of size $1\times m$ with all entries equal to $0$ except the $k$-th entry and $\mathbf{H}$ is a $m\times 1$ vector whose $j$-th entry $h_j$ is given by $h_j=\sigma_S^2(v_j)$, $j=1,2,\ldots,m$.
\end{proposition}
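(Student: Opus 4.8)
The plan is to reduce the conditional expectation of the time-integral to an integral of conditional expectations via Fubini's theorem, and then to recognise each resulting integrand as a single entry of the matrix-exponential representation of the transition semigroup established in \eqref{Chap2EqP}. The whole argument is essentially bookkeeping, so I would keep the technical overhead to a minimum.

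First I would interchange the conditional expectation and the Lebesgue integral over $[t,t+\tau]$. Because the state-space $\mathcal{S}_V^{(m)}=\{v_1,\ldots,v_m\}$ is finite, the map $s\mapsto\sigma_S^2(V_s^{(m)})$ takes values in the finite set $\{\sigma_S^2(v_1),\ldots,\sigma_S^2(v_m)\}$ and is therefore bounded by $\max_{1\le j\le m}\sigma_S^2(v_j)<\infty$. Integrating this bounded, nonnegative quantity over an interval of finite length $\tau$ and conditioning on $\{V_t^{(m)}=v_k\}$, Tonelli's theorem applies and yields
$$\e\left[\frac{1}{\tau}\int_t^{t+\tau}\sigma^2_S(V^{(m)}_{s})\diff s\,\Big|\,V_t^{(m)}=v_k\right]=\frac{1}{\tau}\int_t^{t+\tau}\e\left[\sigma^2_S(V^{(m)}_{s})\,\big|\,V_t^{(m)}=v_k\right]\diff s.$$

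Second, I would exploit the time-homogeneity of the CTMC $V^{(m)}$. Substituting $u=s-t$ turns the inner conditional expectation into $\e[\sigma_S^2(V_{t+u}^{(m)})\mid V_t^{(m)}=v_k]$, which by the Markov property and time-homogeneity depends only on $u$ and equals $\sum_{j=1}^m \sigma_S^2(v_j)\,p_{kj}(u)$, where $p_{kj}(u)=\prob{P}(V_{t+u}^{(m)}=v_j\mid V_t^{(m)}=v_k)$. By the matrix-exponential representation \eqref{Chap2EqP}, the row $[p_{kj}(u)]_{1\le j\le m}$ is exactly the $k$-th row of $e^{\mathbf{Q}^{(m)}u}$, i.e. the row vector $\mathbf{e}_k e^{\mathbf{Q}^{(m)}u}$. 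Pairing this row with the column vector $\mathbf{H}=[\sigma_S^2(v_1),\ldots,\sigma_S^2(v_m)]^{\top}$ gives $\sum_{j=1}^m \sigma_S^2(v_j)\,p_{kj}(u)=\mathbf{e}_k e^{\mathbf{Q}^{(m)}u}\mathbf{H}$; substituting back and renaming $u$ as $s$ produces the claimed identity.

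I do not expect any step to pose a genuine obstacle: the only point requiring a word of justification is the Fubini/Tonelli interchange, and this is immediate here thanks to the finiteness of the state-space (hence boundedness of $\sigma_S^2$ on $\mathcal{S}_V^{(m)}$) together with the finite integration horizon $\tau$. The identity \eqref{Chap2EqP} then does the remaining work, converting the probabilistic sum over states into the compact matrix product $\mathbf{e}_k e^{\mathbf{Q}^{(m)}s}\mathbf{H}$.
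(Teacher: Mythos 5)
Your proof is correct and follows exactly the route the paper takes: Fubini/Tonelli to interchange the conditional expectation with the time integral, the matrix-exponential representation \eqref{Chap2EqP} to write $\e[\sigma_S^2(V_s^{(m)})\mid V_t^{(m)}=v_k]=\mathbf{e}_k e^{\mathbf{Q}^{(m)}(s-t)}\mathbf{H}$, and the change of variables justified by time-homogeneity. Your added remark that the interchange is licensed by the finiteness of $\mathcal{S}_V^{(m)}$ is a welcome precision the paper leaves implicit.
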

The proof is a direct consequence of Fubini's Theorem and the CTMC representation for conditional expectations.
\begin{remark} Because of the time-homogeneity of $V^{(m)}$ the approximation does not depend on $t$ and thus, it needs to be calculated only once for all $t\geq 0$.
\end{remark}
The integral part in \eqref{eqVIX2} can be approximated via a quadrature rule. Practically, this is done by dividing the interval $[0,\,\tau]$ into $n>0$ equidistant sub-intervals for $z=0, 1,\,2,\ldots, n$ with
$$t_z:=z\Delta_n,\quad \Delta_n:=\tau/n.$$
The approximation then becomes
\begin{equation}
	\left(\vix_{t}^{(m),k}\right)^2\approx\frac{\Delta_n}{\tau}\mathbf{e}_k\sum_{z=1}^{n}e^{\mathbf{Q}^{(m)} t_z}\mathbf{H}.\label{eqVIX_CTMCapprox1}
\end{equation}

Equation \eqref{eqVIX_CTMCapprox1} can be implemented in a straightforward manner. 
However, it requires calculating the exponential of a matrix $n$ times, which can be computationally inefficient. 
By making use of the tower property of conditional expectations, Algorithm \ref{algoCTMCapproxVIX} can be used to speed up the calculation of \eqref{eqVIX_CTMCapprox1}.
\begin{algorithm}
	\caption{Efficient Algorithm for the calculation of the VIX using CTMC approximation}
	\label{algoCTMCapproxVIX}
	\DontPrintSemicolon
	
	\KwInput{Initialize $\mathbf{Q}^{(m)}$ as in \eqref{eqQ} and $\mathbf{H}$ as in Proposition \ref{propCTMCapproxVIX}\;
		$n\in\mathbb{N}$, the number of time steps, \;
		$\Delta_n\leftarrow \tau/n$, the size of a time step}
	Set $\mathbf{A}_{\Delta_n}\leftarrow\exp\{\Delta_n \mathbf{Q}^{(m)}\}$, $\mathbf{S}\leftarrow \mathbf{0}_{m\times1}$ and $\mathbf{E}\leftarrow \mathbf{H}$\;
	\For{$z=n,n-1,\ldots,1$}
	{$\mathbf{E}\leftarrow \mathbf{A}_{\Delta_n}\mathbf{E}$ \;
	 $\mathbf{S}\leftarrow \mathbf{S}+ \mathbf{E}$		
	}
	$\vix_{t}^{(m),k}\leftarrow\sqrt{\mathbf{e}_k\mathbf{S}\frac{\Delta_n}{\tau}}$\;
\end{algorithm}

The vector $\mathbf{0}_{m\times1}$ represents the null column vector of size $m$. Note that Algorithm \ref{algoCTMCapproxVIX} requires the calculation of a matrix exponential only once at the beginning of the procedure, which makes the algorithm very efficient.
\begin{remark}
	The convergence of the approximation follows easily from the weak convergence of the CTMC approximation (with a reasoning similar to that of Remark \ref{remarkVAwoSurrendersConv}), and the convergence of the time step discretisation, from the monotone convergence theorem. 
\end{remark}
Numerical experiments in the next section also demonstrate the accuracy and the efficiency of Algorithm \ref{algoCTMCapproxVIX} empirically.\\

\section{Numerical Analysis}\label{sectNumEx}
In the constant fee case, the misalignment between the fees and the value of the financial guarantee creates an incentive for the policyholder to surrender her policy prematurely (see \cite{milevsky2001real} for details). 
Indeed, when the fund value is high, the amount of the fees paid is also high, but the put option embedded in a GMMB is out-of-the-money and worth very little since the probability for the guarantee to be triggered at maturity is low. 
Thus, the policyholder pays high fees for a financial guarantee that has a low value. 
This is clearly an incentive for a policyholder to surrender his policy prior to the term. \cite{Mackay2017} considered state-dependent fee structures where the fee is paid when the fund value is under a certain level, and showed that this particular type of fee structure reduces insurer's exposure to policyholder behaviour under risk-neutral value maximization assumption. 
Fees that are tied to the  S\&P volatility index, VIX, are also studied in the literature, \cite{cui2017}, \cite{bernard2016variable} and \cite{kouritzin2018vix}. 
Since the volatility is negatively correlated with the stock price (see for instance \cite{Rebonato2005}), we expect $\vix$-linked fees to be low when the fund value is high and to be higher when the fund value is low. \cite{cui2017} and \cite{kouritzin2018vix} showed numerically that linking the fee to the volatility index $\vix$ may help  realign revenues with variable annuity liabilities (for VA without surrender rights).  Hence, it is normal to believe that linking the fees to the $\vix$ may help to reduce insurers' exposure to surrender risk. 
This will be explored in greater details in the numerical experiments conducted below.\\
	\\
This section first discusses the market, the VA, and the CTMC parameters used in all numerical experiments performed below.  Then, we investigate the efficiency of 
the Fast Algorithms (Algorithms \ref{algoVA_WOsurrendersFast} and \ref{algoVA_WITHsurrendersFast}). In the third subsection, we discuss different structures for the $\vix$-linked fee, that is, different ways to link the fee rate to the $\vix$ index. Finally, we analyze the impact of different $\vix$-linked fee structures on the value of VAs and their optimal surrender strategy. We restrict our analysis to the classical Heston model. Using our framework, any model listed in Table \ref{tblExSVmodels} could have been used.
\\
As supplementary material (available online), we investigate the numerical accuracy of the approximated optimal surrender surface derived in Algorithm \ref{algoOptSurrSurf} (or equivalently \ref{algoOptSurrSurfFast}), and the CTMC approximation of the VIX (Algorithm \ref{algoCTMCapproxVIX}). Finally, we explore numerically the impact of VIX-linked fee structures under the $3/2$ model.  

\subsection{Market, VA and CTMC Parameters}\label{subsecMarketVACTMCparam}
We consider a market under regular conditions: low initial variance $v_0$, low long-term variance $\theta$, moderate volatility of volatility $\sigma$, and moderate speed reversion $\kappa$\footnote{Bloomberg provides historical Heston calibrated parameters to market data on a daily basis via its Option Pricing template (OVME). These parameters are currently used in practice for over-the-counter option pricing. Bloomberg's Heston calibrated speed reversion parameter is $\kappa=3.6881$ as of December 31, 2019, $\kappa=5$ as of March 31, 2020 and $\kappa=1.1397$ as of September 30, 2020. The parameter selected for our numerical experiments falls approximately in the middle of those of December 2019 and September 2020. In the financial literature, \cite{ai2007maximum} obtain $\kappa=5.07$ whereas \cite{garcia2011estimation} obtain $\kappa=0.173$, and again our values fall between these two values.}.  The initial value of the variance is set to $V_0=0.03$, the correlation to $\rho=-0.75$ and the risk-free rate to $r=0.03$. The selected market parameters are summarized in Table \ref{tblMarketParam} and the model dynamic is given in Table \ref{tblExSVmodels}.
\begin{table}[h!]
	\begin{tabular}{ccccccc}
		\hline
		Parameter & ${V_0}$ & ${\kappa}$ & ${\theta}$ & ${\sigma}$ & $\mathbf{\rho}$ & ${r}$\\
		Value  & $0.03$ & $2.00$ & $0.04$ & $0.20$ & $-0.75$ & $0.03$\\
		\hline
	\end{tabular}
	\caption{Market Parameters}
	\label{tblMarketParam}
\end{table}

The variable annuity parameters are set to $F_0=S_0=100$, $T=10$ (years), and $G=100$. We assume that the payoff when the contract is surrendered early is given by $g(t,V_t)F_t$, with 
$$g(t,y)=e^{-k(T-t)},\quad y\in\mathcal{S}_V,$$
and $k=0.2\%$. As mentioned previously, such a form for the early surrender payoff is common in the actuarial literature,  see \cite{bernard2014optimal}, \cite{Mackay2014}, \cite{BernardMackay2015},\cite{Mackay2017}, \cite{shen2016valuation}, \cite{bacinello2019variable} and \cite{kang2018optimal}. The choices for the fee function $c(x,y)$ are discussed in greater details in the next section.\\
\\
Note that a numerical analysis under the Heston model with $G=F_0e^{\tilde{g} T}$, $\tilde{g}=2\%$ (rather than $G=100$) is also performed in Appendix \ref{appendixSupplMaterial}, available online as supplemental material. 
\\

For all numerical examples in this paper, we use the non-uniform grid proposed by Tavella and Randall (\cite{tavellapricing}, Chapter 5.3). For example, suppose that $\tilde{X}$ is a one-dimensional diffusion process approximated by a continuous-time Markov chain $\tilde{X}^{(n)}$ taking values on a finite state-space  $\mathcal{S}_{\tilde{X}}=\{\tilde{x}_1,\,\tilde{x}_2,\ldots,\tilde{x}_n\}$, $n\in\mathbb{N}$. The state-space of the approximated process can be determined as follows:
\begin{equation}
	\tilde{x}_i=\tilde{X}_0+\tilde{\alpha}\sinh\left(c_2 \frac{i}{m}+c_1\left[1-\frac{i}{m}\right]\right),\quad i=2,\,\ldots,\,m-1,
\end{equation}\label{eqGridTavRand}
where
$$c_1=\sinh^{-1}\left(\frac{\tilde{x}_1-\tilde{X}_0}{\tilde{\alpha}}\right),\quad\textrm{and } c_2=\sinh^{-1}\left(\frac{\tilde{x}_m-\tilde{X}_0}{\tilde{\alpha}}\right).$$
 Here the constant, $\tilde{\alpha}\geq 0$, controls the degree of non-uniformity of the grid. When $\tilde{\alpha}$ is small, we obtain a highly non-uniform grid with more points concentrated around $X_0$, whereas the grid is uniform for high values of $\tilde{\alpha}$.
The choices for the two boundary states and the non-uniformity parameter $\tilde{\alpha}_v$ (resp. $\tilde{\alpha}_X$) for $V^{(m)}$ (resp. $X^{(m,N)}$) are discussed in more details below.
\begin{remark}\label{remarkGripPoint}
	When the initial values of the auxiliary and variance processes are not in the grid,  they can be inserted (see for example \cite{cui2019continuous}, Section 2.3 for details), or the value of the variable annuity must be interpolated between the appropriate grid points.
\end{remark}

Non-uniforms schemes have been used frequently in the literature for options pricing via CTMC approximation methods, see \cite{mijatovic2013continuously}, \cite{lo2014improved}, \cite{cai2015general}, \cite{kirkby2017unified}, \cite{leitao2019ctmc},\cite{cui2018general}, \cite{cui2019continuous} and \cite{MaCTMCAm}, among others. It has also been used for solving partial differential equations (PDEs) numerically, see for instance \cite{tavellapricing}. 
\cite{lo2014improved} showed numerically that non-uniform grids are more stable and can converge faster than uniform grids. 
In the equidistant grid setting, the choice of boundary values significantly impacts the convergence to the true option price.
Also, appropriate bounds depend on the parameters of the underlying process and the maturity of the option, and thus, may be difficult to find. \cite{leitao2019ctmc} tested four different schemes to approximate the volatility process in the Heston model. 
They conclude that the grid proposed by Tavella and Randall \eqref{eqGridTavRand} is the most robust and precise in general, whereas the method proposed by \cite{mijatovic2013continuously}, Section 4, requires few Markov states to achieve good accuracy.
For a deep analysis of grid designs and how they can affect convergence, the reader is referred to \cite{zhang2019analysis}.\\
\\
Unless stated otherwise, all numerical experiments are performed using the CTMC parameters listed in Table \ref{tblCTMCparam}. Recall that $m$ is the number of grid points for the variance process whereas $N$ represents the number of grid points of the fund process. $\tilde{\alpha}_v$ (resp. $\tilde{\alpha}_X$) is the grid non-uniformity parameter of the variance (resp. the auxiliary) process. The grid's upper and lower bounds are respectively $v_1$ and $v_m$ for the variance process and $x_1$ and $x_N$ for the auxiliary process with $X_0= \ln(F_0)-\rho \gamma(V_0)$. The values of $V_0$ and $X_0$ are inserted in their respective grid as per Remark \ref{remarkGripPoint}. Finally, we use $M=500\times 10$ times steps, which corresponds to a computing frequency of approximately twice daily (since there are approximately $250$ trading days per year). 
\begin{table}[h!]
	\begin{tabular}{cccccccccc}
		\hline
		 Parameter & ${m}$ & ${N}$ & ${v_1}$ & ${v_m}$ & $\tilde{\alpha}_v$& $x_1$ & $x_N$ & $\tilde{\alpha}_X$ & $M$\\
		Value  &$50$ & $2,000$ & $V_0/100$ & $7 v_0$ & $0.6571$ & $X_0/10^6$ & $1.95 X_0$ &$2/100$ &$5,000$  \\
		\hline
	\end{tabular}
	\caption{CTMC Parameters}
	\label{tblCTMCparam}
\end{table}

Note that, under the Heston model, good approximations of the transition density of the variance process can be obtained with a small number of grid points, see for example \cite{cui2019continuous} Figure 3. \\
All the numerical experiments are carried out with Matlab R2015a on a Core i7 desktop with 16GM RAM and speed 2.40 GHz.

\subsection{Efficiency of the Fast Algorithms}

The valuation of options (or variable annuities) using CTMC requires the calculation of a matrix exponential to obtain the  transition probability matrix. 
In Algorithm  \ref{algoVAsurrenders}, the two-dimensional process is mapped onto a one-dimensional process resulting in a generator of size $m N\times m N$. 
Thus, we need to calculate the exponential of a $m N\times m N$ matrix to obtain the transition probability matrix. For large values of $mN$, this procedure might stretch computing resources to unacceptable levels. 
Algorithms  \ref{algoVA_WOsurrendersFast} and \ref{algoVA_WITHsurrendersFast}, proposed in Section \ref{subsecVApricingCTMC}, require $m$ times the calculation of the exponential of a $N\times N$ matrix and one time the exponential of a $m\times m$ matrix.
As demonstrated below, reducing the size of the matrix in the exponent allows to significantly increase the efficiency of the procedure.\\
\\
 When the size of the exponent in the matrix exponential is greater than $200\times 200$, we observe that the function \textit{fastExpm} for Matlab, see \cite{matlabFastExpm}\footnote{The function \textit{fastExpm} is based on \cite{hogben2011spinach} and \cite{kuprov2011diagonalization}.}, which is designed for the fast calculation of matrix exponentials of large sparse matrices,  can further speed up the calculation. Combining the function fastExpm and the Fast Algorithms can speed up the code by up to $100$ times (for European and Bermudan contracts).
For the European contract, the Fast Algorithm can decrease the running time by up to $12$ times whereas the function \textit{fastExpm}, by up to $7$ times. 
For the Bermudan contract, the computation time is decreased by up to $4$ times with the Fast Algorithm and by up to $40$ times with the function \textit{fastExpm}. 
When $m=50$ and $N=100$ the running time is approximately 6 seconds for both the European and the Bermudan contracts, confirming the high efficiency of the new algorithms. 
Figure \ref{figCalcTimeAlgoFast} illustrates the computation time in seconds of the ``Fast Algorithm'', Algorithm \ref{algoVA_WOsurrendersFast} for the European contract and Algorithm \ref{algoVA_WITHsurrendersFast} for the Bermudan contract, combined with the function \textit{fastExpm} of \cite{matlabFastExpm} (when the size of the generator in the matrix exponential is greater than $200\times 200$), and the computation time of the ``Regular Algorithm'' for the European contract \eqref{eq_CTMC_VA} and Algorithm \ref{algoVAsurrenders} for the Bermudan one. The running time in Figure \ref{figCalcTimeAlgoFast} are recorded using market, VA and CTMC parameters of Subsection \ref{subsecMarketVACTMCparam}; except for the number of grid points for the auxiliary process $N$ which is set to $N=100, 200,300$ and $500$, respectively. Moreover, we use a constant fee structure, that is we fix $c(x,y)=1.5338\%$ for all $(x,y)\in \reals_+\times\mathcal{S}_V$.\\
\begin{figure}[!h]
	\centering
	\begin{tabular}{cc}
		\includegraphics[scale=0.4]{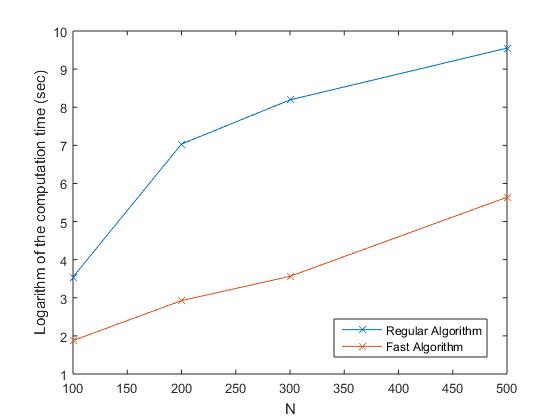}  & \includegraphics[scale=0.4]{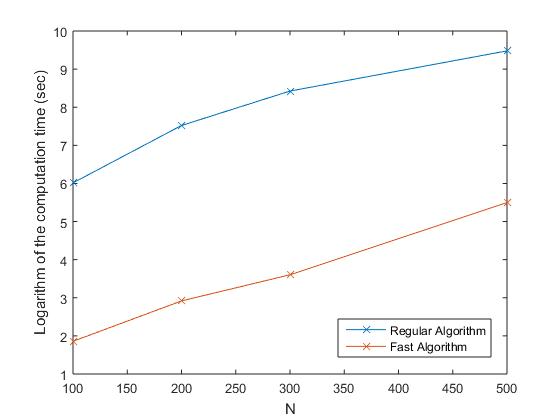}   \\
		(a) European contract & (b) Bermudan contract\\
		\phantom{(a) }Computation time (in seconds) &	\phantom{(b) } Computation time (in seconds)
	\end{tabular}
	\caption{\small{Fast Algorithms: Algorithm \ref{algoVA_WOsurrendersFast} for the European contract and Algorithm \ref{algoVA_WITHsurrendersFast} for the Bermudan contract. Regular Algorithms: Equation \eqref{eq_CTMC_VA} for the European contract and Algorithm \ref{algoVAsurrenders} for the Bermudan contract.}}\label{figCalcTimeAlgoFast}
\end{figure}

We also compare the accuracy of the new algorithms. For comparison, we use the regular algorithms for the European contract and for the Bermudan contract. The absolute difference in the VA prices between the regular and the fast algorithms is around $10^{-3}$ for both European and Bermudan contracts, whereas the relative difference is around $10^{-5}$, confirming the accuracy of the fast algorithms. See also Appendix \eqref{appendixSupplMaterial} available online for more details.
\begin{remark}\label{rmkFastAlgoEfficiency}
	When valuing a European contract, the use of the Expokit of \cite{sidje1998expokit} based on Krylov subspace projection methods accelerates the computation time considerably. Moreover, software packages in Matlab and Fortran can be downloaded for free at \url{https://www.maths.uq.edu.au/expokit}. When $N=500$, we observe that the function \textit{expv} in the Expokit can accelerate the running time by up to $7$ times (compared to the fast algorithm).\\
	However, since the function takes advantage of the product of a matrix exponential with a vector, the fast algorithm is more efficient when valuing Bermudan contract. Indeed, when using the fast algorithm, matrix exponentials are calculated only once at the beginning of the procedure; whereas when using  the Expokit, it needs to be calculated at each time step (since we need to calculate the matrix exponential multiplied by a vector in order to make use of the procedure) which slows down the execution considerably. Hence, when valuing a Bermudan contract, Algorithm \ref{algoVA_WITHsurrendersFast} is up to $9$ times faster than the regular algorithm using Expokit procedures of \cite{sidje1998expokit}.\end{remark}

	As mentioned previously, the computational effort in Algorithms \ref{algoVA_WOsurrendersFast} and  \ref{algoVA_WITHsurrendersFast} resides in the calculation of the matrix exponentials at the beginning of the two procedures. Hence,  once they are (pre-)computed, one can obtain the value of variable annuities with and without surrender rights simultaneously at almost no additional cost. For instance, when $N=500$, we simultaneously obtain the prices of variable annuities with and without surrender rights in $270$ seconds; whereas the values of the Bermudan and the European contracts can be obtained separately in approximately $250$ seconds for each. This also holds true for any other VA contracts with different guaranteed amounts; that is, a variety of contracts can be priced for almost the same computational effort as a single contract.

\subsection{Fee Structures and Fair Fee Parameters}\label{subsectFeeStruct}
First, recall from Subsection \ref{subsectionCTMCapproxVIX} that 
$$\vix_t^2=\e_t\left[\frac{1}{\tau}\int_t^{t+\tau}\sigma^2_S(V_{s})\diff s\right],\quad\quad \text{with $\tau=30/365$.}$$ 
\\

For all the numerical experiments conducted below, we consider three types of VIX-linked fee structures. 
As in \cite{cui2017}, we use an uncapped $\vix^2$-linked fee structure (the ``Uncapped $\vix^2$'' fee structure) of the form 
$$c_t=\tilde{c}+\tilde{m}\vix^2_t,\,0\leq t\leq T,$$
with $\tilde{c}, \tilde{m}\geq 0$.\\

When the volatility is high (e.g. financial turmoil), the Uncapped $\vix^2$ fee rate can become excessive (see Table \ref{tblFeeExample} for examples). Thus, as mentioned in \cite{cui2017}, it may be interesting to introduce a cap to the actual $\vix^2$-linked fee structure in order to keep the product marketable and to reduce the negative net return on the VA account resulting from a high fee. This motivates our choice of imposing a cap to this type of fee.\\
\\
Thus, for the second fee structure, we consider capped $\vix^2$-linked fees (or simply the ``Capped $\vix^2$'' fee structure or Capped fee structure) of the form 
$$c_t=\min(\tilde{c}+\tilde{m}\vix^2_t, K),\,0\leq t\leq T,$$ 
where $\tilde{c},\tilde{m}\geq 0$ and $K>0$.\\
\\
Finally, we look at an uncapped fee structure linked to the VIX (rather than the VIX squared), also called the ``Unccaped VIX'' fee structure. Such a structure can help to keep the fee rates reasonable during high volatility periods. More precisely, we consider a third fee process of the form
$$c_t=\tilde{c}+\tilde{m}\vix_t,\,0\leq t\leq T,$$
with $\tilde{c}, \tilde{m}\geq 0.$ \\
\\
Since the volatility process is Markovian, the volatility index at $t$ will depend only on $V_t$. 
Therefore, the three fee functions are of the form $c(y)$, $y\in\mathcal{S}_V$.
\\
Analytical formulas for the $\vix$ are not known for all the models stated in Table \ref{tblExSVmodels}. 
In that case, CTMCs can be used to approximate the value of the volatility index as discussed in Subsection \ref{subsectionCTMCapproxVIX}. 
\\
The fee structure parameters $\tilde{c}$ and $\tilde{m}$ are set in such a way that the contract is fair at inception. That is, the initial amount invested by the policyholder, $F_0$, is equal to the expected discounted value of the future benefit (without early surrenders). Those parameters are called the fair parameters and are henceforth denoted by a star: $(\tilde{c}^*, \tilde{m}^*)$.  We set the fee in this manner to calculate the value added by the right to surrender. To identify the fair fee vector $(\tilde{c}^*, \tilde{m}^*)$, we first fix a multiplier $\tilde{m}^*$, and then solve for the corresponding fair base fee $\tilde{c}^*$. Note that such a fair base fee $\tilde{c}^*$ does not exist for all values of $\tilde{m}^*>0$. When $c_t=\tilde{c}^*+\tilde{m}^*\vix^2$, the fair parameters are obtained using the exact formula of \cite{cui2017}. For the other fee structures, the fair parameters are calibrated using a CTMC approximation with $N=100$ (all other CTMC parameters are the same as in Table \ref{tblCTMCparam}), to reduce the computation time. 
Note that very accurate VA prices are obtained extremely fast with $N=100$ under the Heston model, see Appendix \ref{appendixSupplMaterial}, available online, for numerical details. Table \ref{tblFairFeeHeston} presents the fair fee vectors  $(\tilde{c}^*, \tilde{m}^*)$, and Table \ref{tblFeeExample} shows examples of fair fee rates produced by each fair fee vector at different volatility levels ($\sqrt{V_t}$)\footnote{Under the Heston model there is a closed-form expression for the VIX given by $\vix^2_t=B+ A V_t$
	with $A=\frac{1-e^{-\kappa\tau}}{\kappa\tau}$ and $B=\frac{\theta(\kappa\tau-1+e^{-\kappa\tau})}{\kappa\tau}$, see \cite{ZhuZang2007}.}. 
\begin{table}[h!]
	\begin{tabular}{c cccc}
		\hline
		& \multicolumn{4}{c}{$c_t=\tilde{c}^*+\tilde{m}^*\vix^2_t$}\\
		\hline
			$\tilde{m}^*$ & $0.0000$ & $0.1500$ & $0.3000$ & $0.4345$\\
		$\tilde{c}^*$ & $1.5338\%$  & $1.0036\%$ & $0.4741\%$ & $0.000\%$\\
		
		\hline
		\hline
		& \multicolumn{4}{c}{$c_t=\min\{\tilde{c}^*+\tilde{m}^*\vix^2_t,K\}$, $K=2\%$}\\
		\hline
		$\tilde{m}^*=$ & $0.0000$ & $0.1500$ & $0.3000$ & $0.4927$\\
		$\tilde{c}^*\,\,=$ & $1.5338\%$  & $1.0112\%$ & $0.5415\%$ & $0.000\%$\\
		\hline
		\hline
		& \multicolumn{4}{c}{$c_t=\tilde{c}^*+\tilde{m}^*\vix_t$}\\
		\hline
		$\tilde{m}^*=$ & $0.0000$ & $0.0250$ & $0.0500$ & $0.0836$\\

	    $\tilde{c}^*\,\,=$ & $1.5338\%$  & $1.0750\%$ & $0.6164\%$ & $0.000\%$\\
		\hline
	\end{tabular}
    \caption{Fair fee vectors $(\tilde{c}^*,\tilde{m}^*)$}
    \label{tblFairFeeHeston}
\end{table} 
\begin{table}[h!]
   \begin{subtable}[c]{0.5\linewidth}
	\begin{tabular}{|c|cccc|}
		\hline
		$\mathbf{\sqrt{V_t}}$ (\%)$\backslash\mathbf{\tilde{m}^*}$ &  $\mathbf{0.0000}$ & $\mathbf{0.1500}$ & $\mathbf{0.3000}$ & $\mathbf{0.4345}$\\
		\hline
		\hline
		$\mathbf{8.702}$ & $1.5338$ & $1.1550$ & $0.7770$ & $0.4387$\\
		$\mathbf{13.882}$ & $1.5338$ & $1.3168$  & $1.1007$ & $0.9075$\\
		$\mathbf{30.434}$ & $1.5338$ & $2.3314$  & $3.1299$ & $3.8464$\\
		$\mathbf{42.772}$  & $1.5338$  & $3.5808$  & $5.6285$ & $7.4653$\\
		\hline
	\end{tabular}
	\subcaption{$c_t=\tilde{c}^*+\tilde{m}^*\vix_t^2$ (\%)}
	\end{subtable}
	\begin{subtable}[c]{0.5\linewidth}
			\begin{tabular}{|c|cccc|}
			\hline
			$\mathbf{\sqrt{V_t}}$ (\%)$\backslash\mathbf{\tilde{m}^* }$ &  $\mathbf{0.0000}$ & $\mathbf{0.1500}$ & $\mathbf{0.3000}$ & $\mathbf{0.4927}$\\
			\hline
			\hline
			$\mathbf{8.702}$ & $1.5338$ & $1.1627$ & $0.8444$ & $0.4975$\\
			$\mathbf{13.882}$ & $1.5338$ & $1.3245$  & $1.1681$ & $1.0291$\\
			$\mathbf{30.434}$ & $1.5338$ & $2.0000$  & $2.0000$ & $2.0000$\\
			$\mathbf{42.772}$  & $1.5338$  & $2.0000$  & $2.0000$ & $2.0000$\\
			\hline
		\end{tabular}
		\subcaption{$c_t=\min\{\tilde{c}^*+\tilde{m}^*\vix_t^2 (\%),K\}$, $K=2\%$}
	\end{subtable}
    \begin{subtable}[c]{0.5\linewidth}
    \begin{tabular}{|c|cccc|}
    	\hline
    	$\mathbf{\sqrt{y}}$ (\%)$\backslash\mathbf{\tilde{m}^*}$ &  $\mathbf{0.0000}$ & $\mathbf{0.0250}$ & $\mathbf{0.0500}$ & $\mathbf{0.0836}$\\
    	\hline
    	\hline
    	$\mathbf{8.702}$ & $1.5338$ & $1.3262$ & $1.1188$ & $0.8402$\\
    	$\mathbf{13.882}$ & $1.5338$ & $1.4363$ & $1.3390$  & $1.2083$\\
    	$\mathbf{30.434}$ & $1.5338$ & $1.8188$ & $2.1041$  & $2.4877$ \\
    	$\mathbf{42.772}$  & $1.5338$  & $2.1113$  & $2.6889$ & $3.4657$\\
    	\hline
    \end{tabular}
    	\subcaption{$c_t=\tilde{c}^*+\tilde{m}^*\vix_t$}
    \end{subtable}
  \caption{Fair fee rates in \% }
	\label{tblFeeExample}
\end{table}

The uncapped $\vix^2$-linked fee rate can get very high as the volatility increases, reaching levels as high as $7.4653\%$ when the volatility is $42.772\%$. During the last COVID-19 financial crisis, volatility reached levels as high as $80\%$ in March 2020\footnote{See VIX historical data at \url{https://www.cboe.com/tradable_products/vix/vix_historical_data/}.}. 
This motivates the two other fee structures we propose. 
The introduction of a cap does not significantly affect the calibrated fair fee parameters. 
For instance, when $\tilde{m}^*=0.3$, the calibrated base fee rate jumps slightly from $\tilde{c}^*=0.4741\%$ to $\tilde{c}^*=0.5415\%$ when the cap is included, and when $\tilde{c}^*=0.0000\%$, the fair multiplier goes from $\tilde{m}^*=0.4345$ to $\tilde{m}^*=0.4927$. 
The other fair fee vectors are almost the same under both structures. 
However, when we compare Tables \ref{tblFeeExample} \textsc{(a)} and \textsc{(b)}, we observe that the introduction of such a cap allows to keep the product marketable by avoiding excessive fee rates during financial turmoil. 

Comparing Tables \ref{tblFeeExample} \textsc{(a)} and \textsc{(c)}, we note that fee rates of the VIX fee structure are slightly higher during low/regular volatility periods than the ones of the $\vix^2$ fee structure. 
This slight increase in fee rates in low volatility regimes allows the fees to be kept at reasonable levels during high volatility periods compared to the Uncapped $\vix^2$ fee structure. 
Thus, this third uncapped fee function is also clearly more attractive from the policyholder's perspective.

\subsection{Effect of VIX-Linked Fees on Surrender Incentives}\label{subsectNumAnalHeston}
	Recall that under the Heston model, \cite{heston1993}, the price of the risky asset satisfies
		\begin{equation}
		\begin{array}{ll}
			\diff S_t&=r S_t \diff t +\sqrt{V_t} S_t \diff W^{(1)}_t\\
			\diff V_t&=\kappa(\theta-V_t)\diff t+\sigma\sqrt{V_t} \diff W^{(2)}_t,
		\end{array}\label{eqEDSheston}
	\end{equation}
     with $S_0$ and $V_0$ are deterministic, and where $W=(W^{(1)}, W^{(2)})^T$ is a bi-dimensional correlated Brownian motion under  $\prob{Q}$ and such that $[W^{(1)},\,W^{(2)}]_t=\rho t$ with $\rho\in[-1, \, 1]$, the speed of the mean-reversion $\kappa>0$, the long term variance $\theta>0$ and the  volatility of the variance $\sigma>0$ (also called the volatility of the volatility).\\
     \\
     Moreover, when the market is modeled by \eqref{eqEDSheston}, the VIX has a closed-form expression given by
     \begin{equation}
     	\vix^2_t=B+ A V_t\label{eqVIX2heston}
     \end{equation}
     with $A=\frac{1-e^{-\kappa\tau}}{\kappa\tau}$ and $B=\frac{\theta(\kappa\tau-1+e^{-\kappa\tau})}{\kappa\tau}$, see \cite{ZhuZang2007} for details.\\
     \\
     The three fee structures exposed in the Subsection \ref{subsectFeeStruct} can thus be obtained explicitly in terms of the current volatility using \eqref{eqVIX2heston} as follows:
     \begin{table}[h!]
     	\begin{tabular}{lc}
     		\hline
     		\textbf{Fee Structure} & $c_t$, $0\leq t\leq T$\\
     		\hline\hline
     		\textbf{Uncapped $\vix^2$}  & $c_t=\tilde{c}^*+\tilde{m}^*(A+BV_t)$\\
     		\textbf{Capped $\vix^2$}  & $c_t=\min\{K,\tilde{c}^*+\tilde{m}^*(A+BV_t)\}$\\
     	\textbf{Uncapped $\vix$}  & $c_t=\tilde{c}^*+\tilde{m}^*\sqrt{(A+BV_t)}$\\
     	\hline
     \end{tabular}
     \caption{Fair fee process under the Heston model}
     \label{tblFeeProcessHeston}
 \end{table}

   Now from Lemma \ref{lemmaDecoupleBM}, we find that $\gamma(x)=x/\sigma$. Thus, given a certain fee process $c_t$ (listed in Table \ref{tblFeeProcessHeston}), the dynamics of the auxiliary process can be derived as 
     \begin{equation}
     	\begin{array}{ll}
     		\diff X_t  &= \mu_X(X_t,V_t) \diff t+\sigma_X(V_t)\diff W^{*}_t,\\
     		\diff V_t&=\mu_V(V_t)\diff t+\sigma_V(V_t)\diff W^{(2)}_t,\label{eqEDS_X_Qheston}
     	\end{array}
     \end{equation} 
     where $\mu_X(X_t,V_t)=r-\frac{\rho\kappa\theta}{\sigma}- c_t+V_t\left(\frac{\rho\kappa}{\sigma}-\frac{1}{2}\right)$, and $\sigma_X(V_t)=\sqrt{(1-\rho^2)V_t}$, $0\leq t\leq T$.\\
\\
  Using the CTMC technique outlined in Section \ref{sectionCTMCapprox} and the market, VA, and CTMC parameters of subsection \ref{subsecMarketVACTMCparam}, we perform the valuation of a variable annuity with and without early surrenders ( ``VA with ES'' and `` VA without ES'', respectively). The results are reported in Table \ref{tblVApricesHeston} below\footnote{Numerical experiments under the Heston model have been performed using Equation \eqref{eq_CTMC_VA}, and Algorithms  \ref{algoVAsurrenders} and \ref{algoOptSurrSurf}. Note however that similar results are obtained when using the Fast Algorithms, see Appendix \ref{appendixSupplMaterial}, available online for details.}.
 \begin{table}[h!]
 \begin{subtable}[c]{0.7\linewidth}
 	\centering
 	\begin{tabular}{l| cccc}
 		\hline
 		$\mathbf{\tilde{m}^*=}$ & $\mathbf{0.0000}$ & $\mathbf{0.1500}$ & $\mathbf{0.3000}$ & $\mathbf{0.4345}$\\
 		$\mathbf{\tilde{c}^*\,\,=}$ & $\mathbf{1.5338\%}$  & $\mathbf{1.0036\%}$ & $\mathbf{0.4741\%}$ & $\mathbf{0.000\%}$\\
 		\hline
 		\hline
 		\textbf{VA without ES} & $100.00090$ & $100.00091$& $100.00092$& $100.00093$\\
 		\textbf{VA with ES} & $103.01785$ & $103.00823$ & $103.00330$ & $103.00367$\\
 		\hline 
 		\textbf{Value of ES} & $3.01695$ & $3.00732$ & $3.00238$ & $3.00274$\\
 		\hline
 	\end{tabular}
 	\subcaption{$c_t=\tilde{c}^*+\tilde{m}^*\vix_t^2$}
\end{subtable}
 \begin{subtable}[c]{0.7\linewidth}
 	\centering
     	\begin{tabular}{l| cccc}
     	\hline
     	$\mathbf{\tilde{m}^*=}$ & $\mathbf{0.0000}$ & $\mathbf{0.1500}$ & $\mathbf{0.3000}$ & $\mathbf{0.4927}$\\
     	$\mathbf{c^*\,\,=}$ & $\mathbf{1.5338\%}$  & $\mathbf{1.0112\%}$ & $\mathbf{0.5415\%}$ & $\mathbf{0.000\%}$\\
     	\hline
     	\hline
     	\textbf{VA without ES} & $100.00090$ & $100.00075$& $100.00070$& $100.00036$\\
     	\textbf{VA with ES} & $103.01785$ & $103.00596$ & $102.99137$ & $102.97560$\\
     	\hline 
     	\textbf{Value of ES} & $3.01695$ & $3.00521$ & $2.99067$ & $2.97524$\\
     	\hline
     \end{tabular}
  	\subcaption{$c_t=\min\{K,\tilde{c}^*+\tilde{m}^*\vix_t^2\}$, with $K=2\%$}
\end{subtable}
 \begin{subtable}[c]{0.7\linewidth}
 	\centering
     \begin{tabular}{l| cccc}
     	\hline
     	$\mathbf{\tilde{m}^*=}$ & $\mathbf{0.0000}$ & $\mathbf{0.0250}$ & $\mathbf{0.0500}$ & $\mathbf{0.0.0836}$\\
     	$\mathbf{\tilde{c}^*\,\,=}$ & $\mathbf{1.5338\%}$  & $\mathbf{1.0750\%}$ & $\mathbf{0.6164\%}$ & $\mathbf{0.000\%}$\\
     	\hline
     	\hline
     	\textbf{VA without ES} & $100.00090$ & $100.00099$& $100.00057$& $100.000167$\\
     	\textbf{VA with ES} & $103.01785$ & $103.01080$ & $103.00446$ & $102.99853$\\
     	\hline 
     	\textbf{Value of ES} & $3.01695$ & $3.00981$ & $3.00389$ & $2.99686$\\
     	\hline
     \end{tabular}
  	\subcaption{$c_t=\tilde{c}^*+\tilde{m}^*\vix_t$}
\end{subtable}
 	\caption{Variable annuity with and without early surrender (ES).}
 	\label{tblVApricesHeston}
 \end{table} 
\\
First, we observe that the fair value of the variable annuity without early surrenders is approximately $F_0=100$ for all fair fee vectors. 
This is because fair fee parameters are calibrated such that the value of the VA without surrender rights at time $t=0$ is equal to the initial premium ($F_0=100$). 
Moreover, under the Uncapped $\vix^2$-linked structure, fair fee parameters are obtained using the exact pricing formula of \cite{cui2017}, confirming the accuracy of the approximated model.  
The absolute error is around $10^{-4}$ for all fee vectors for this fee structure. 
An accuracy around  $10^{-3}$ can be obtained for each value (the value of the VA with and without surrender rights) with fewer grid points with significantly less computational effort.
Detailed results with $N=100$ and $N=1,000$ are given in Appendix \ref{appendixSupplMaterial} (available online as supplemental material) with their respective computation time. The value of the surrender right is calculated as the difference between the values of the variable annuity with and without early surrenders. \\
\\
As $\tilde{m}^*$ increases, the risk-neutral value of early surrenders remains almost the same for all fee structures. We were expecting a VIX-linked fee structure to reduce the risk-neutral value of the surrender rights since this type of fee structure realigns income and liability, \cite{cui2017}, but this is not what we observe here. 
However, VIX-linked fee structures have an impact on optimal surrender strategies, as shown below.\\
\\
In Figure \ref{figBoundary3D}, the shape of the approximated optimal surrender surface associated to each of the VIX-linked fee structure is illustrated for different values of the fair multiplier. We observe that the surrender region is of threshold type for all $\tilde{m}^*$. We also note that, regardless of the value of $\tilde{m}^*$, the boundary is a concave function in the time variable. It slowly increases to a maximum and decreases rapidly to the guarantee level $G$. This is consistent with earlier findings of \cite{bernard2014optimal} and \cite{kang2018optimal}.
\begin{figure}[!t]
	\centering
	\begin{tabular}{cccc}
		\multicolumn{4}{c}{\small{\textsc{(a)} $c_t=\tilde{c}^*+\tilde{m}^*\vix^2_t$}}\\
		\includegraphics[scale=0.2]{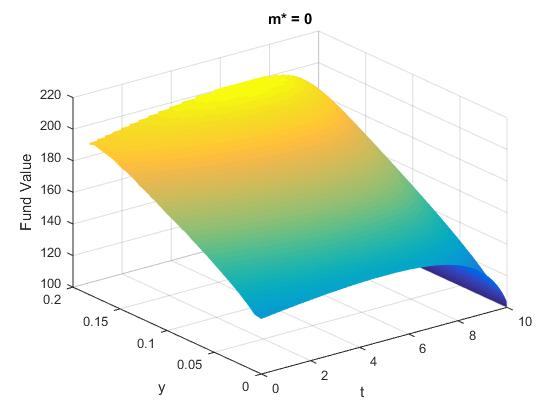}  & \includegraphics[scale=0.2]{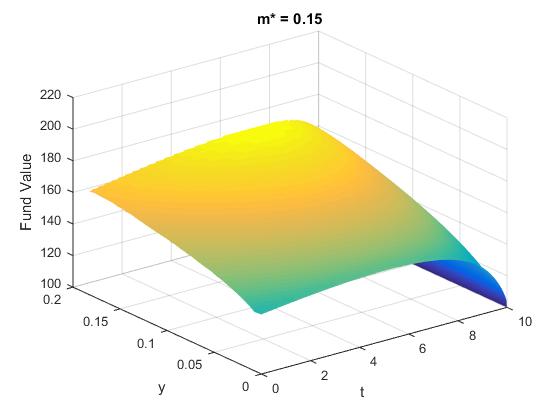}   & \includegraphics[scale=0.2]{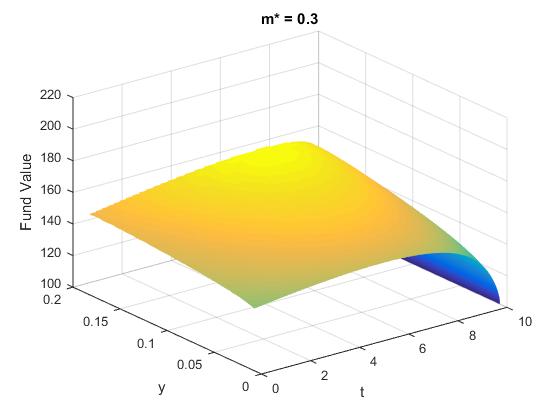}  & \includegraphics[scale=0.2]{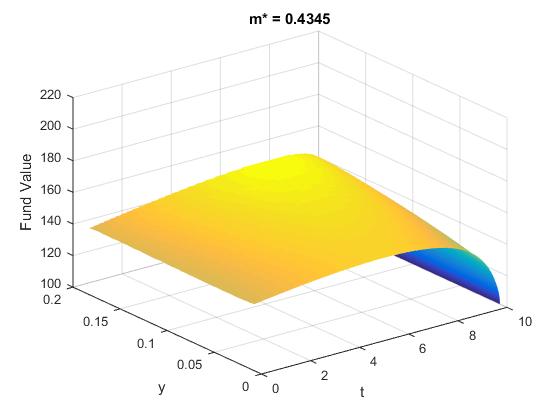} \\
		\multicolumn{4}{c}{\small{\textsc{(b)} $c_t=\min\{\tilde{c}^*+\tilde{m}^*\vix^2_t,K\}$, $K=2\%$}}\\
		\includegraphics[scale=0.2]{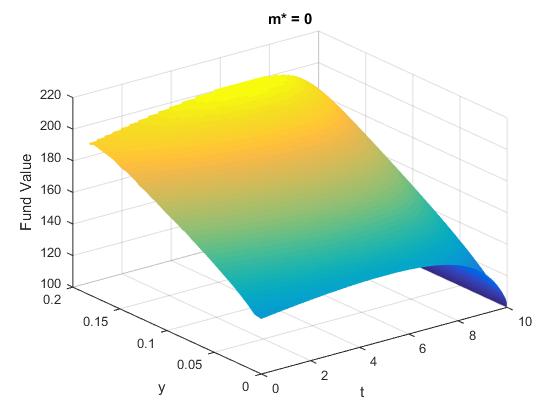} & \includegraphics[scale=0.2]{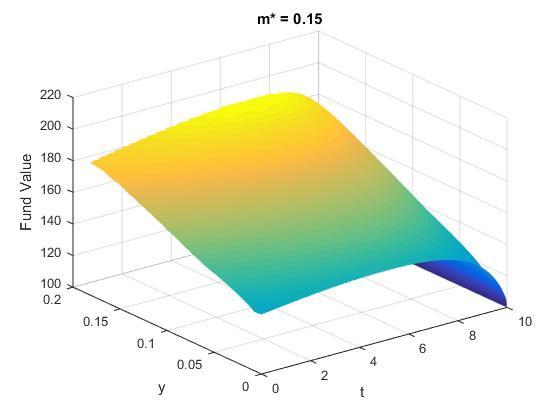}&	\includegraphics[scale=0.2]{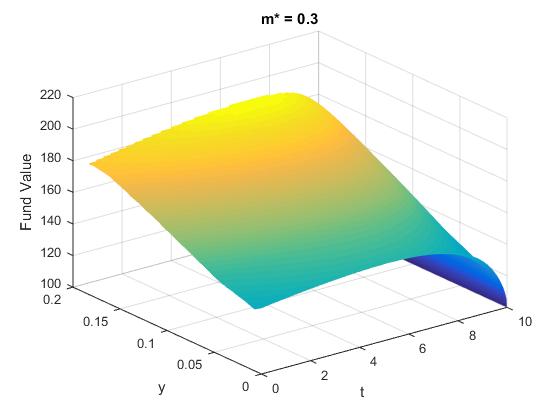} &	\includegraphics[scale=0.2]{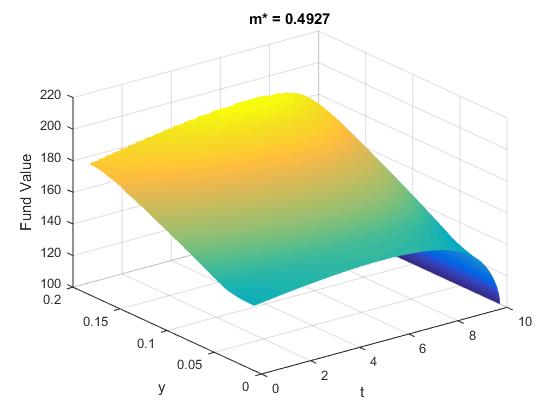} \\
		\multicolumn{4}{c}{\small{\textsc{(c)} $c_t=\tilde{c}^*+\tilde{m}^*\vix_t$}}\\
		\includegraphics[scale=0.2]{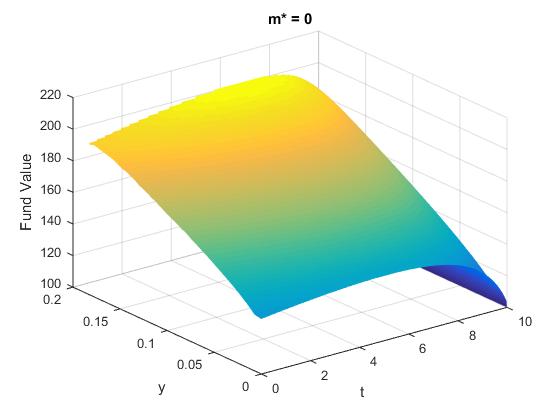} & \includegraphics[scale=0.2]{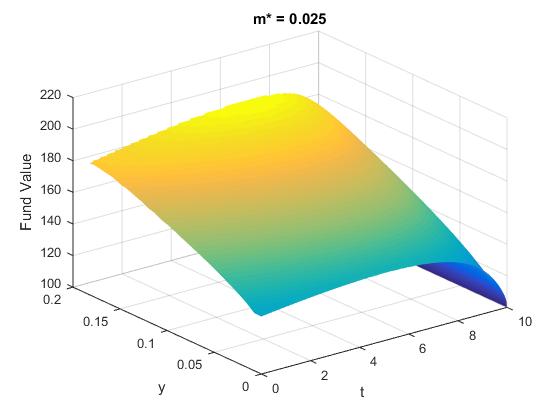}  & \includegraphics[scale=0.2]{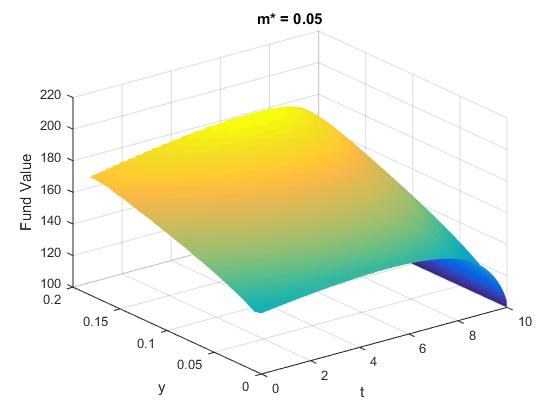}  &
		\includegraphics[scale=0.2]{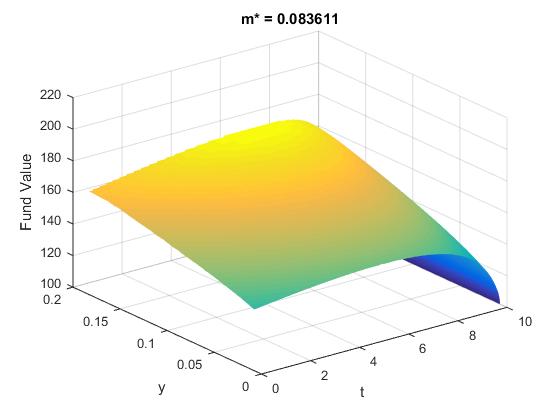}  
	\end{tabular}
	\caption{\small{Approximated optimal surrender surface of VIX-linked fees VAs for different values of fair multiplier $\tilde{m}^*$ under the Heston model. The $x$-axis represents the time and the $y$-axis the variance.}}\label{figBoundary3D}
\end{figure}

Now fix $t\in[0,T]$ and observe, in Figures \ref{figBoundary3D} and \ref{figBoundary}, that the $t$ section of $f^{(m,N)}$, the function $y\mapsto f^{(m,N)}(t,y)$, denoted by $f^{(m,N)}_t$,  is increasing for all $\tilde{m}^*$.  
Hence, when the volatility is high, variable annuities are surrendered for higher fund values in contrast to low volatility levels. 
This is in line with findings of \cite{kang2018optimal}.
However, as $\tilde{m}^*$ increases, the function $f^{(m,N)}_t$ increases more slowly (and particularly for the uncapped structures, panel \textsc{(a)} and \textsc{(c)} of Figures \ref{figBoundary3D}  and \ref{figBoundary}). 
For instance, fix $\tilde{m}^*\in\{0,0.15,0.3,0.4345\}$, $y\in\mathcal{S}^{(m)}_V$ and note, from Figure \ref{figBoundary} (panel \textsc{(a)} or \textsc{(c)}), that the $y$ section of $f^{(m,N)}$, the function $t\mapsto f^{(m,N)}(t,y)$, denoted by $f^{(m,N)}_y$, is pushed upwards as $y$ increases. 
However, the difference between the low and the high volatility $y$ section is less significant as $\tilde{m}^*$ grows. 
This means that the optimal surrender decision for uncapped VIX-linked fees is less sensitive to volatility fluctuations when fees are tied to the volatility index. 
When we look at the capped structure (panel \textsc{(b)} of Figures \ref{figBoundary3D} and \ref{figBoundary}), we also note that approximated optimal surrender surfaces are gradually increasing as volatility grows; however, they now increase at essentially almost all the same pace for all $\tilde{m}^*$.  This can be interpreted from a financial perspective as pointed out below.
\begin{figure}[!t]
	\centering
	\begin{tabular}{cccc}
		\multicolumn{4}{c}{\small{\textsc{(a)} $c_t=\tilde{c}^*+\tilde{m}^*\vix^2_t$}}\\
		\includegraphics[scale=0.2]{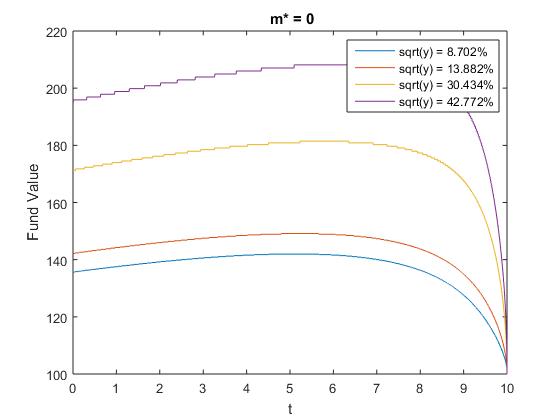}   &\includegraphics[scale=0.2]{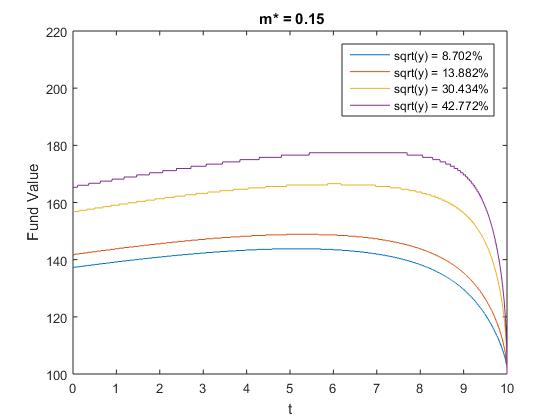}   & \includegraphics[scale=0.2]{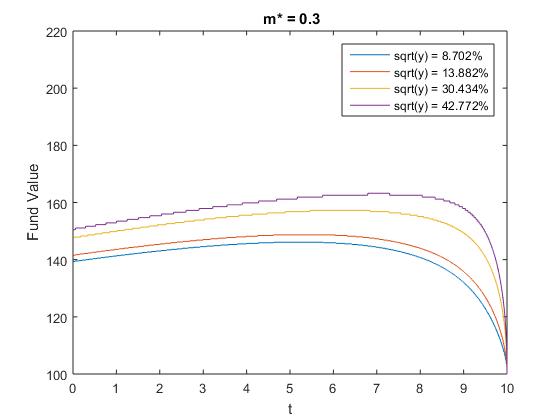}  & \includegraphics[scale=0.2]{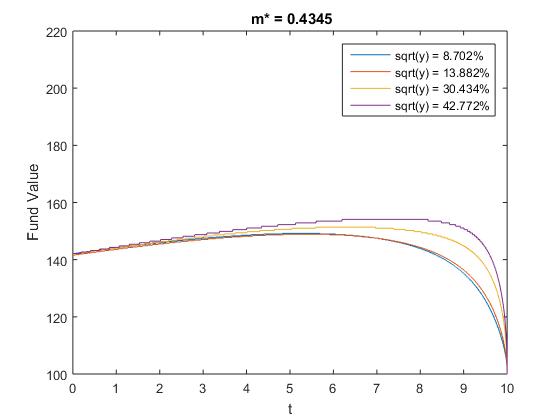} \\
		\multicolumn{4}{c}{\small{\textsc{(b)} $c_t=\min\{\tilde{c}^*+\tilde{m}^*\vix^2_t,K\}$, $K=2\%$}}\\
		\includegraphics[scale=0.2]{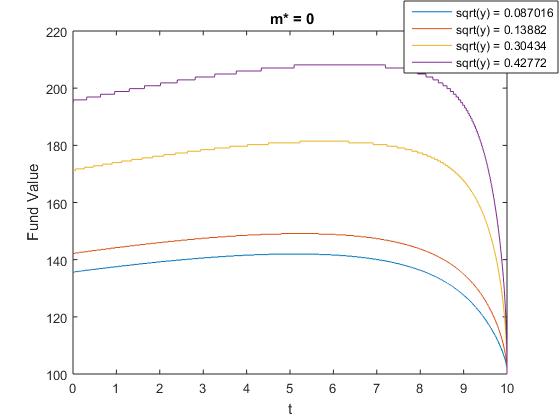}  & 	\includegraphics[scale=0.2]{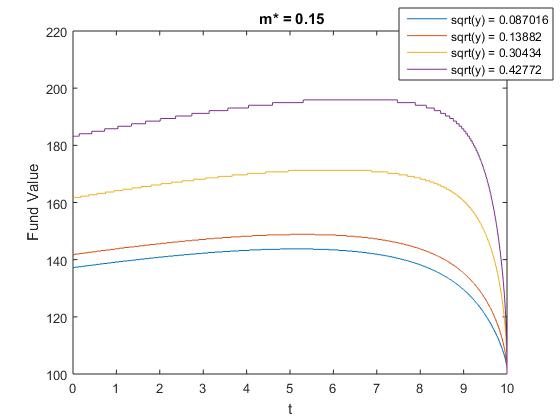} &		\includegraphics[scale=0.2]{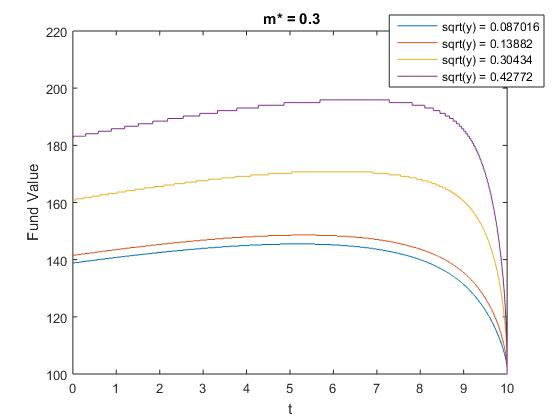}   &		\includegraphics[scale=0.2]{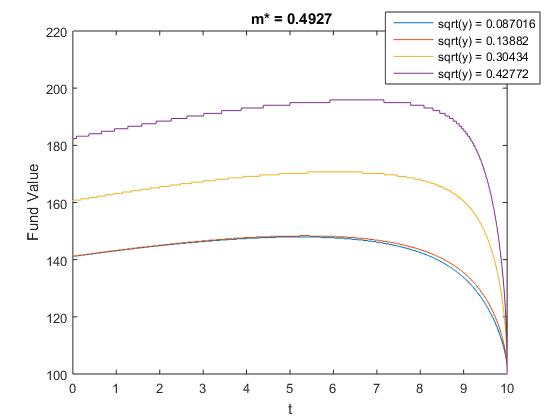}   \\
		\multicolumn{4}{c}{\small{\textsc{(c)} $c_t=\tilde{c}^*+\tilde{m}^*\vix_t$}}\\
		\includegraphics[scale=0.2]{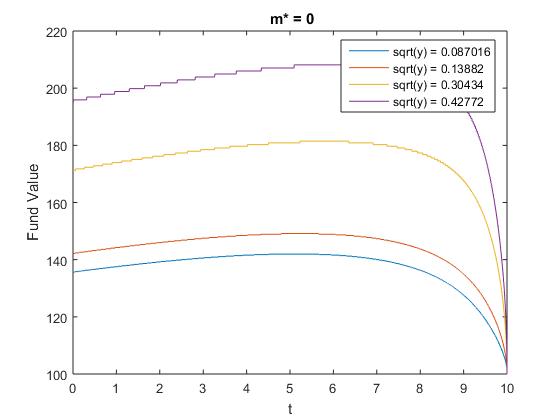}   &\includegraphics[scale=0.2]{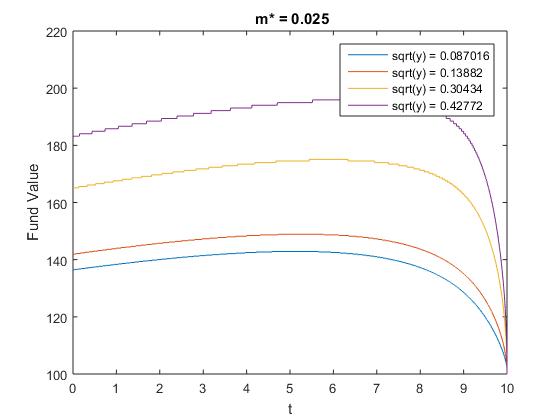}    &\includegraphics[scale=0.2]{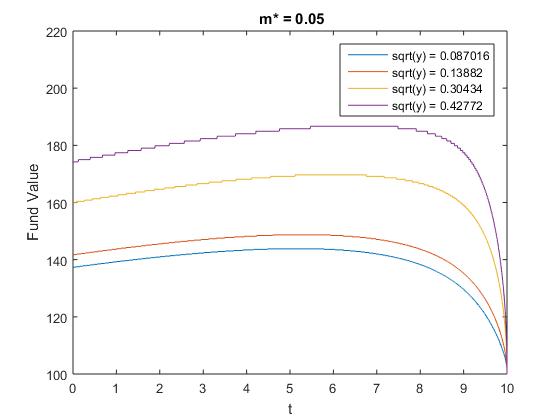}    &
		\includegraphics[scale=0.2]{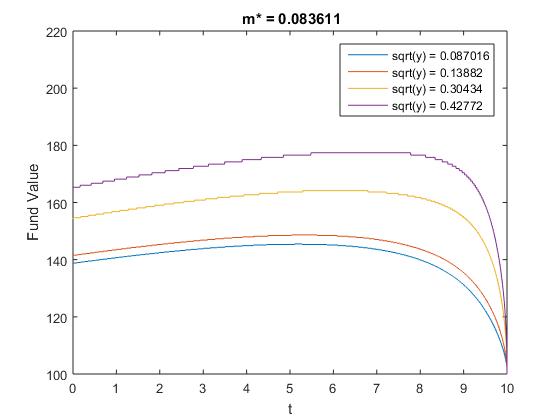}   
	\end{tabular}
	\caption{\small{ The $y$ section of the approximated optimal surrender surface, $f^{(m,N)}_y$, for different volatility levels $\sqrt{y}$ and fair multipliers $\tilde{m}^*$.}}\label{figBoundary}
\end{figure}

In Figure \ref{figBoundaryAll}, we fix a volatility level $y$ and we compare the $y$ sections of $f^{(m,N)}$, $f_y^{(m,N)}$, for different fair multipliers $\tilde{m}^*$. 
When the volatility is low ($\sqrt{y}=8.702\%$), see for instance the first graph of Figure \ref{figBoundaryAll} \textsc{(a)}, we observe that $f_y^{(m,N)}$ is pushed upwards with increasing values of $\tilde{m}^*$. 
This means that a variable annuity contract with a fully dependent uncapped $\vix^2$-linked fee structure ($\tilde{m}^*=0.4345$) is surrendered at higher fund values than the constant fee one ($\tilde{m}^*=0$). 
Indeed, when the volatility is low, the fee paid under a variable annuity contract with a $\vix^2$-linked fee structure is also low (see Table \ref{tblFeeExample} for examples), making $\vix$-linked fee contracts more attractive than the constant fee ones. 
However, as the volatility rises, $\vix^2$-linked fees also rise and so, the relation between the optimal surrender decision and $\tilde{m}^*$ reverts. 
The second graph of Figure \ref{figBoundaryAll} \textsc{(a)} shows that variable annuity contracts are surrendered at almost all the same fund value levels when the volatility equals to $13.882\%$, regardless of $m^*$. 
The latter may be explained by the fact that fee rates are all around the same level when $\sqrt{y}=13.882\%$, that is $\pm 1\%$ as per Table \ref{tblFeeExample}. 
However, when the volatility increases (see for instance the third and the last graphs of Figure \ref{figBoundaryAll} \textsc{(a)}), $\vix^2$-linked fees are also high (refer again to Table \ref{tblFeeExample} for examples), making $\vix^2$-linked fee variable annuity contracts less attractive than the constant fee ones. 
And so, when volatility is high, we observe that $f_y^{(m,N)}$ is pushed downward with increasing values of $\tilde{m}^*$. 
Or, in other words, for high volatility levels, variable annuity contracts with $\vix^2$-linked fee structures are surrendered for fund values that are less than the ones with constant fee structures. 
This is financially intuitive, as pointed out by \cite{bernard2014optimal} under the Black-Scholes setting with a constant fee function, since when the fee gets higher, the policyholder has to pay more for the guarantee, and so, the mismatch between the premium for guarantee and its value is even greater; resulting in earlier exercise time. 
The analysis above shows that the findings of \cite{bernard2014optimal} also extends to stochastic fee structures. 
Similar conclusion can be drawn for the Uncapped VIX-linked fees, Figure \ref{figBoundaryAll}\textsc{(c)}.\\
\begin{figure}[!t]
	\centering
	\begin{tabular}{cc}
		\small{\textsc{(a)} $c_t=\tilde{c}^*+\tilde{m}^*\vix^2_t$} & \small{\textsc{(b)} $c_t=\min\{\tilde{c}^*+\tilde{m}^*\vix^2_t,K\}$, $K=2\%$} \\
		\includegraphics[scale=0.15]{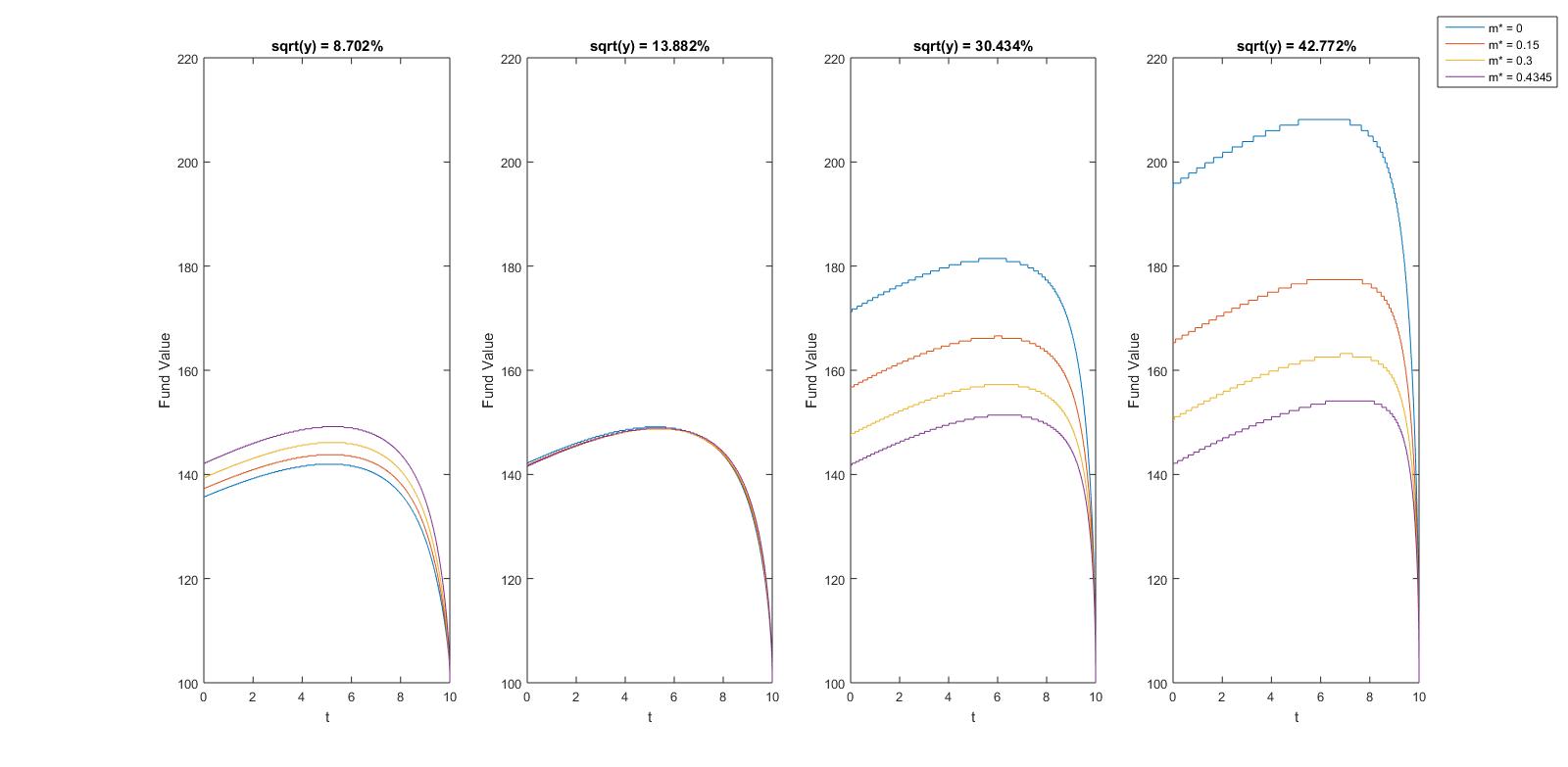}   &\includegraphics[scale=0.15]{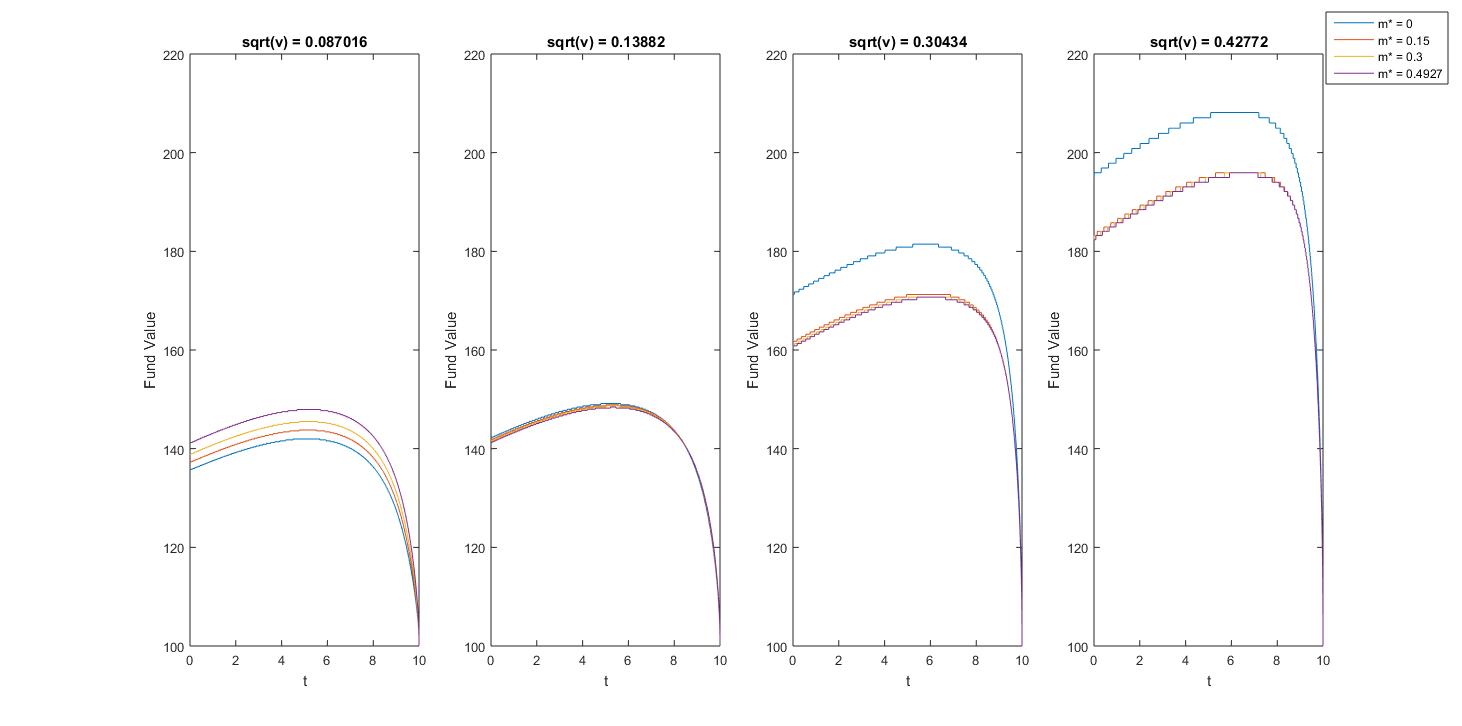}  \\
		\multicolumn{2}{c}{\small{\textsc{(c)} $c_t=\tilde{c}^*+\tilde{m}^*\vix_t$}}\\
		\multicolumn{2}{c}{\includegraphics[scale=0.15]{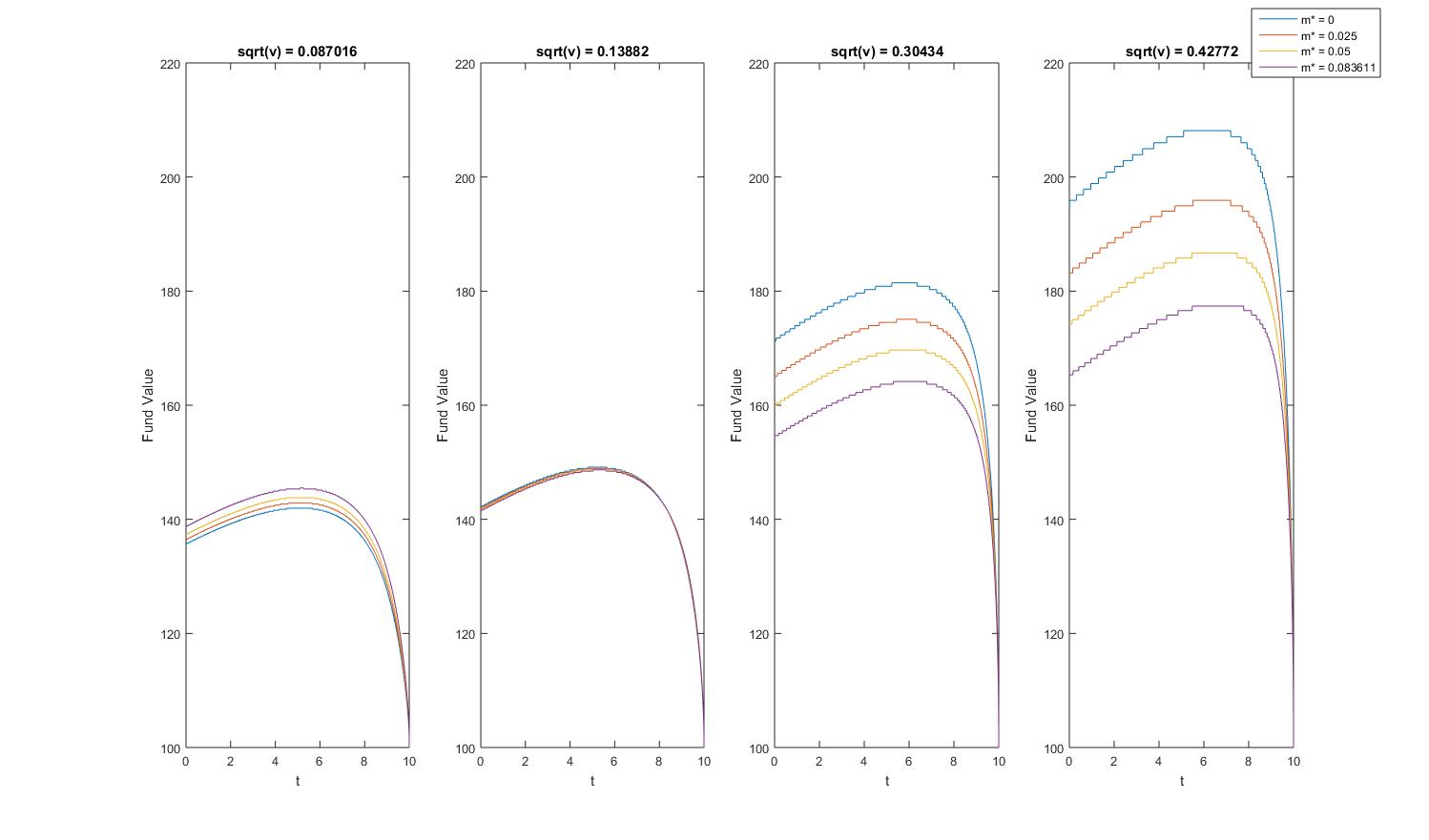}}
	\end{tabular}
	\caption{\small{ The $y$ section of the approximated optimal surrender surface, $f^{(m,N)}_y$, for different volatility levels $\sqrt{y}$ and fair multipliers $\tilde{m}^*$.}}\label{figBoundaryAll}
\end{figure}

For the capped fees, we observe in Figure \ref{figBoundaryAll}\textsc{(b)} that when the volatility is low, capped $\vix^2$-linked fee VA contracts are surrendered for lower fund value than the constant fee ones, as for the two other fee structures. However, when the volatility gets high enough, the cap is reached, and so, the fee paid under the capped structure is the same for all $m^*>0$ (see Table \ref{tblFeeExample}); the capped $\vix^2$-linked fee VA contracts are thus surrendered at almost all the same fund value level (Figure \ref{figBoundaryAll} \textsc{(b)} graphs 3 and 4 when $m^*>0$) . This illustrates again the relation that exists between fees and optimal surrender decisions. VA contracts with higher fee rates are surrendered for lower fund values than contracts with lower fee rates.\\
\\
Similar conclusions can be drawn when $G=F_0e^{\tilde{g} T}$ with $\tilde{g}=2\%$ in the Heston model  and under the 3/2 stochastic volatility model, see Appendix \ref{appendixSupplMaterial}, available online as supplemental material, for details.

\section{Conclusion}\label{conc}
In this paper, we provide a framework based on CTMC approximations to analyze the surrender incentives resulting from VIX-linked fees in variable annuities under general stochastic volatility models. 
Under the assumption that the policyholder will maximize the risk-neutral value of his variable annuity, the pricing of a variable annuity is an optimal stopping problem with a time-discontinuous reward function. 
Under general fee and surrender charge structures, we develop efficient numerical algorithms based on a two-layers CTMC approximation to price variable annuities with and without early surrenders. 
We derive a closed-form analytical formula for the value of a variable annuity without surrender rights and provide a quick and simple way of determining early surrenders value via a recursive algorithm. 
We also present an easy procedure to approximate the optimal surrender surface under the hypothesis that the surrender region is of threshold type. 
Lastly, we observe numerically that $\vix$-linked fees do not significantly affect the value of early surrenders under the selected set of Heston parameters. 
However, numerical examples also reveal that the optimal surrender decision is impacted by VIX-linked fee structures. 
In particular, we observe that the optimal surrender strategy is more stable with respect to volatility changes when the fees are linked to the volatility index. 
The CTMC approximation of the VIX presented in Section \ref{subsectionCTMCapproxVIX} can also be used to price exotic path-dependent options on the $\vix$, which is left as future research.
\section*{Disclosure statement}
The authors report there are no competing interests to declare.
\bibliography{myBib}

  \clearpage
\newpage
\appendix

\section{Proofs of Main Results}

\subsection{Proof of Proposition \ref{propExpectionApprox}}\label{appendixProofPropCondExpApprox}

First recall that
\begin{align*}
	&\prob{P}(\XmN_{t+h}=x_l,\Vm_{t+h}=v_j\big|\XmN_{t}=x_i,\Vm_t=v_k)\\
	& \qquad =\prob{P}(Y^{(m,N)}_{t+h}=(j-1)N+l\big|Y^{(m,N)}_{t}=(k-1)N+i).
\end{align*}
By inspection of the matrix $\mathbf{G}^{(m,N)}$  and since $h$ is small, we have by \eqref{eq_pijh} that
\begin{equation*}
	\prob{P}(\XmN_{t+h}=x_l,\Vm_{t+h}=v_j\big|\XmN_{t}=x_i,\Vm_t=v_k)=\begin{cases}
		q_{kj}h+c_{ii}^{kj}(h) & \text{if $i=l$,$j\neq k$}\\
		\lambda_{il}^kh+c_{il}^{kk}(h) &  \text{if $i\neq l$,$j=k$}\\
		1+(q_{kk}+	\lambda_{ii}^k)h+c_{ii}^{kk}(h) & \text{if $i=l$,$j=k$}\\
		c_{il}^{kj}(h) &\text{if $i\neq l$, $k\neq j$}.
	\end{cases}
\end{equation*}
where the functions $\{c_{il}^{kj}\}$ satisfy ${\lim_{h\rightarrow 0}\frac{c_{il}^{kj}(h)}{h}=0}$ for ${i,l\in\{1,2,\ldots,N\}}$ and ${k,j\in\{1,2,\ldots,m\}}$.\\
\\
From the last equality, we observe that the regime-switching CTMC cannot change regime ($\Vm$) and state ($\XmN$) simultaneously over small time intervals.\\
\\
It follows that
\begin{eqnarray*}
	& &\e\left[\phi\left(t+h,\XmN_{t+h},\Vm_{t+h}\right)|\XmN_t=x_i,\Vm_t=v_k\right]\\
	& &\quad=\sum_{\substack{j=1\\j\neq k}}\phi(t+h,x_i,v_j)(q_{kj}h+c_{ii}^{kj}(h))+\sum_{\substack{l=1\\l\neq i}}\phi(t+h,x_l,v_k)(\lambda_{il}^kh+c_{il}^{kk}(h))\\
	& &\quad\quad\quad +\phi(t+h,x_i,v_k)\left(1+(q_{kk}+\lambda_{ii}^k)h+c_{ii}^{kk}(h)\right)+\sum_{\substack{j=1\\j\neq k}}^m\sum_{\substack{l=1\\l\neq i}}^N \phi(t+h,x_l,v_j)c_{il}^{kj}(h)\\
	& &\quad = \phi(t+h,x_i,v_k)+\sum_{j=1}^m \phi(t+h,x_i,v_j)q_{kj}h+\sum_{l=1}^N\phi(t+h,x_l,v_k)\lambda_{il}^kh+c_{N}(h),
\end{eqnarray*}
\begin{sloppypar}
since for ${i,l\in\{1,2,\ldots,N\}}$ and ${j,k\in\{1,2,\ldots,m\}}$, ${\lim_{h\rightarrow 0}\frac{\phi(t+h,x_l,v_j) c_{il}^{kj}(h)}{h}=0}$.\\
\end{sloppypar}
Also observe that
\begin{eqnarray*}
	& &\e\left[\phi(t+h,\XmN_{t+h},v_j)|\Vm_t=\Vm_{t+h}=v_j,\XmN_t=x_i\right]\\
	& &\quad=\sum_{l=1}^N \phi(t+h,x_l,v_j)\prob{P}\left(\XmN_{t+h}=x_l|\Vm_t=\Vm_{t+h}=v_j,\XmN_t=x_i\right)\\
	& &\quad=\sum_{\substack{l=1\\l\neq i}}^N \phi(t+h,x_l,v_j)(\lambda_{il}^jh+c_{il}^{j}(h))+\phi(t+h,x_i,v_j)(1+\lambda_{ii}^jh+c_{ii}^{j}(h))\\
	& &\quad=\phi(t+h,x_i,v_j)+\sum_{l=1}^N \phi(t+h,x_l,v_j)(\lambda_{il}^jh+c_{il}^{j}(h)).
\end{eqnarray*}

Hence, we have that
\begin{eqnarray*}
	& &\sum_{j=1}^m\e\left[\phi(t+h,\XmN_{t+h},v_j)|\Vm_t=\Vm_{t+h}=v_j,\XmN_t=x_i\right]\prob{P}\left(\Vm_{t+h}=v_j |\Vm_{t}=v_{k}\right)\\
	& &\quad = \sum_{j=1}^m \left(\phi(t+h,x_i,v_j)+\sum_{l=1}^N\phi(t+h,x_l,v_j)(\lambda_{il}^jh+c_{il}^j(h))\right)\left(q_{kj}h+c_{kj}(h)\right)\\
	& &\quad\quad\quad\quad +\phi(t+h,x_i,v_k)+\sum_{l=1}^N\phi(t+h,x_l,v_k)(\lambda_{il}^kh+c_{il}^{k}(h))\\
	&=&\phi(t+h,x_i,v_k)+\sum_{j=1}^m \phi(t+h,x_i,v_j)q_{kj}h+\sum_{l=1}^N\phi(t+h,x_l,v_k)\lambda_{il}^kh+c_{m}(h).
\end{eqnarray*}
where the second line follows from
$$\prob{P}\left(\Vm_{t+h}=v_j |\Vm_{t}=v_{k}\right)=\begin{cases}
	1+q_{kk}h+c_{kk}(h)&\text{if $j=k$}\\
	q_{kj}h+c_{kj}(h)&\text{if $j\neq k$}.
\end{cases}$$
The results follows from setting $c(h)=c_{m}(h)-c_{N}(h)$.

\hfill\qed

\subsection{Proof of Proposition \ref{propBermAmConv}}\label{Proof-4.5}

In order to prove Proposition \ref{propBermAmConv}, we first need to introduce some additional notation.\\
\\
Suppose that $t=0$. For $0\leq s\leq T$, define the discounted reward process of the original contract by
$$Z_s=e^{-rs}\varphi(s,F_s,V_s),$$
and the discounted reward process of the modified contract with $M$ possible exercise dates by
$$\tilde{Z}^{(M)}_s=Z_s\ind_{\{s\in\mathcal{H}_M\}}.$$
Let the process $\lbrace J^{(M)}_t\rbrace_{ 0\leq t\leq T}$ be defined by
\begin{equation}
J^{(M)}_t=\esssup_{\tau\in \mathcal{T}_{\Delta_M},\,\tau\geq t} \e_t[Z_{\tau}],\label{eqJ}
\end{equation}
and the process $\lbrace \tilde{J}^{(M)}_t\rbrace_{0\leq t\leq T}$ be defined by 
\begin{equation}
\tilde{J}^{(M)}_t=\esssup_{\tau\in \mathcal{T}_{t,T}}\e_t[\tilde{Z}_{\tau}^{(M)}].\label{eqtildeJ}
\end{equation}
 As explained in \cite{schweizer2002bermudan}, since the Bermudan contract cannot be exercised outside the region of permitted exercise times and $Z$ is non-negative, we expect the value of the Bermudan contract with the payoff process $Z$ to be equivalent to an American contract with the payoff process $\tilde{Z}^{(M)}$. That is, we expect $J^{(M)}$ to be equal to $\tilde{J}^{(M)}$, almost surely. 
This idea is formalized in \cite{schweizer2002bermudan}, Proposition 3 which is restated in the following lemma. 
\begin{lemma}[\cite{schweizer2002bermudan}, Proposition 3]\label{lemJequal}
	Fix $M\in\mathbb{N}$ and let $J^{(M)}$ and $\tilde{J}^{(M)}$ be defined as in \eqref{eqJ} and \eqref{eqtildeJ}, respectively, then $J^{(M)}=\tilde{J}^{(M)}$, almost surely.
\end{lemma}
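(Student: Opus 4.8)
The plan is to establish the almost-sure identity $J^{(M)}=\tilde{J}^{(M)}$ by proving the two inequalities separately, the forward one being essentially immediate and the reverse one requiring a projection of an arbitrary stopping time onto the grid $\mathcal{H}_M$. I fix $M\in\mathbb{N}$ and $t\in[0,T]$ throughout, and I recall that the reward function $\varphi$ takes values in $\reals_+$, so that the discounted reward process $Z$ is non-negative.

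For the inequality $J^{(M)}_t\leq\tilde{J}^{(M)}_t$, I would take any $\tau\in\mathcal{T}_{\Delta_M}$ with $\tau\geq t$. Since such a $\tau$ takes its values in $\mathcal{H}_M$, we have $\ind_{\{\tau\in\mathcal{H}_M\}}=1$ a.s., so $\tilde{Z}^{(M)}_\tau=Z_\tau$; moreover $\tau\in\mathcal{T}_{t,T}$. Hence $\e_t[Z_\tau]=\e_t[\tilde{Z}^{(M)}_\tau]\leq\tilde{J}^{(M)}_t$, and taking the essential supremum over all admissible $\tau$ yields the claim.

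For the reverse inequality, the idea is, given an arbitrary $\tau\in\mathcal{T}_{t,T}$, to build a grid-valued stopping time $\sigma$ that does at least as well under the original reward. I would set
$$\sigma:=\tau\,\ind_{\{\tau\in\mathcal{H}_M\}}+T\,\ind_{\{\tau\notin\mathcal{H}_M\}}.$$
Since $\sigma$ assumes only the finitely many values $t_0,\dots,t_M$, to see it is a stopping time it suffices to check $\{\sigma=t_z\}\in\mathcal{F}_{t_z}$ for each $z$: for $z<M$ one has $\{\sigma=t_z\}=\{\tau=t_z\}\in\mathcal{F}_{t_z}$, while $\{\sigma=T\}$ is the complement of $\bigcup_{z<M}\{\tau=t_z\}$ and so lies in $\mathcal{F}_T$. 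Thus $\sigma\in\mathcal{T}_{\Delta_M}$, and, since both branches produce values $\geq t$, $\sigma\geq t$. The decisive step then uses the non-negativity of $Z$: on $\{\tau\in\mathcal{H}_M\}$ we have $\tilde{Z}^{(M)}_\tau=Z_\tau=Z_\sigma$, whereas on $\{\tau\notin\mathcal{H}_M\}$ we have $\tilde{Z}^{(M)}_\tau=0\leq Z_T=Z_\sigma$. Consequently $\tilde{Z}^{(M)}_\tau\leq Z_\sigma$ pathwise, so $\e_t[\tilde{Z}^{(M)}_\tau]\leq\e_t[Z_\sigma]\leq J^{(M)}_t$, and taking the essential supremum over $\tau\in\mathcal{T}_{t,T}$ gives $\tilde{J}^{(M)}_t\leq J^{(M)}_t$. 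Combining the two inequalities gives the result.

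I expect the main obstacle to be precisely the reverse inequality: one has to recognize that discarding the reward off the grid is harmless exactly because the payoff is non-negative, which is what makes the replacement of $\tau$ by $\sigma$ lossless, and one must verify carefully that the projected random time $\sigma$ is genuinely a stopping time valued in $\mathcal{H}_M$. The remaining manipulations — in particular passing the pointwise bounds through the conditional expectations and the essential suprema — are standard.
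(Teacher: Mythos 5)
Your proof is correct and follows essentially the same two-step structure as the paper's argument (the restated proof of Proposition 3 of Schweizer (2002)): the inequality $J^{(M)}_t\leq\tilde{J}^{(M)}_t$ from the inclusion of grid-valued stopping times, and the reverse by replacing an arbitrary $\tau\in\mathcal{T}_{t,T}$ with a grid-valued stopping time dominating the modified reward, exploiting $Z\geq 0$. The only (immaterial) difference is your choice of projection: the paper rounds $\tau$ up to the next grid point via $\tau'=t_0\ind_{\{\tau=t_0\}}+\sum_{z=0}^{M-1}t_{z+1}\ind_{\{t_z<\tau\leq t_{z+1}\}}$, whereas you send all off-grid times to $T$; both work for exactly the same reason, namely that $\tilde{Z}^{(M)}_\tau$ vanishes off the grid.
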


We can now prove Proposition \ref{propBermAmConv}.
\begin{proof}[Proof of Proposition \ref{propBermAmConv}]
	The following is inspired by the proof of Proposition 6 of \cite{bassan2002optimal}.\\
	\\
	Without loss of generality, suppose that $t=0$. 
	We showed in Lemma \ref{lemJequal} that $J_t^{(M)}=\tilde{J}_t^{(M)}$, almost surely. Hence, the Bermudan contract in \eqref{eqBermVA} may be expressed as follows:
	$$ b_M(0,x,y)=\sup_{\tau \in \mathcal{T}_{0,T}} \e_{x,y}[\tilde{Z}^{(M)}_{\tau}],$$
 Now notice that for each $t$, $\{\tilde{Z}_t^{(M)}\}_{M\in \mathbb{N}}$ is an increasing sequence of random variables converging to $Z_t$ pointwisely, that is, for each $t\geq 0$, $\tilde{Z_t}^{(M)}(\omega)\uparrow Z_t(\omega)$ for all $\omega\in\Omega$. Thus, given $F_0=x$ and $V_0=y$ and  by using the monotone convergence theorem (\cite{williams1991probability}, Theorem 5.3), we obtain
	\begin{eqnarray*}
		v(0,x,y)&=&\sup_{\tau \in \mathcal{T}_{0,T}} \e[Z_{\tau}]=\sup_{\tau \in \mathcal{T}_{0,T}} \e[\sup_{M\in\mathbb{N}} \tilde{Z}^{(M)}_{\tau}]
		=\sup_{\tau \in \mathcal{T}_{0,T}} \sup_{M\in\mathbb{N}} \e[ \tilde{Z}^{(M)}_{\tau}]\\
		&=&  \sup_{M\in\mathbb{N}} \sup_{\tau \in \mathcal{T}_{0,T}} \e[ \tilde{Z}^{(M)}_{\tau}]
	  = \sup_{M\in\mathbb{N}} b_M(0,x,y)=\lim_{M\rightarrow\infty} b_M(0,x,y).
	\end{eqnarray*}
\end{proof}
\newpage
\setcounter{page}{1}
\section{Supplemental Material}\label{appendixSupplMaterial}
This document provides supplemental material to Analysis of VIX-Linked Fee Incentives in Variable Annuities via Continuous-Time Markov Chain Approximation.

\subsection{Accuracy of the VA price and the Approximated Optimal Surrender Strategy}
 The purpose of this subsection is to analyze numerically the accuracy of Algorithm \ref{algoOptSurrSurf} to approximate the optimal surrender strategy. In the context of American put options under the Black-Scholes setting, \cite{lamberton1993convergence} considers a similar approximation for the exercise boundary when the underlying diffusion process is approximated by a Markov chain. In particular, under this setting, \cite{lamberton1993convergence} proves theoretically the convergence of the approximated exercise boundary (also called the critical price) towards the exact exercise boundary in their Theorem 2.1. 
 \\
 In order to analyze the accuracy of our methodology against a known benchmark, we use our algorithms in a problem that has been done by others. More precisely, we compare the approach of Algorithm \ref{algoOptSurrSurf} (also called the CTMC Bermudian approximation and denoted below by ``CTMC-Berm'') for approximating the optimal surrender surface and the value of a variable annuity with early surrenders to the ones obtained using the approach of \cite{bernard2014optimal}, the ``Benchmark''. The methodology proposed by \cite{bernard2014optimal} is based on the integral equation method of \cite{kim1990analytic} under the Black-Scholes model, and cannot be easily extended to more general models. Thus, in order to compare with their method, we assume that the volatility of the index is constant so that
$$\diff S_t, =rS_t \diff t+ \sigma S_t \diff W_t,\quad t\geq 0,$$
where $\sigma>0$ is the volatility. We also suppose that $c(x,y)=\tilde{c}>0$, for all $(x,y)\in\reals_+\times\mathcal{S}_V$, and we use $g(t,y)=1$ for all $(t,y)\in[0,T]\times\mathcal{S}_V$. Hence, the VA account dynamics is given by
$$\diff F_t, =(r-\tilde{c})F_t \diff t+ \sigma F_t \diff W_t,\quad 0 \leq t\leq T.$$
We approximate the logarithm of the fund process $\tilde{X}_t=\ln F_t$, $0\leq t\leq T$ by using  CTMC approximation method. The CTMC approach for one-dimensional diffusion processes works in the exact same way as the one described in Subsection \ref{subSectCTMCvariance} for the approximation of $V$, see \cite{cui2019continuous} for details.  The number of state-space grids and time steps are set to $N=5,000$ and $M=T\times 500$, respectively. We use the grid of  \cite{tavellapricing}, as discussed in Section \ref{subsecMarketVACTMCparam}, with  a non-uniformity parameter $\tilde{\alpha}_{\tilde{X}}=5$. The grid upper ($\tilde{x}_N$) and lower bounds ($\tilde{x}_1$) of the approximated process $\tilde{X}$ are set about the mean of $\tilde{X}_t$ at $t=T/2$ as follows:
$\tilde{x}_N=\bar{\mu}(t)+\gamma\bar{\sigma}(t)$ and $\tilde{x}_1=\bar{\mu}(t)-\gamma\bar{\sigma}(t)$
where $\bar{\mu}(t)$ is the conditional mean of $\tilde{X}_t$ given $\tilde{X}_0$ and $\bar{\sigma}(t)$, the conditional standard deviation. We use $\gamma=7.2$. Now note that, under the Black-Scholes setting, we have
$$\bar{\mu}(t)=\tilde{X}_0+\left(r-\frac{\sigma^2}{2}\right)t,\quad\textrm{ and }\quad\bar{\sigma}(t)=\sigma\sqrt{t}.$$

As a benchmark, we use the value of the variable annuity with early surrenders and the optimal surrender boundary obtained using the method of \cite{bernard2014optimal} with $1500$ steps per year. Other parameters are $G=100$, $r=0.03$ and $T=15$. We tested our method for different volatility levels $\sigma\in\{0.1,\,0.2,\,0.3,\,0.4\}$ and initial values of the VA account $F_0\in\{90,100\}$. The fair fee $\tilde{c}^*$ is calculated such that the expected discounted value of the maturity benefit equals the value of the VA account at inception, that is $F_0=\e\left[e^{-rT}\max(G, F_T)\right]$. We compare to the benchmark the value of the variable annuity with early surrenders ``VA with ES'' and the optimal surrender boundary (``Opt. Surr. Bound.''). The results are reported in Table \ref{tblConvSurrBound_BS}. The column ``Rel. Err.'' documents the relative errors whereas ``Max Re. Err'' documents the maximum relative errors\footnote{Given a benchmark value $y$ and its approximation $y_{approx}$, the relative error is defined as $|y-y_{approx}|/|y|$ for $y\neq 0$; whereas the maximum relative error is the largest relative error over a sample of benchmark values and their approximations.}. 
 \begin{table}[h!]
	\begin{tabular}{l cc ccc c c}
		\toprule
		   &   \multirow{2}{*}{$\sigma$}& \multirow{2}{*}{$\tilde{c}^*$ ($\%$)} &\multicolumn{3}{c}{\textbf{VA with ES}} & &\textbf{Opt. Surr. Bound.}\\
		 \cmidrule{4-6} \cmidrule{8-8} 
		& & & CTMC-Berm & Bernard et al. (2014) & Rel. Err.& & Max Rel. Err.\\
		\midrule
		\multirow{4}{*}{$F_0=100$} &0.1&0.1374 &100.851600 & 100.851748 &1.47e-6 & & 2.60e-3\\
		  &0.2&0.9094 &104.400379 & 100.401287 &8.70e{-6} & & 5.20e{-3}\\
		  &0.3&1.9277 &108.577366 & 108.579001 &1.51e{-5} & &7.78e{-3}\\
		  &0.4&2.9415 &112.823616 & 112.826112 &2.21e{-5} & &1.04e{-2}\\
		\midrule
	    \multirow{4}{*}{$F_0=90$} &0.1&0.2641 &91.28494 & 91.285171 &2.53e-6 & & 2.60e-3\\
		&0.2&1.3062 &94.989758 & 94.990712 &1.01e{-5} & & 5.20e{-3}\\
		&0.3&2.5571 &99.012022 & 99.013806 &1.80e{-5} & &7.78e{-3}\\
		&0.4&3.7631 &103.022547 & 103.025197 &2.57e{-5} & &1.03e{-2}\\
		\bottomrule
	\end{tabular}
	\caption{Relative errors of a variable annuities with ES and maximum relative errors of the optimal surrender boundaries.} 
	\label{tblConvSurrBound_BS}
\end{table} 
\\
\\
From Table \ref{tblConvSurrBound_BS}, we see that the CTMC Bermudian approximation achieves a high level of accuracy across all volatility levels and initial value of VA account $F_0$.

\subsection{Accuracy and efficiency of the CTMC approximation for the VIX index}
We assess the accuracy of the CTMC approximation for the volatility index for the 3/2 and Heston models. Under a Heston-type model, the VIX has a closed-form expression given by
\begin{equation}\label{eqVIXsquared}
	\vix^2_t=B+ A V_t
\end{equation}
with $A=\frac{1-e^{-\kappa\tau}}{\kappa\tau}$ and $B=\frac{\theta(\kappa\tau-1+e^{-\kappa\tau})}{\kappa\tau}$, see \cite{ZhuZang2007} for details.\\
\\
For the 3/2 model, a closed-form expression for the $\vix$ may be found in \cite{carr2007new}, Theorem 4. However, as pointed out by \cite{drimus2012options}, the integral that appears in the analytical formula is difficult to implement and is not suited for fast and accurate numerical approximation. For this reason, benchmark results are obtained via Monte Carlo simulation\footnote{We simulate 500K paths (plus 500K antithetic variables) using Milstein discretization scheme and we use 5,000 time steps. } under 3/2 model.\\
\\
The market and CTMC parameters are set to the ones of Subsection \ref{subsecMarketVACTMCparam} for the Heston model, except that we set $m=1,000$ (rather than $m=50$). For the 3/2 model, we use the parameters reported in Section \ref{subsectParam32}. The results are presented in Table \ref{tblVIXapproxAccuracy}. The column ``$V_t$'' (resp. $1/V_t$)  displays the initial value of the variance at time $t\geq 0$ for the Heston model (resp. the 3/2 model), the column ``CTMC'' shows the result of the VIX approximation using Algorithm \ref{algoCTMCapproxVIX} with $n=1,000$ time steps, and the column ``Benchmark'' reports the benchmark value calculated either by using the closed-form formula \eqref{eqVIXsquared} (for the Heston model) or by Monte Carlo simulation (for the 3/2 model). The relative error are documented in column ``Rel. err''.

\begin{table}[h!]
	\begin{subtable}[c]{0.49\linewidth}
		\centering
		\scalebox{0.98}{
			\begin{tabular}{c|ccc}
		\hline
		$\mathbf{V_t}$ & \textbf{CTMC} & \textbf{Benchmark} & \textbf{Rel. err.} \\
		\hline \hline
		 0.01&	11.1077\%&	11.1068\%&	8.10e-05\\
		0.02&	14.6829\%&	14.6824\%&	3.41e-05\\
		0.04&	20.0000\%&	20.0000\%&	0.00e+00\\
		0.06&	24.1746\%&	24.1749\%&	1.24e-05\\
		0.09&	29.3433\%&	29.3439\%&	2.04e-05\\
		\hline
	\end{tabular}}
		\subcaption{Heston model}
	\end{subtable}
	\begin{subtable}[c]{0.49\linewidth}
		\centering
		\scalebox{0.98}{
			\begin{tabular}{c|ccc}
		\hline
		$1/\mathbf{V_t}$ & \textbf{CTMC} & \textbf{Benchmark}  & \textbf{Rel. err.} \\
		\hline\hline
	    0.01	&11.2202\%&	11.2203\%6&	8.91e-06\\
		0.02&	15.7016\%&	15.7020\%&	2.55e-05\\
		0.04&	21.4892\%&	21.4901\%&	4.19e-05\\
		0.06&	25.3096\%&	25.3112\%&	6.32e-05\\
		0.09&	29.2708\%&	29.2735\%5&	9.22e-05\\
		
		\hline
	\end{tabular}}
		\subcaption{3/2 model}
	\end{subtable}
	\caption{Accuracy of the CTMC- VIX approximation,  Algorithm \ref{algoCTMCapproxVIX}}
	\label{tblVIXapproxAccuracy}
\end{table} 

The results of Table \ref{tblVIXapproxAccuracy} confirm the high accuracy of the CTMC-VIX approximation for both models. It is worth mentioning that, when using Algorithm \ref{algoCTMCapproxVIX}, we obtain simultaneously the value of the $\vix_t^{(m),k}$ for all $v_k\in\mathcal{S}_V^{(m)}$ within less than a 0.1 second for both models. The value of the CTMC approximation given a particular value for $V_t$ is then interpolated between the appropriate grid points. This further increases the efficiency of our algorithm.

\subsection{VA prices accuracy and computation time under the Heston model}\label{appendixCTMCresult}
Table \ref{tblFairValueSet1N100} shows the value of a variable annuity with and without surrender rights (``VA with ES'' and ``VA without ES'', respectively) under the Heston model with the Uncapped $\vix^2$-linked fees using $N=100$, $N=1,100$ and $M=252\times 10$; and using $N=2,000$ and $M=500\times 10$. All other market, VA and CTMC parameters are the same as in Subsection \ref{subsecMarketVACTMCparam}.
\begin{table}[h!]
	\centering
	\resizebox{\textwidth}{!}{\begin{tabular}{ll| cccc|c}
		\hline
		 & $\mathbf{\tilde{m}^*= }$ & $\mathbf{0.0000}$ & $\mathbf{0.1500}$ & $\mathbf{0.3000}$ & $\mathbf{0.4345}$ & \\
		& $\mathbf{c^*\,\,=  }$ & $\mathbf{1.5338\%}$  & $\mathbf{1.0036\%}$ & $\mathbf{0.4741\%}$ & $\mathbf{0.000\%}$ & \textbf{Computation Time (sec.)}\\
		\hline
		\hline
		\multirow{3}{*}{\textbf{VA without ES}} & $N\,\,=2,000$ & $100.00090$ & $100.00091$& $100.00092$& $100.00093$ & $2,500$\\
		& $N\,\,=1,100$ & $100.00094$&$100.00095$ & $100.00096$& $100.00097$ & $400$\\
		 & $N\,\,=100$ & $100.00750$ & $100.00769$& $100.00789$& $100.00807$ & $0.85$\\
		 \hline
		\multirow{3}{*}{\textbf{VA with ES}} & $N\,\,=2,000$&$103.01785$ & $103.00823$ & $103.00330$ & $103.00367$ &$7,100$\\
		  & $N\,\,=1,100$ &$103.01743$ & $103.00788$ & $103.00300$ & $103.00341$ & $1,600$\\
		  & $N\,\,=100$ &$103.0162$ & $103.00676$ & $103.00190$ & $103.00228$ & $54$\\
		\hline 
		\multirow{3}{*}{\textbf{ES value}} & $N=2,000$ & $3.01695$ & $3.00732$ & $3.00238$ & $3.00274$ &N/A\\
	   	&$N\,\,=1,100$ & $3.01649$ & $3.00693$ & $3.00204$ & $3.00243$ & N/A\\ 
		 & $N\,\,=100$ & $3.00862$ & $2.99907$ & $2.99401$ & $2.99421$ & N/A\\
		\hline
	\end{tabular}}
	\caption{Variable annuity with and without early surrenders using CTMC Approximation with $N=100$ and $1,100,$ with  $M=252\times 10$; and  $N=2,000$ with $M=500\times 10$.}
	\label{tblFairValueSet1N100}
\end{table} 
\\
\\
All the numerical results and computation times in Table \ref{tblFairValueSet1N100} are performed using Equation \ref{eq_CTMC_VA} and Algorithm \ref{algoVAsurrenders} combine with the Expokit procedures of \cite{sidje1998expokit}, function \textit{expv}. Since fair fees are calibrated at inception using the exact pricing formula of \cite{cui2017}, the benchmark value for the VA without surrender rights is $F_0=100$. When $N=100$, accurate VA prices are obtained extremely fast (within $54$ sec. for the VA with surrender rights and less than a second for the VA without surrender rights). The absolute error is around $10^{-3}$ for both values\footnote{The benchmark for the VA with early surrenders is the approximated CTMC value obtained using $N=2,000$ and $M=500$.}.\\
\\
Similar results are obtained with the Fast Algorithms (\ref{algoVA_WOsurrendersFast}, \ref{algoVA_WITHsurrendersFast} and \ref{algoOptSurrSurfFast}). The results are reported in Table \ref{tblFairValueSet1N100Fast}. The values of VA with and without surrender rights are calculated simultaneously, as per Remark \ref{rmkFastAlgoEfficiency}. So, the computation times in column ``Computation Time (sec.)'' are for both prices.

\begin{table}[h!]
	\centering
	\resizebox{\textwidth}{!}{\begin{tabular}{ll| cccc|c}
			\hline
			& $\mathbf{\tilde{m}^*= }$ & $\mathbf{0.0000}$ & $\mathbf{0.1500}$ & $\mathbf{0.3000}$ & $\mathbf{0.4345}$ & \\
			& $\mathbf{c^*\,\,=  }$ & $\mathbf{1.5338\%}$  & $\mathbf{1.0036\%}$ & $\mathbf{0.4741\%}$ & $\mathbf{0.000\%}$ & \textbf{Computation Time (sec.)}\\
			\hline
			\hline
			\multirow{3}{*}{\textbf{VA without ES}} & $N\,\,=2,000$ & $99.99615$&$99.99611$ & $99.99607$& $99.99603$ & N/A \\
			& $N\,\,=1,100$ & $99.99619$&$99.99615$ & $99.99611$& $99.99607$ & N/A\\
			& $N\,\,=100$ & $100.00276$ & $100.00290$& $100.00304$& $100.00317$ & N/A\\
			\hline
			\multirow{3}{*}{\textbf{VA with ES}} & $N\,\,=2,000$&$103.02361$ & $103.01360$ & $103.00815$ & $103.00789$ &$4,405$\\
			& $N\,\,=1,100$ &$103.02360$ & $103.01359$ & $103.00814$ & $103.00788$ & $1,244$\\
			& $N\,\,=100$ &$103.02237$ & $103.01251$ & $103.00715$ & $103.00680$ & $9.60$\\
			\hline 
			\multirow{3}{*}{\textbf{ES value}} & $N=2,000$ & $3.02745$ & $3.01748$ & $3.01208$ & $3.01186$ &N/A\\
			&$N\,\,=1,100$ & $3.02741$ & $3.01744$ & $3.01203$ & $3.01181$ & N/A\\ 
			& $N\,\,=100$ & $3.01961$ & $3.00961$ & $3.00411$ & $3.00363$ & N/A\\
			\hline
	\end{tabular}}
	\caption{Variable annuity with and without early surrenders (ES) using CTMC Approximation Fast Algorithms with $N=100$ and $1,100,$ with  $M=500\times 10$; and  $N=2,000$ with $M=500\times 10$.}
	\label{tblFairValueSet1N100Fast}
\end{table} 

By comparing the two tables, we note that the Fast Algorithms provide very accurate results extremely fast\footnote{Recall that the computation times in Table \ref{tblFairValueSet1N100Fast} for the Fast Algorithms are for the value of the VAs with and without ES simultaneously.} compared to the original Algorithms 

\subsection{ Other Numerical Analysis of VIX-linked Fee Incentives in the Heston Model}\label{appendixNumAnalysisHestonG}
In this appendix, we show the numerical results obtained under the Heston model for the three fee structures (Uncapped $\vix^2$, Capped $\vix^2$ and Uncapped $\vix$), detailed in Subsection \ref{subsectFeeStruct}, when the guaranteed amount is set to $G=F_0 e^{\tilde{g} T}$ with $\tilde{g}=2\%$ (rather than $G=F_0$). The conclusions are similar to the ones detailed in Subsection \ref{subsectNumAnalHeston}.\\
\\
Unless stated otherwise, in this section, all market, VA and CTMC parameters are the same as in Subsection \ref{subsecMarketVACTMCparam}.
	 except for $G=F_0 e ^{\tilde{g}T}$ with $\tilde{g}=2\%$ and $M=T\times 252$.\\
	 \\
Fair fee parameters $(\tilde{c}^*, \tilde{m}^*)$ are presented in Table \ref{tblFairFeeGmodif}, whereas Table \ref{tblESvalueGmodif} shows the approximated values of early surrenders. When $c_t=\tilde{c}^*+\tilde{m}^*\vix^2$, the fair parameters are obtained using the exact formula of \cite{cui2017}. For the other fee structures, the fair parameters are obtained using CTMC approximation with $N=100$, $m=50$ and $M=T \times 252$ to accelerate the computation time.
\begin{table}[h!]
	\begin{subtable}[c]{0.495\linewidth}
		\centering
		\scalebox{0.95}{
				\begin{tabular}{c cccc}
		\hline
		 & \multicolumn{4}{c}{$c_t=\tilde{c}^*+\tilde{m}^*\vix^2_t$}\\
			\hline
		$\tilde{m}^*$ & $0.0000$ & $0.3000$ & $0.8000$ & $1.1313$\\
	
		$\tilde{c}^*$ & $3.82542\%$  & $2.80632\%$ & $1.11545\%$ & $0.0000\%$\\
		\hline
		\hline
		& \multicolumn{4}{c}{$c_t=\min\{\tilde{c}^*+\tilde{m}^*\vix^2_t,K\}$, $K=4.5\%$}\\
		\hline
		$\tilde{m}^*$ & $0.0000$ & $0.3000$ & $0.8000$ & $1.4144$\\
		
		$\tilde{c}^*$ & $3.82542\%$  & $2.83600\%$ & $1.45530\%$ & $0.0000\%$\\
		\hline
		\hline
		& \multicolumn{4}{c}{$c_t=\tilde{c}^*+\tilde{m}^*\vix_t$}\\
		\hline
		$\tilde{m}^*$ & $0.0000$ & $0.0750$ & $0.1250$ & $0.2128$\\
		
		$\tilde{c}^*$ & $3.82542\%$  & $2.47570\%$ & $1.57660\%$ & $0.00000\%$\\
		\hline
	\end{tabular}}
		\subcaption{Fair fee vectors $(\tilde{c}^*,\tilde{m}^*)$}\label{tblFairFeeGmodif}
	\end{subtable}
	\begin{subtable}[c]{0.495\linewidth}
		\centering
		\scalebox{0.95}{
			\begin{tabular}{c cccc}
		\hline
		& \multicolumn{4}{c}{$c_t=\tilde{c}^*+\tilde{m}^*\vix^2_t$}\\
		\hline
		$\tilde{m}^*$ & $0.0000$ & $0.3000$ & $0.8000$ & $1.1313$\\
		
	    ES Value & $5.00216$  & $5.01219$ & $5.03998$ & $5.06681$\\
		\hline
		\hline
		& \multicolumn{4}{c}{$c_t=\min\{\tilde{c}^*+\tilde{m}^*\vix^2_t,K\}$, $K=4.5\%$}\\
		\hline
		$\tilde{m}^*$ & $0.0000$ & $0.3000$ & $0.8000$ & $1.4144$\\
		
		ES Value  & $5.00216$  & $5.00415$ & $4.98201$ & $4.95968$\\
		\hline
		\hline
		& \multicolumn{4}{c}{$c_t=\tilde{c}^*+\tilde{m}^*\vix_t$}\\
		\hline
		$\tilde{m}^*$ & $0.0000$ & $0.0750$ & $0.1250$ & $0.2128$\\
		
     	ES Value  & $5.00216$  & $5.00351$ & $5.00569$ & $5.01254$\\
		\hline
	\end{tabular}}
		\subcaption{Early surrender values (``ES'' values).}\label{tblESvalueGmodif}
	\end{subtable}
	\caption{Fair fee vectors and approximated early surrender values under Heston model when $G=F_0e^{\tilde{g} T}$, $\tilde{g}=2\%$}
\end{table} 
Figures \ref{figBoundary3D_Gmodif2}, \ref{figBoundary_Gmodif2} and \ref{figBoundaryAll_Gmodif2} display the approximated optimal surrender surfaces under the Heston model for the three fee structures when $G=F_0 e ^{\tilde{g}T}$ with $\tilde{g}=2\%$. 
\begin{figure}[!t]
	\centering
	\begin{tabular}{cccc}
	    \multicolumn{4}{c}{\small{$c_t=\tilde{c}^*+\tilde{m}^*\vix^2_t$}}\\
		\includegraphics[scale=0.2]{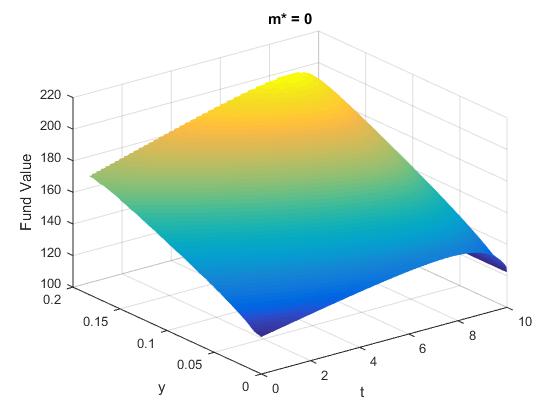}  & \includegraphics[scale=0.2]{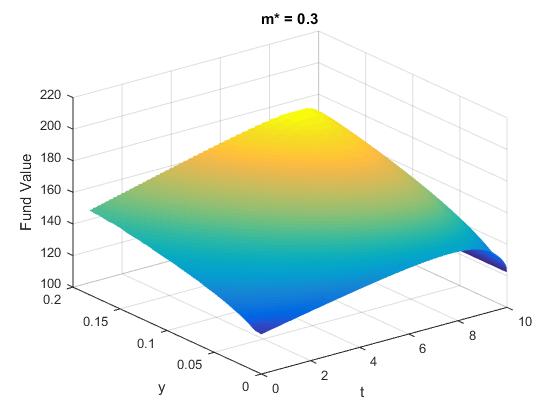}   & \includegraphics[scale=0.2]{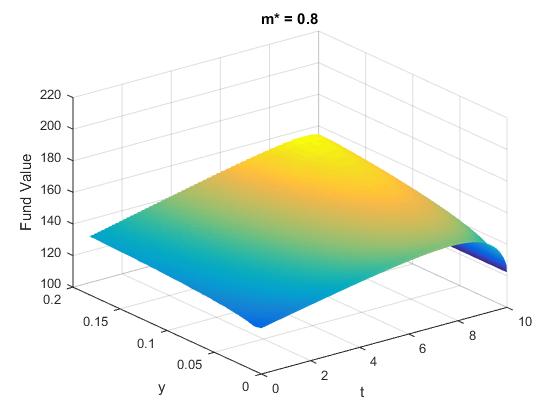}  & \includegraphics[scale=0.2]{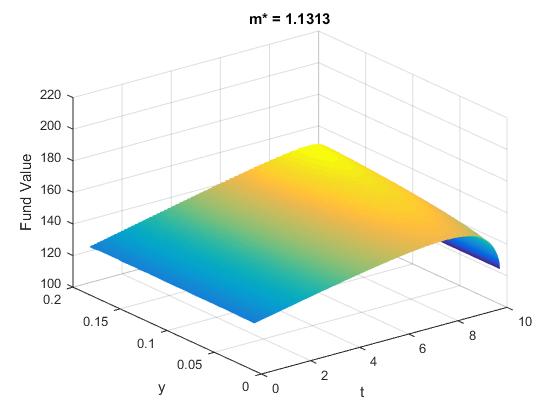} \\
		\multicolumn{4}{c}{\small{$c_t=\min\{\tilde{c}^*+\tilde{m}^*\vix^2_t,K\}$, $K=4.5\%$}}\\
		\includegraphics[scale=0.2]{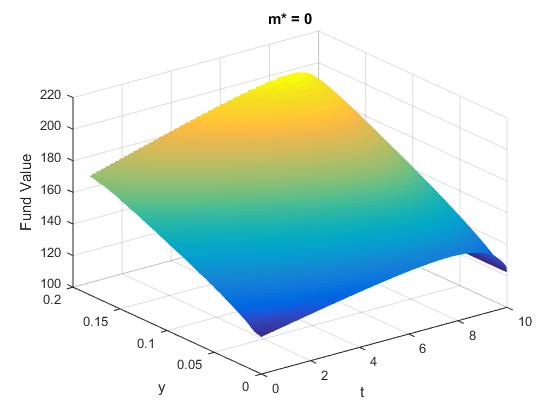} & \includegraphics[scale=0.2]{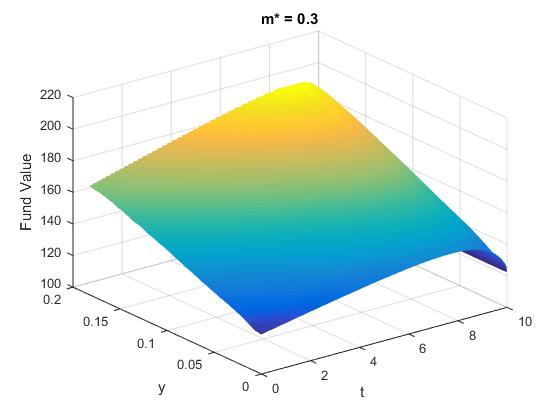}&	\includegraphics[scale=0.2]{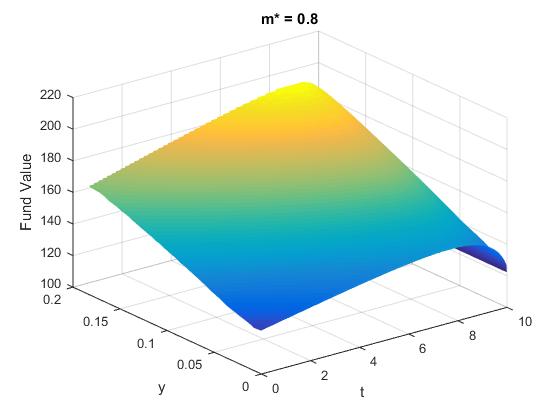} &	\includegraphics[scale=0.2]{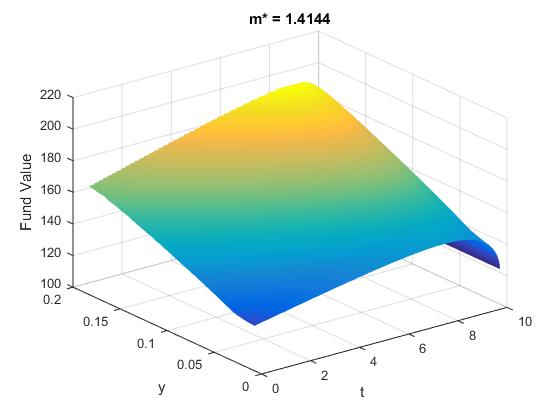} \\
		 \multicolumn{4}{c}{\small{$c_t=\tilde{c}^*+\tilde{m}^*\vix_t$}}\\
		\includegraphics[scale=0.2]{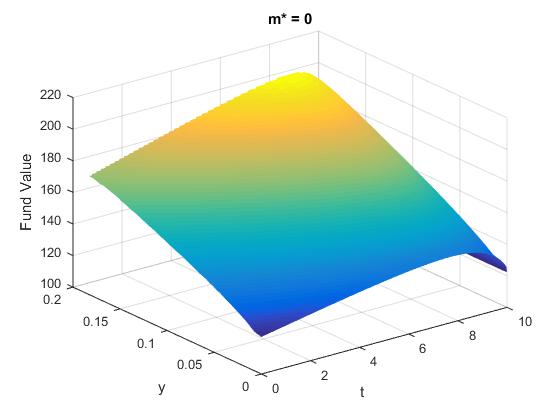} &\includegraphics[scale=0.2]{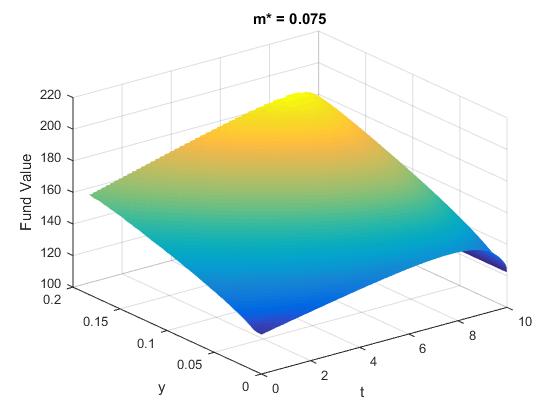}  &\includegraphics[scale=0.2]{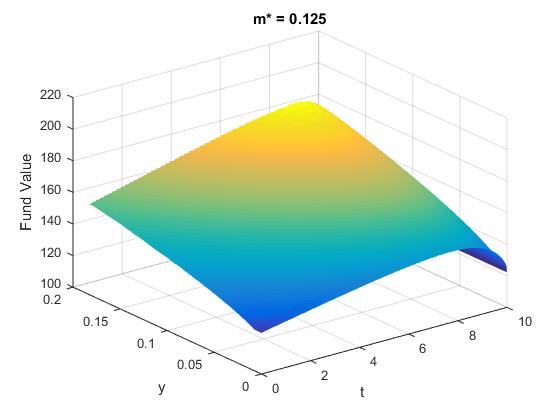}  &
		\includegraphics[scale=0.2]{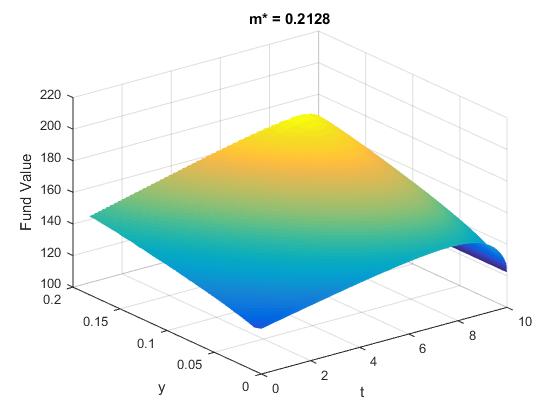}  
    \end{tabular}
	\caption{\small{Approximated optimal surrender surfaces for different values of fair multiplier $\tilde{m}^*$ under the Heston model, $G=F_0 e^{\tilde{g} T}$ with $\tilde{g}=2\%$.}}\label{figBoundary3D_Gmodif2}
\end{figure}
\newpage
\begin{figure}[!t]
	\centering
	\begin{tabular}{cccc}
		\multicolumn{4}{c}{\small{$c_t=\tilde{c}^*+\tilde{m}^*\vix^2_t$}}\\
		\includegraphics[scale=0.2]{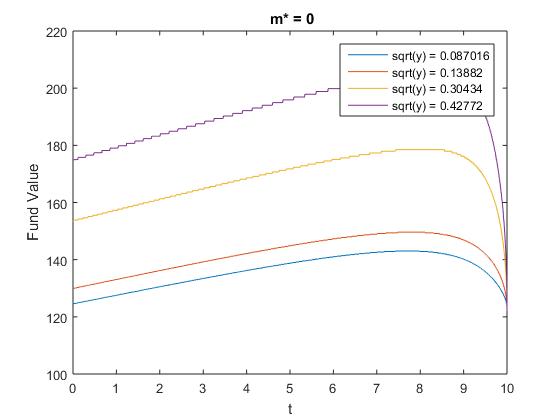}   &\includegraphics[scale=0.2]{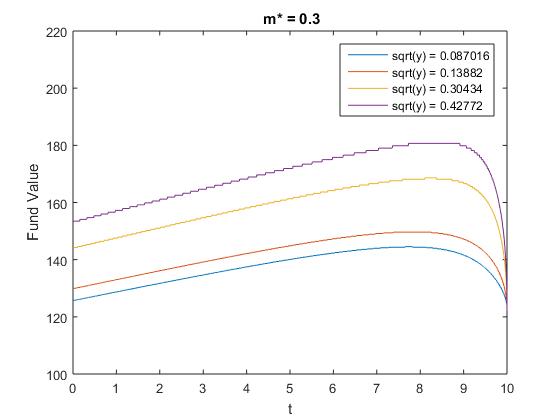}   & \includegraphics[scale=0.2]{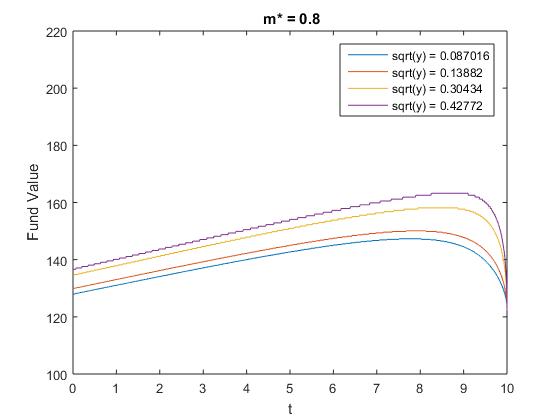}  & \includegraphics[scale=0.2]{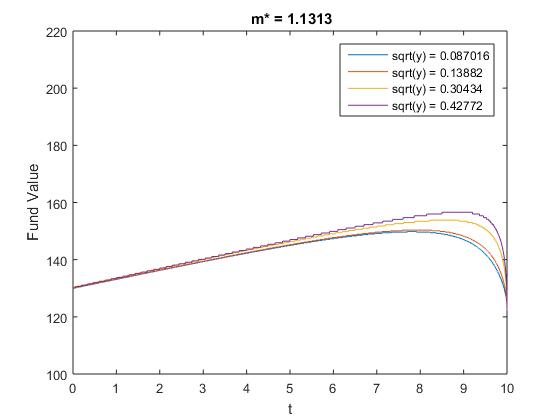} \\
		\multicolumn{4}{c}{\small{$c_t=\min\{\tilde{c}^*+\tilde{m}^*\vix^2_t,K\}$, $K=4.5\%$}}\\
			\includegraphics[scale=0.2]{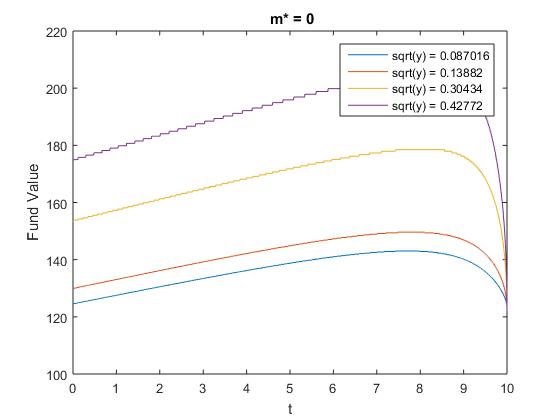}  & 	\includegraphics[scale=0.2]{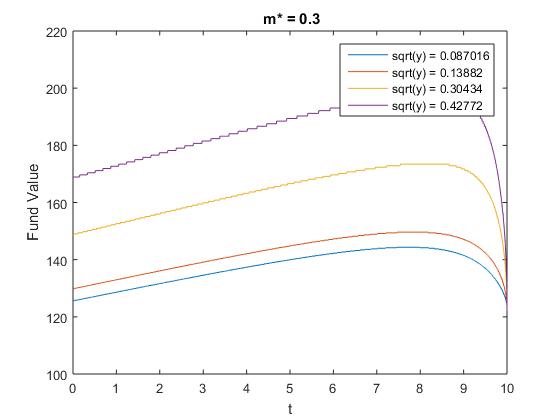} &		\includegraphics[scale=0.2]{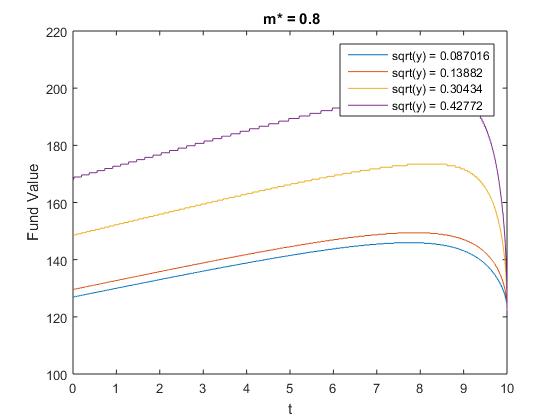}   &		\includegraphics[scale=0.2]{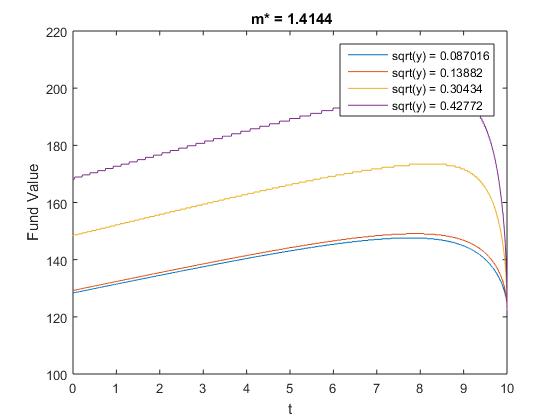}   \\
		\multicolumn{4}{c}{\small{$c_t=\tilde{c}^*+\tilde{m}^*\vix_t$}}\\
		\includegraphics[scale=0.2]{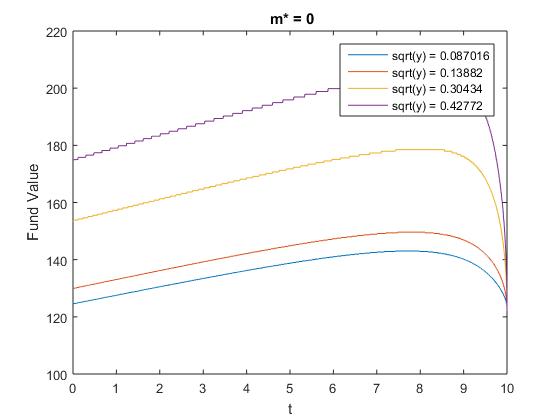}   &\includegraphics[scale=0.2]{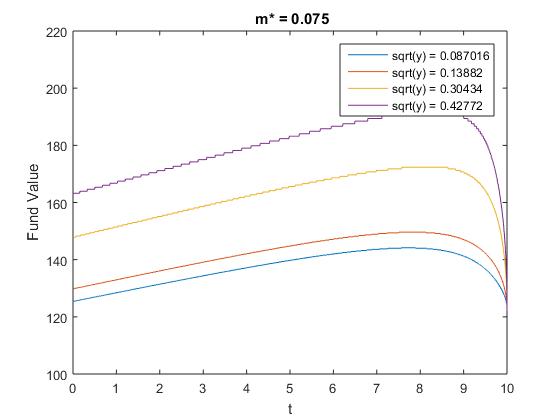}    &\includegraphics[scale=0.2]{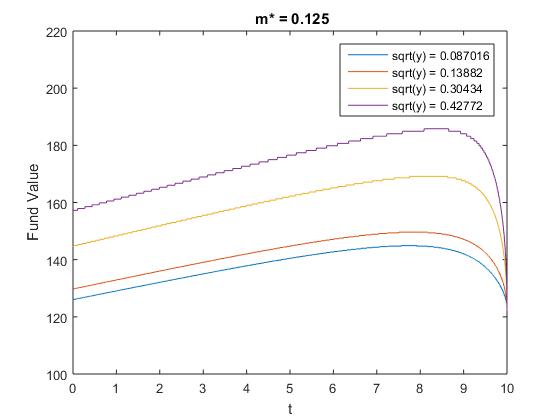}    &
		\includegraphics[scale=0.2]{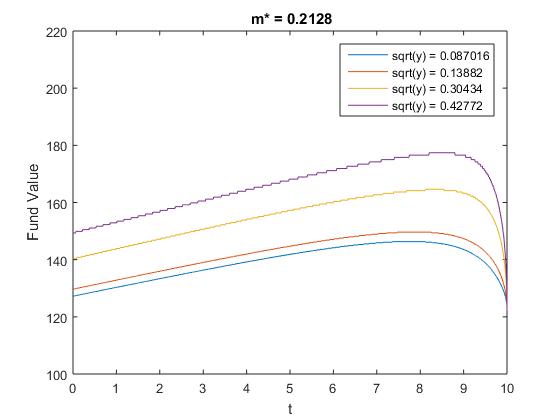}   
	\end{tabular}
	\caption{\small{ The $y$ section of the approximated optimal surrender surface under the Heston model, $f^{(m,N)}_y$, for different volatility levels $\sqrt{y}$ and fair multipliers $\tilde{m}^*$, $G=F_0 e^{\tilde{g} T}$ with $\tilde{g}=2\%$.}}\label{figBoundary_Gmodif2}
\end{figure}
\begin{figure}[!t]
	\centering
	\begin{tabular}{cc}
		\small{$c_t=\tilde{c}^*+\tilde{m}^*\vix^2_t$} & \small{$c_t=\min\{\tilde{c}^*+\tilde{m}^*\vix^2_t,K\}$, $K=4.5\%$} \\
		\includegraphics[scale=0.25]{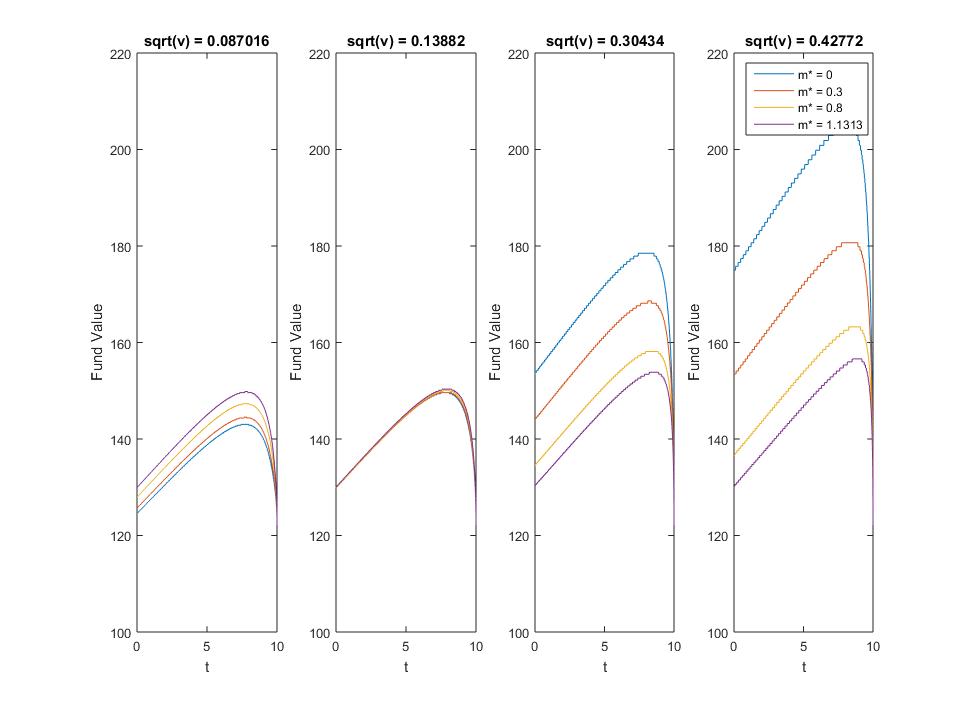}   &\includegraphics[scale=0.25]{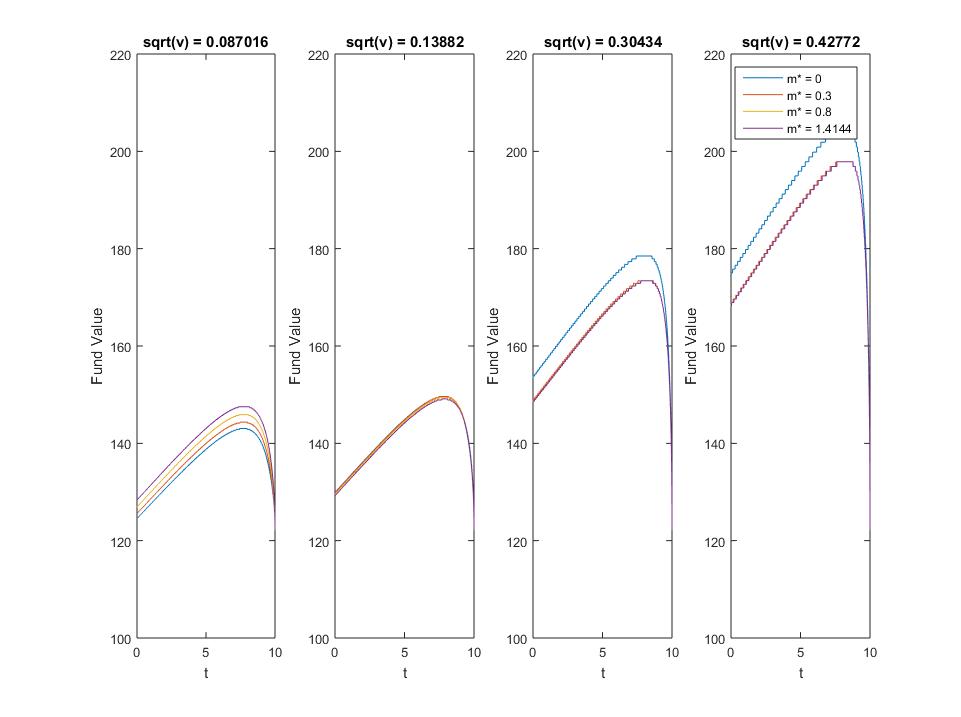}  \\
		\multicolumn{2}{c}{\small{$c_t=\tilde{c}^*+\tilde{m}^*\vix_t$}}\\
		  \multicolumn{2}{c}{\includegraphics[scale=0.25]{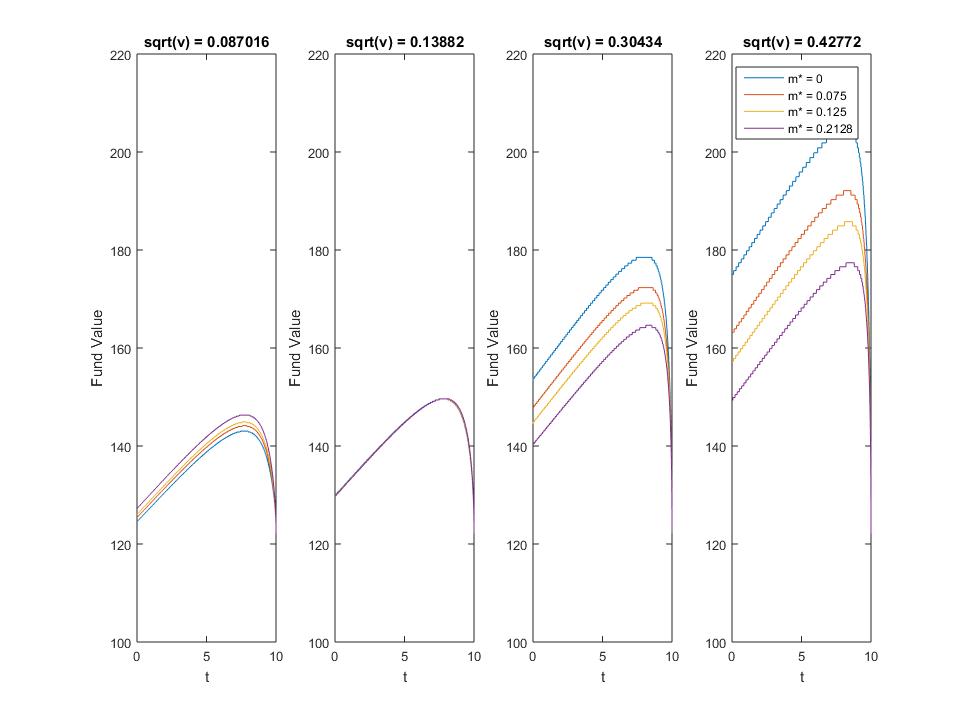}}
	\end{tabular}
	\caption{\small{ The $y$ section of the approximated optimal surrender surface under the Heston model, $f^{(m,N)}_y$, for different volatility levels $\sqrt{y}$ and fair multipliers $\tilde{m}^*$, $G=F_0 e^{\tilde{g} T}$ with $\tilde{g}=2\%$.}}\label{figBoundaryAll_Gmodif2}
\end{figure}

\newpage
\newpage
\subsection{Numerical Analysis under 3/2 Model}\label{subsectNumAnal32}
In this section, we analyse numerically the impact of VIX-linked fee incentives under the 3/2 model.
\subsubsection{Market, VA and CTMC Parameters}\label{subsectParam32}

In order for the results under the 3/2 model to be comparable to the ones obtained under the Heston model, the market parameters ($\theta$, $\kappa$ and $\sigma$ for the 3/2 model) are calibrated to at-the-money call options\footnote{We consider 4 options with maturity $T=0.5,\,2.5,\,5$ and $10$. We used $S_0=K=100$ and a dividend yield of $1.5338\%$.} priced using the Heston model with the market parameters in Table \ref{tblMarketParam}. The initial value of the variance is set to $V_0=0.03$ for the Heston model and $V_0=1/0.03$ for the 3/2 model. The correlation $\rho=-0.75$ and the risk-free rate $r=0.03$ are assumed to be the same under both models. The resulting market parameters are presented in Table \ref{tblMarketParam32} and the model dynamics is given in Table \ref{tblExSVmodels}\\
\begin{table}[h!]
	\begin{tabular}{ccccccc}
		\hline
		  Parameter   & ${V_0}$ & ${\kappa}$ & ${\theta}$ & ${\sigma}$ & $\mathbf{\rho}$ & ${r}$\\
		Value    & $1/0.03$ & $5.7488$ & $46.1326$ & $15.4320$ & $-0.75$ & $0.03$\\ 
		\hline
	\end{tabular}
	\caption{Market parameters for 3/2 model}
	\label{tblMarketParam32}
\end{table}  

Unless stated otherwise, all numerical experiments in this section are performed using the CTMC parameters listed in Table \ref{tblCTMCparam32}. Note that for the 3/2 model, the VIX is approximated using Algorithm \ref{algoCTMCapproxVIX} with $n=1,000$ time steps, whereas in the Heston model, the volatility index is calculated using a closed-form formula.
\begin{table}[h!]
	\begin{tabular}{ccccccccccc}
		\hline
		 Parameter & ${m}$ & ${N}$ & ${v_1}$ & ${v_m}$ & $\tilde{\alpha}_v$& $x_1$ & $x_N$ & $\tilde{\alpha}_X$ & $M$ &n\\
		Value &$1,000$ & $1,000$ & $V_0/200$ & $8 v_0$ & $0.5764$ & $-X_0$ & $2 X_0$ &$2/100$ &$5,000$ &1,000   \\ 
		\hline
	\end{tabular}
	\caption{CTMC Parameters for 3/2 Model}
	\label{tblCTMCparam32}
\end{table}

\subsubsection{Fee Structures and Fair Fee Parameters}
As for the Heston model, we consider three fee structures: the Uncapped $\vix^2$, the Capped $\vix^2$ and the Uncapped $\vix$, see Section \ref{subsectFeeStruct} for details. The fair parameters are calibrated using a CTMC approximation with $N=100$ (all other CTMC parameters are the same as in Table \ref{tblCTMCparam32}), to reduce the computation time. Table \ref{tblFairFee32} presents the fair fee vectors  $(\tilde{c}^*, \tilde{m}^*)$  obtained under the 3/2 model. Note how close the fair fee vectors obtained under the 3/2 model (Table \ref{tblFairFee32}) are from the ones obtained under the Heston model (Table \ref{tblFairFeeHeston}).\\

\begin{table}[h!]
			\begin{tabular}{c cccc}
				\hline
				& \multicolumn{4}{c}{$c_t=\tilde{c}^*+\tilde{m}^*\vix^2_t$}\\
				\hline
				$\tilde{m}^*$ & $0.0000$ & $0.15000$ & $0.3000$ & $0.4235$\\
				
				$\tilde{c}^*$ & $1.5273\%$  & $0.9858\%$ & $0.4449\%$ & $0.0000\%$\\
				\hline
				\hline
				& \multicolumn{4}{c}{$c_t=\min\{\tilde{c}^*+\tilde{m}^*\vix^2_t,K\}$, $K=2\%$}\\
				\hline
				$\tilde{m}^*$ & $0.0000$ & $0.1500$ & $0.3000$ & $0.5791$\\
				
				$\tilde{c}^*$ & $1.5273\%$  & $1.0335\%$ & $0.6351\%$ & $0.0000\%$\\
				\hline
				\hline
				& \multicolumn{4}{c}{$c_t=\tilde{c}^*+\tilde{m}^*\vix_t$}\\
				\hline
				$\tilde{m}^*$ & $0.0000$ & $0.0250$ & $0.0500$ & $0.0846$\\
				
				$\tilde{c}^*$ & $1.5273\%$  & $1.0757\%$ & $0.6241\%$ & $0.0000\%$\\
				\hline
		\end{tabular}
	\caption{Fair fee vectors $(\tilde{c}^*,\tilde{m}^*)$ under 3/2 model}
	\label{tblFairFee32}
\end{table} 

\subsubsection{Effect of VIX-Linked Fees on Surrender Incentives}
Recall from Table \ref{tblExSVmodels} that under the 3/2 model, the dynamics of the index and the variance processes are given by\footnote{Note that in this formulation of the 3/2 model, the process $V$ represents the inverse of the variance process. It is also common to see the 3/2 model expressed in terms of its variance, see for instance \cite{drimus2012options}.}
\begin{equation}
	\begin{array}{ll}
		\diff S_t  &= r S_t \diff t+\frac{1}{\sqrt{V_t}} S_t \diff W_t^{(1)},\\
		\diff V_t & =\kappa(\theta -V_t)\diff t+\sigma\sqrt{V_t}\diff W^{(2)}_t,\label{eqEDS_F32_Q}
	\end{array}
\end{equation}
where $W=(W^{(1)}, W^{(2)})^T$ is a bi-dimensional correlated Brownian motion under $\prob{Q}$ and such that $[W^{(1)},\,W^{(2)}]_t=\rho t$ with $\rho\in[-1,0]$\footnote{The parameter $\rho$ is assumed to be non-positive for the martingale measure to exist, see Footnote \ref{footnoteMGmeasure32} for details.}, and $\kappa,\theta, \sigma>0$ with $2\kappa\theta>\sigma^2$.\\
\\
Now from Lemma \ref{lemmaDecoupleBM}, we find that $\gamma(x)=\frac{\ln(x)\rho}{\sigma}$. Thus, given a certain fee process $\{c_t\}_{0\leq t\leq T}$ (see Subsection \ref{subsectFeeStruct} for examples), the dynamics of the auxiliary process is obtained as follows
\begin{equation}
	\begin{array}{ll}
		\diff X_t  &= \mu_X(X_t,Y_t) \diff t+\sigma_X(Y_t)\diff W^{*}_t,\\
		\diff V_t&=\mu_V(V_t)\diff t+\sigma_V(V_t)\diff W^{(2)}_t,\label{eqEDS_X_Q32}
	\end{array}
\end{equation} 
where $\mu_X(X_t,V_t)=r-c_t+\frac{\rho\kappa}{\theta}+\frac{1}{V_t}\left(\frac{\rho\sigma}{2}-\frac{1}{2}-\frac{\rho\kappa\theta}{\sigma}\right)$, and $\sigma_X(V_t)=\sqrt{\frac{1-\rho^2}{V_t}}$, $0\leq t\leq T$.\\
\\
When the fee is tied to the VIX, the form of the fee function is not known explicitly at this point under the 3/2 model, since the the VIX does not have a closed-form expression (see footnote \ref{footnoteApproxVIX32} for details). However, as illustrated in \eqref{eqEDS_Xm}, when the inverse variance process $V$ is replaced by its CTMC approximation $V^{(m)}$, the auxiliary process $X$ becomes a RS diffusion $X^{(m)}$ with the following dynamics:
\begin{equation}
	\begin{array}{ll}
		\diff X_t^{(m)}  & = r-c_t^{(m)}+\frac{\rho\kappa}{\theta}+\frac{1}{V_t^{(m)}}\left(\frac{\rho\sigma}{2}-\frac{1}{2}-\frac{\rho\kappa\theta}{\sigma}\right) \diff t+\sigma_X(V_t^{(m)})\diff W^{*}_t,\quad t\geq 0,\label{eqEDS_X_Q32_V2}
	\end{array}
\end{equation} 
where  $c_t^{(m)}=c(X_t^{(m)},V_t^{(m)})$ is the CTMC approximation of the fee process.\\
\\
Recall that $\vix^{(m)}$ is the CTMC approximation of the volatility index, see Proposition \ref{propCTMCapproxVIX} and Algorithm \ref{algoCTMCapproxVIX}. The three fee structures exposed in Subsection \ref{subsectFeeStruct} can then be approximated using  $\vix^{(m)}$ as shown in Table \ref{tblFeeProcessCTMCapprox}.
\begin{table}[h!]
	\begin{tabular}{|c|c|}
		\hline
		\textbf{Fee Structure} & $c_t^{(m)}$, $0\leq t\leq T$\\
		\hline\hline
		\textbf{Uncapped $\vix^2$}  & $c_t^{(m)}=\tilde{c}+\tilde{m}(\vix_t^{(m)})^2$\\
		\textbf{Capped $\vix^2$}  & $c_t^{(m)}=\min\{K,\tilde{c}+\tilde{m}(\vix_t ^{(m)})^2$\}\\
		\textbf{Uncapped $\vix$}  & $c_t^{(m)}=\tilde{c}+\tilde{m}\vix_t ^{(m)}$\\
		\hline
	\end{tabular}
	\caption{CTMC approximation of the VIX-linked fee process}
	\label{tblFeeProcessCTMCapprox}
\end{table}  
\\
\\
The second layer CTMC approximation of Subsection \ref{Chap2sectionCTMC_RS} is then applied to the regime-switching diffusion \eqref{eqEDS_X_Q32_V2} with the approximated fee processes listed in Table \ref{tblFeeProcessCTMCapprox}.\\
\\
  Using the CTMC technique outlined in Section \ref{sectionCTMCapprox}; and the market, the variable annuity, and the CTMC parameters of Subsection \ref{subsectParam32}, we performed the valuation of a variable annuity with and without early surrenders. The results are similar to the ones obtained under the Heston model and are summarized below, confirming again that fees tied to the volatility index do not significantly affect the value of surrender incentives.\\
 \\
The value of early surrenders (``ES values'') under the 3/2 model are presented in Table \ref{tblESvalue32}. 
\begin{table}[h!]
	\begin{tabular}{c cccc}
		\hline
		& \multicolumn{4}{c}{$c_t=\tilde{c}^*+\tilde{m}^*\vix^2_t$}\\
		\hline
		$\tilde{m}^*$ &  $0.0000$ & $0.1500$ & $0.3000$ & $0.4235$\\
		
		ES Value & $2.91799$  & $2.91126$ & $2.90875$ & $2.91078$\\
		\hline
		\hline
		& \multicolumn{4}{c}{$c_t=\min\{\tilde{c}^*+\tilde{m}^*\vix^2_t,K\}$, $K=2\%$}\\
		\hline
		$\tilde{m}^*$ & $0.0000$ & $0.1500$ & $0.3000$ & $0.5791$\\
		
		ES Value  & $2.91799$  & $2.90594$ & $2.89435$ & $2.88076$\\
		\hline
		\hline
		& \multicolumn{4}{c}{$c_t=\tilde{c}^*+\tilde{m}^*\vix_t$}\\
		\hline
		$\tilde{m}^*$ & $0.0000$ & $0.0250$ & $0.0500$ & $0.0846$\\
		
		ES Value  & $2.9180$  & $2.9120$ & $2.9068$ & $2.9012$\\
		\hline
	\end{tabular}
	\caption{Approximated early surrender values (ES values) under the 3/2 model.}
	\label{tblESvalue32}
\end{table}  

The approximated optimal surrender surfaces for the three fee structures under the 3/2 model are displayed in Figures \ref{figBoundary3D_32}, \ref{figBoundary_32} and \ref{figBoundaryAll_32}. Note that in order for the Figures under the Heston and 3/2 models to be comparable, the $y$-axis under the 3/2 model represents the variance of the fund, that is $\frac{1}{V_t}$,  $0\leq t\leq T$ (recall that $V$ is the inverse variance in  \eqref{eqEDS_F32_Q}).
\begin{figure}[h!]
	\centering
	\begin{tabular}{cccc}
		\multicolumn{4}{c}{\small{$c_t=\tilde{c}^*+\tilde{m}^*\vix^2_t$}}\\
		\includegraphics[scale=0.2]{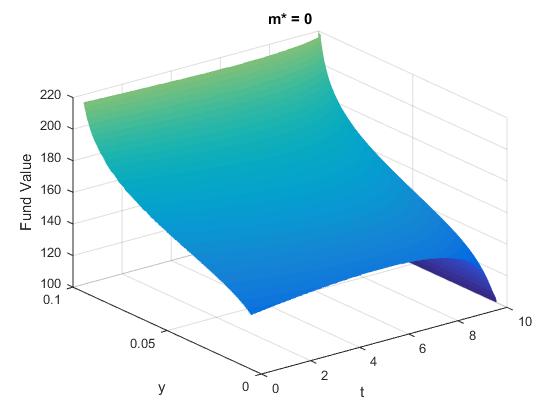}  & \includegraphics[scale=0.2]{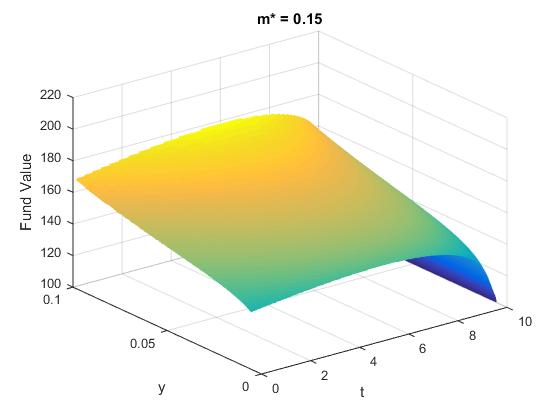}   & \includegraphics[scale=0.2]{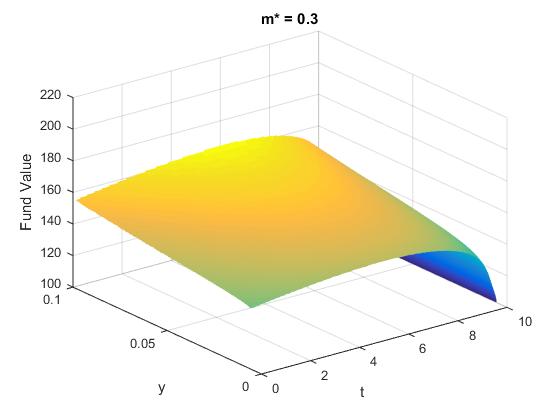}  & \includegraphics[scale=0.2]{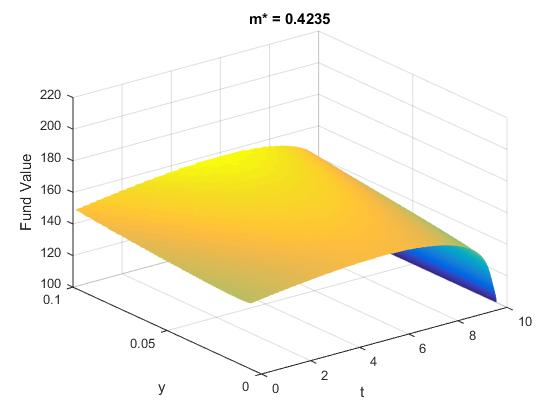} \\
		\multicolumn{4}{c}{\small{$c_t=\min\{\tilde{c}^*+\tilde{m}^*\vix^2_t,K\}$, $K=2\%$}}\\
		\includegraphics[scale=0.2]{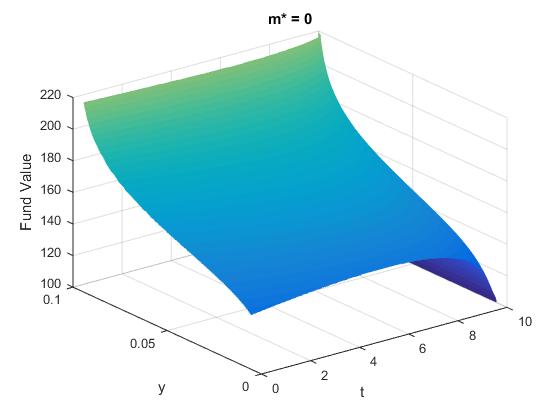} & \includegraphics[scale=0.2]{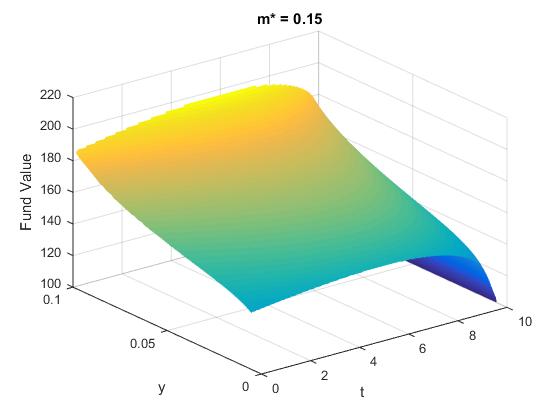}&	\includegraphics[scale=0.2]{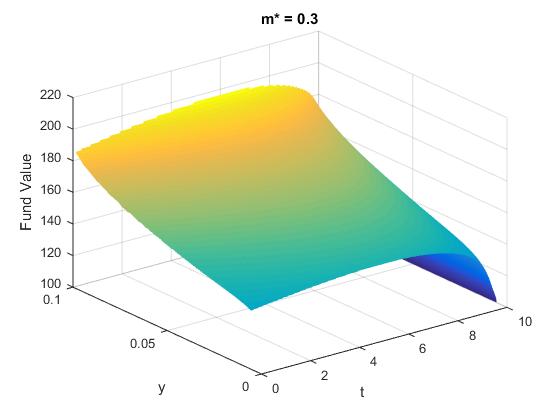} &	\includegraphics[scale=0.2]{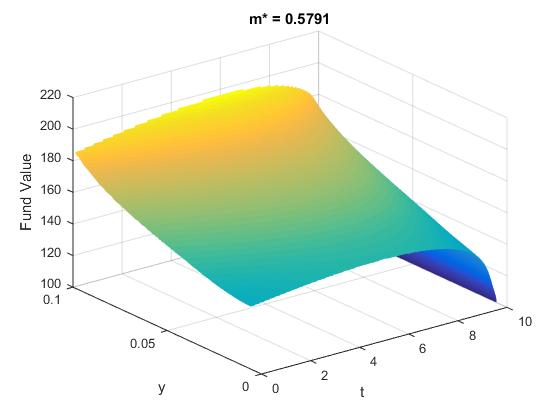} \\
		\multicolumn{4}{c}{\small{$c_t=\tilde{c}^*+\tilde{m}^*\vix_t$}}\\
		\includegraphics[scale=0.2]{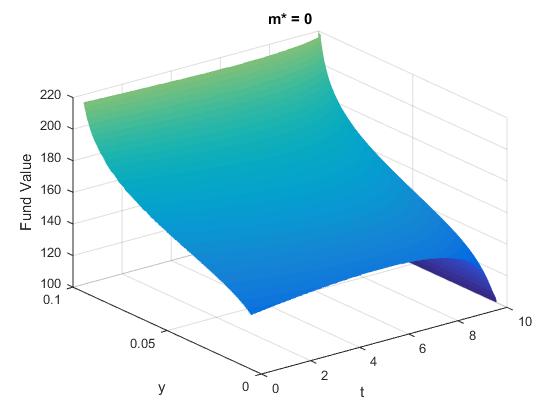} &\includegraphics[scale=0.2]{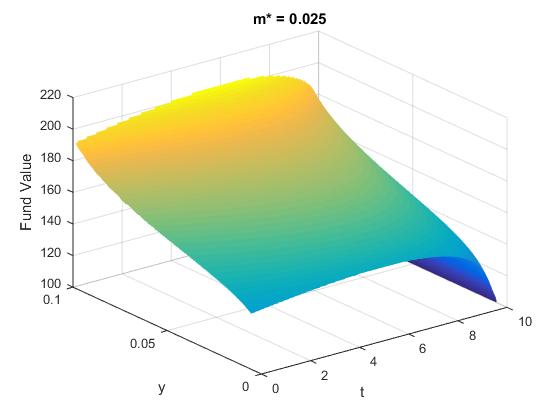}  &\includegraphics[scale=0.2]{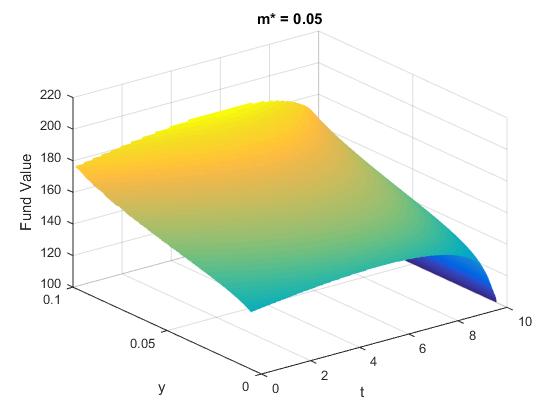}  &
		\includegraphics[scale=0.2]{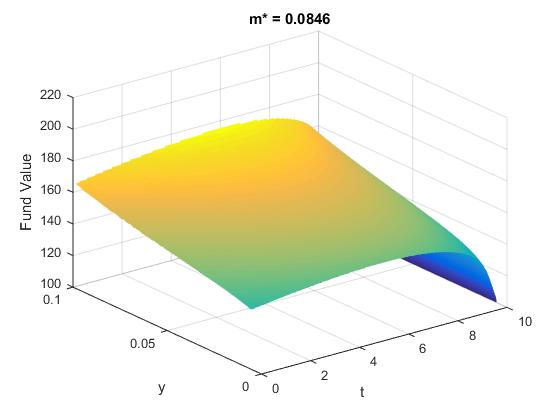}  
	\end{tabular}
	\caption{\small{Approximated optimal surrender surface for different values of fair multiplier $\tilde{m}^*$ under the 3/2 model where $x$-axis represents the time and the $y$-axis the variance.}}\label{figBoundary3D_32}
\end{figure}
As noted previously, we observe that the optimal surrender surface is increasing with the volatility. However, when $\tilde{m}^*=0$, the surrender surface increases much faster than under the Heston model. We also note that $\vix$-linked fees help to neutralize the effect of the volatility on the optimal surrender decision, confirming again the relation existing between the fees and the optimal surrender strategies; that is, VA contracts with high fees are surrendered at lower fund values than VA contracts with low fees. 
\begin{figure}[!t]
	\centering
	\begin{tabular}{cccc}
		\multicolumn{4}{c}{\small{$c_t=\tilde{c}^*+\tilde{m}^*\vix^2_t$}}\\
		\includegraphics[scale=0.2]{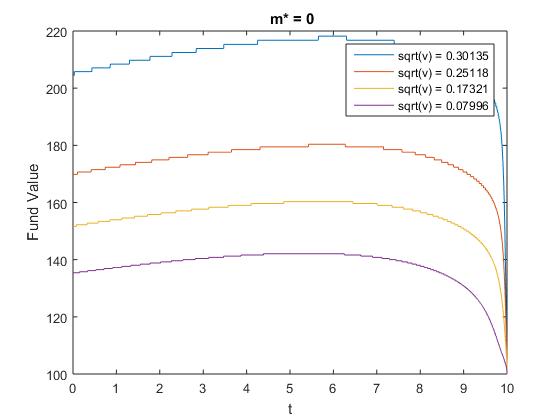}   &\includegraphics[scale=0.2]{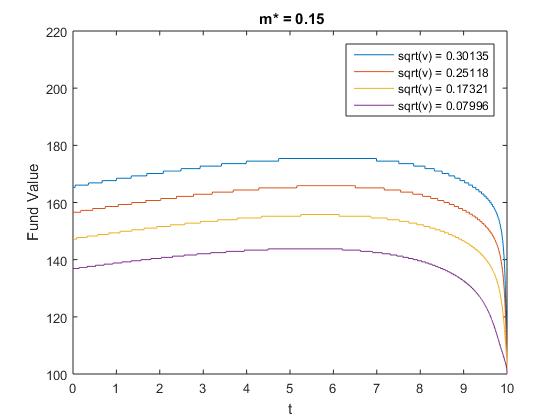}   & \includegraphics[scale=0.2]{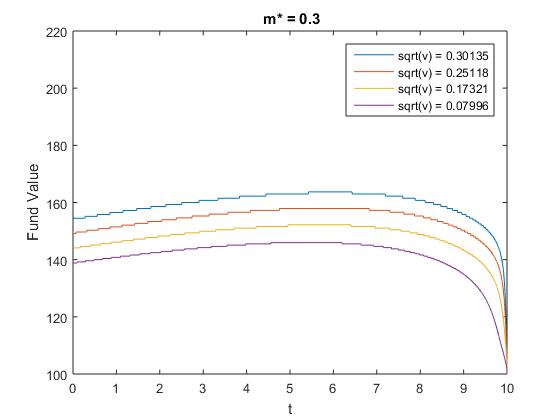}  & \includegraphics[scale=0.2]{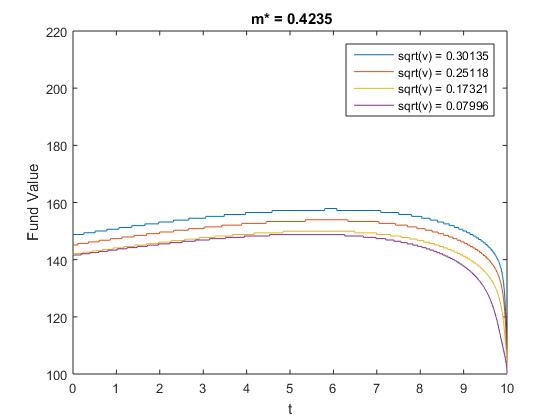} \\
		\multicolumn{4}{c}{\small{$c_t=\min\{\tilde{c}^*+\tilde{m}^*\vix^2_t,K\}$, $K=2\%$}}\\
		\includegraphics[scale=0.2]{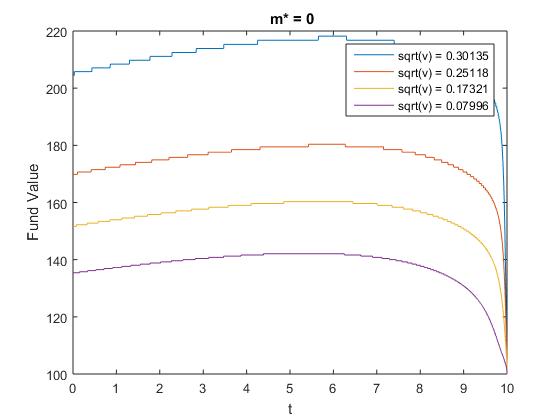}  & 	\includegraphics[scale=0.2]{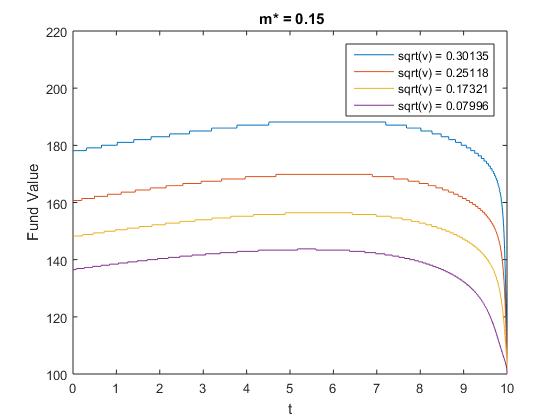} &		\includegraphics[scale=0.2]{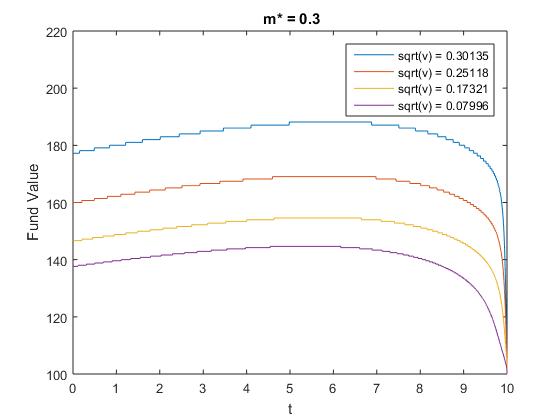}   &		\includegraphics[scale=0.2]{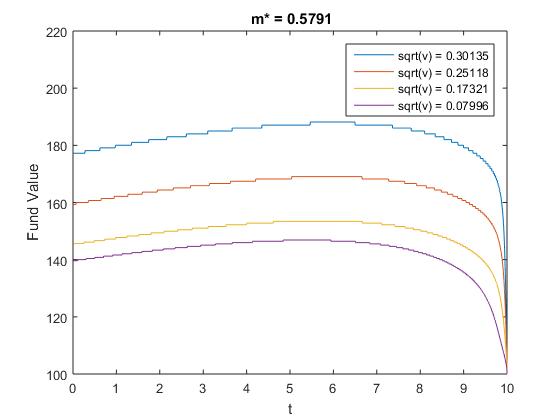}   \\
		\multicolumn{4}{c}{\small{$c_t=\tilde{c}^*+\tilde{m}^*\vix_t$}}\\
		\includegraphics[scale=0.2]{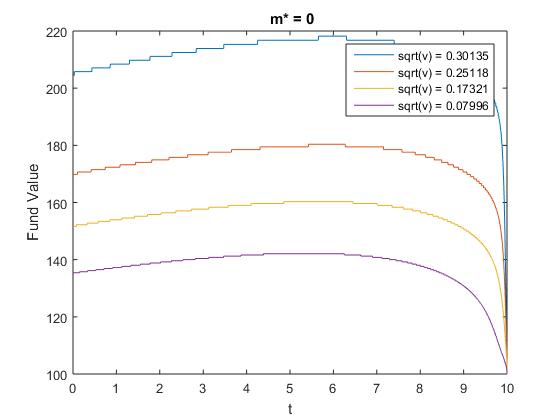}   &\includegraphics[scale=0.2]{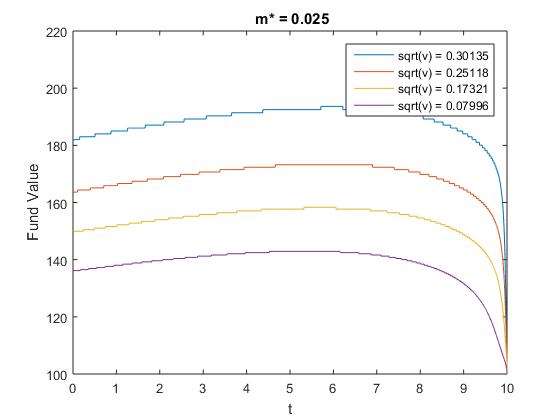}    &\includegraphics[scale=0.2]{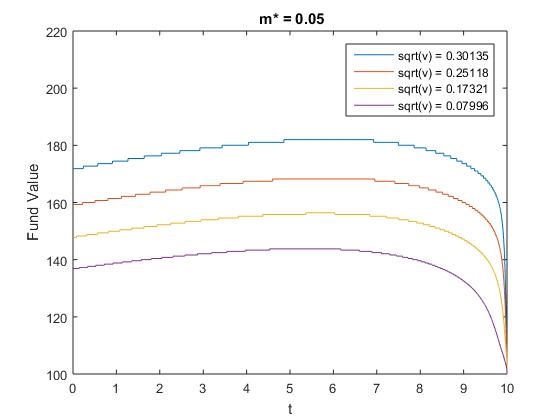}    &
		\includegraphics[scale=0.2]{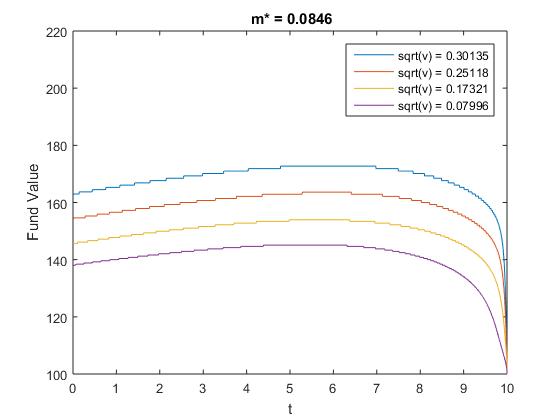}   
	\end{tabular}
	\caption{\small{ The $y$ section of the approximated optimal surrender surface, $f^{(m,N)}_y$, for different volatility levels $\sqrt{y}$ and fair multipliers $\tilde{m}^*$.}}\label{figBoundary_32}
\end{figure}
\newpage
\begin{figure}[!t]
	\centering
	\begin{tabular}{cc}
		\small{$c_t=\tilde{c}^*+\tilde{m}^*\vix^2_t$} & \small{$c_t=\min\{\tilde{c}^*+\tilde{m}^*\vix^2_t,K\}$, $K=2\%$} \\
		\includegraphics[scale=0.15]{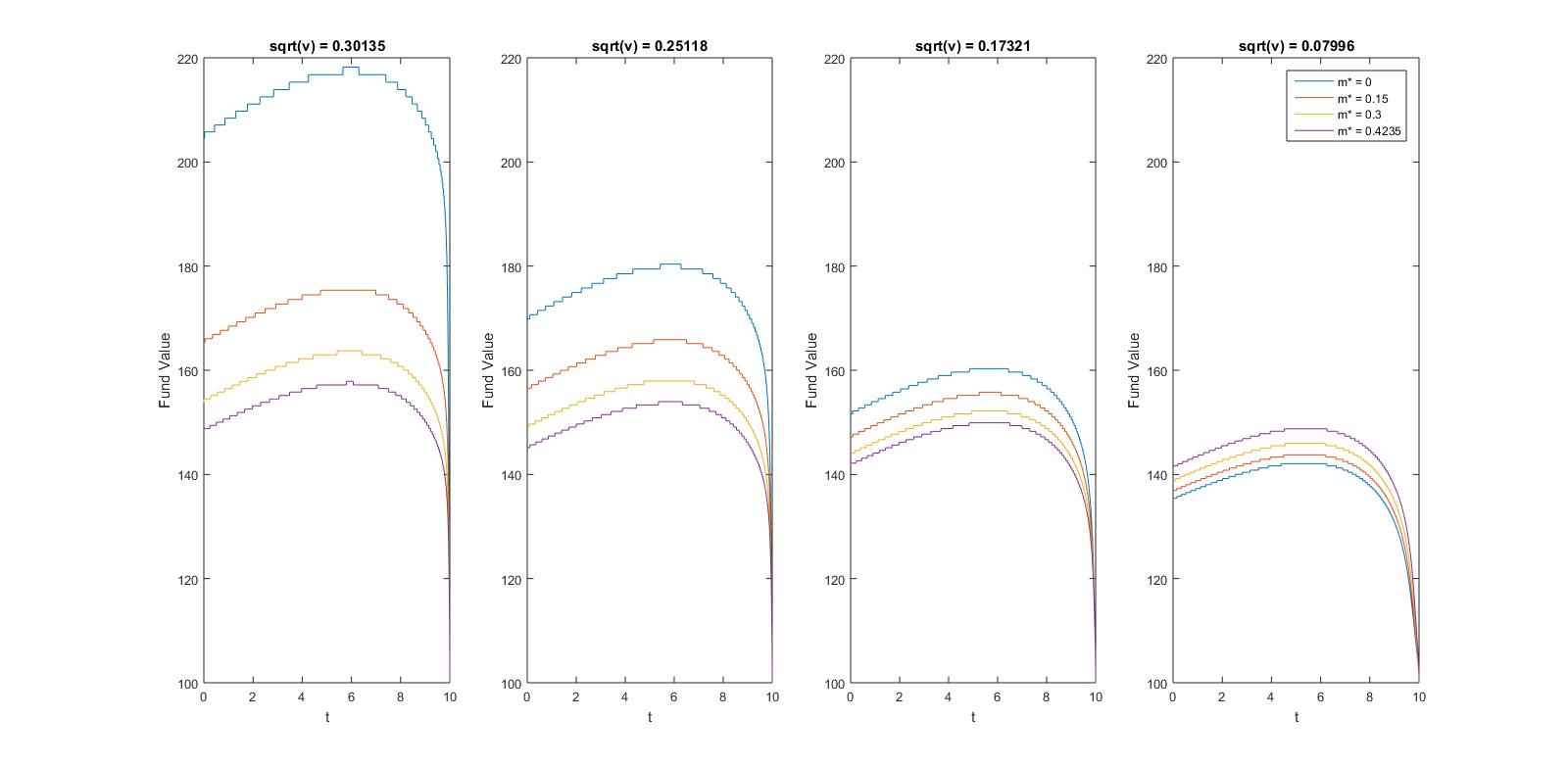}   &\includegraphics[scale=0.15]{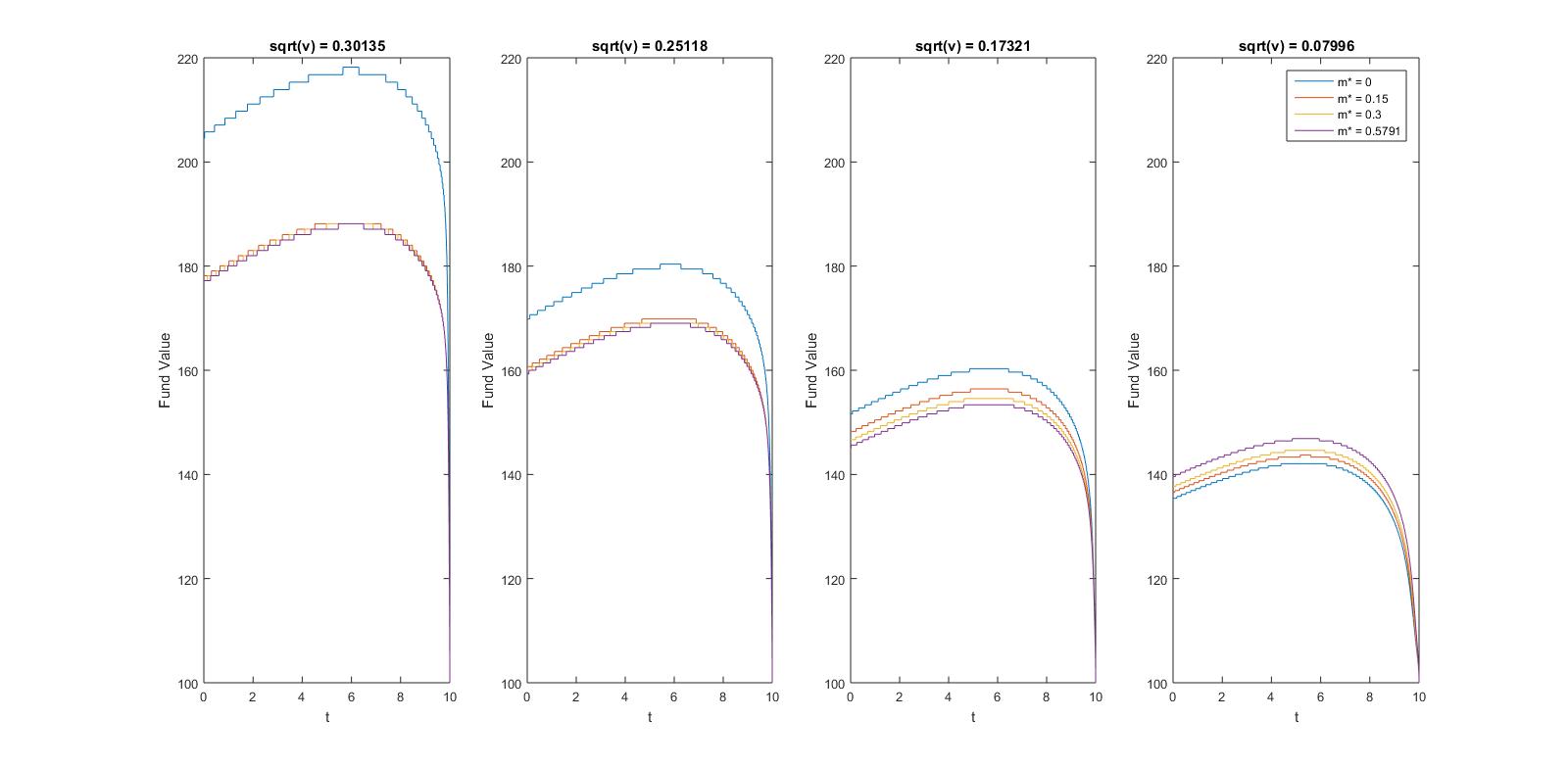}  \\
		\multicolumn{2}{c}{\small{$c_t=\tilde{c}^*+\tilde{m}^*\vix_t$}}\\
		\multicolumn{2}{c}{\includegraphics[scale=0.15]{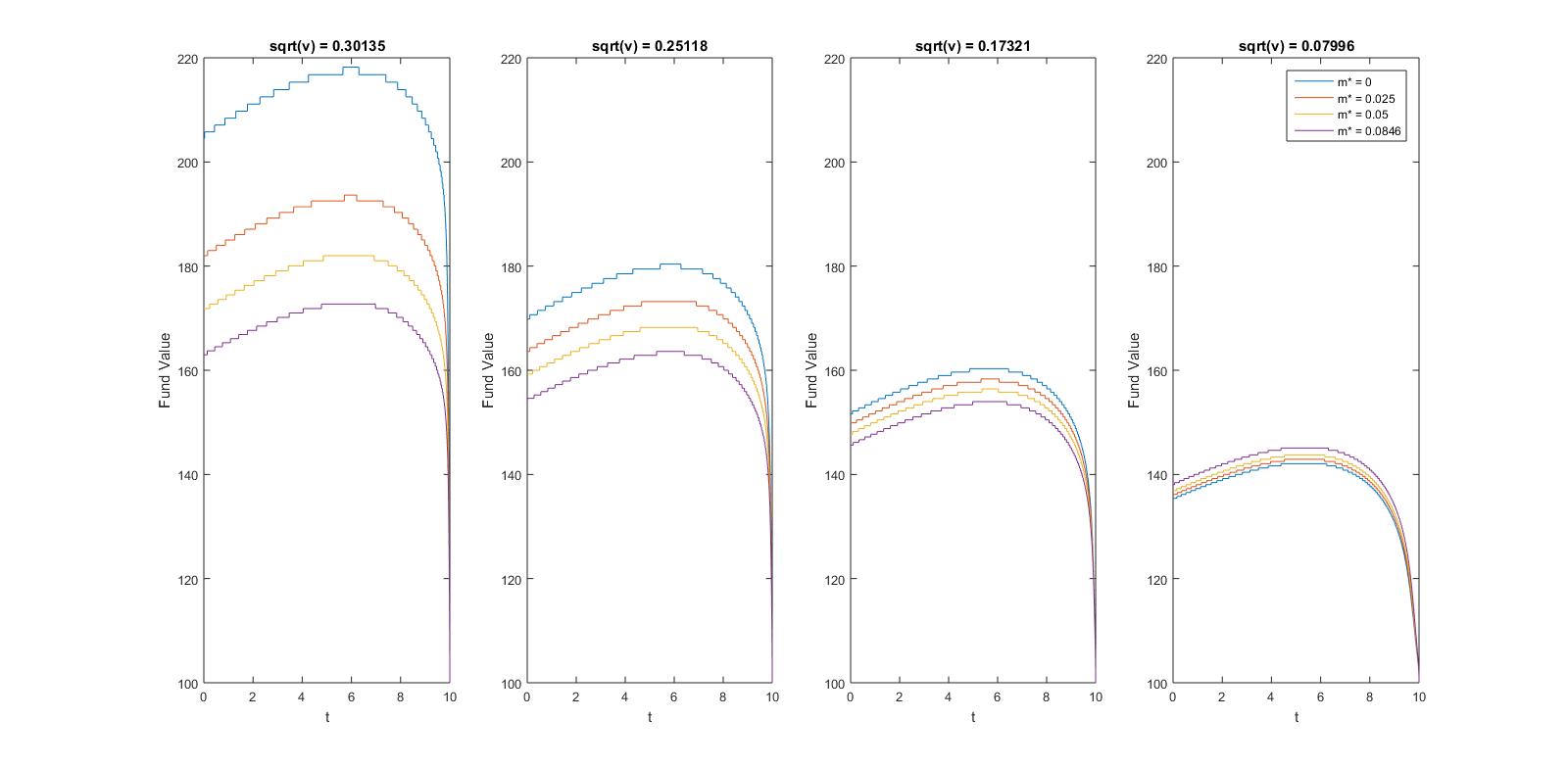}}
	\end{tabular}
	\caption{\small{ The $y$ section of the approximated optimal surrender surface, $f^{(m,N)}_y$, for different volatility levels $\sqrt{y}$ and fair multipliers $\tilde{m}^*$.}}\label{figBoundaryAll_32}
\end{figure}

\clearpage

	%
	%
\end{document}